\def\doi{8 (1:05) 2012}
\begin{document}

\newcommand{\gwidth}[1]{|\!| #1 |\!|}
\newcommand{\dwidth}[2]{|\!| #1 |\!|_{#2}}
\newcommand{\self}{\mathtt{self}}
\newcommand{\child}{\mathtt{child}}
\newcommand{\descendant}{\mathtt{descendant}}
\newcommand{\parent}{\mathtt{parent}}
\newcommand{\nextsibling}{\mathtt{next}{-}\mathtt{sibling}}
\newcommand{\prevsibling}{\mathtt{prev}{-}\mathtt{sibling}}
\newcommand{\xeq}{\sim}
\newcommand{\xneq}{\not \sim}
\newcommand{\marg}[1]{}   
\newcommand{\word}{\mathrm{word}}
\newcommand{\gca}{\mathrm{gca}}
\newcommand{\aut}{{\mathcal A}}
\newcommand{\monoid}{{S}}
\newcommand{\triples}{\varphi}
\newcommand{\query}{\phi}
\newcommand{\size}[1]{{\mid} {#1} {\mid}}
\newcommand{\removed}[1]{}
\newcommand{\malyparagraph}[1]{\paragraph{\bf #1}}
\newcommand{\sredniparagraph}[1]{\paragraph{\bf #1}}

\newcommand{\set}[1]{\{#1\}}
\newcommand{\Aa}{{\mathcal A}}
\newcommand{\Bb}{{\mathcal B}}
\newcommand{\Nat}{{\mathbb N}}
\newcommand{\Int}{{\mathbb Z}}
\newcommand{\Oo}{{\mathcal O}}
\newcommand{\Cc}{{\mathcal C}}
\newcommand{\Ee}{{\mathcal E}}
\newcommand{\Gg}{{\mathcal G}}
\newcommand{\Tt}{{\mathcal T}}

\newtheorem{theorem}{Theorem}
\newtheorem{lemma}{Lemma}
\newtheorem{proposition}{Proposition}
\newtheorem{example}{Example}
\newtheorem{claim}{Claim}

\title{An extension of data automata that captures XPath}
\author[M.~Boja\'nczyk]{Miko{\l}aj Boja\'nczyk}
\address{Warsaw University}
\email{\{bojan,sl\}@mimuw.edu.pl}
\author[S.~Lasota]{S{\l}awomir Lasota} 
\address{\vskip-6 pt}

\thanks{This work has been partially supported by the Polish government grants
no. N206 008 32/0810 and N206 567840; and by the FET-Open grant agreement FOX, number
FP7-ICT-233599.}

\keywords{satisfiability of (regular) XPath queries, automata on data words and
trees, data automata}
\subjclass{F.1.1, H.2.3}




\maketitle

\begin{abstract}
We define a new kind of automata recognizing properties of data words or data trees and  
prove that the automata capture all queries definable in Regular XPath. 
We show that the automata-theoretic approach may be applied to
answer decidability and expressibility questions for XPath.
\end{abstract}


\section{Introduction}

In this paper, we study data trees. In a data  tree, each node carries a label from a finite alphabet and a data value from an infinite domain. We study properties of  data trees, such as those defined in XPath, which refer to  data values only by testing if two nodes carry the same data value. Therefore we define a data tree as a pair $(t,\sim)$ where $t$ is a tree over a finite alphabet and $\sim$ is an equivalence relation on  nodes of~$t$. Data values are identified with equivalence classes of $\sim$.

Recent years have seen a lot of interest in automata for data trees and the special case of data words. The general theme is that it is difficult to design an automaton which  recognizes interesting properties and has decidable emptiness.

Decidable emptiness is important  in XML static analysis. A typical question of static analysis is the implication problem: given two  properties  $\varphi_1,\varphi_2$ of XML documents (modeled as data trees), decide if every document satisfying $\varphi_1$  must also satisfy $\varphi_2$. Solving the implication problem boils down to deciding emptiness of $\varphi_1 \land \neg \varphi_2$.

A common logic for expressing properties is XPath.
For XPath, satisfiability is undecidable in general, even for data words, see~\cite{DBLP:journals/jacm/BenediktFG08}. This means that most problems of static analysis are undecidable for XPath, e.g.~the implication problem.
Satisfiability is undecidable also for most other natural logics on data words or data trees, including first-order logic with predicates for order (or even just successor) and data equality.

The approach chosen in prior work was to find automata on data words or trees that would have decidable emptiness and recognize interesting, but necessarily weak, logics or fragments of XPath. 
These weak logics  include: fragments of XPath without recursion or negation~\cite{DBLP:journals/jacm/BenediktFG08,DBLP:conf/dbpl/GeertsF05}; first-order logic with two variables~\cite{BojanczykDMSS11, BojanczykJACM09}; forward-only fragments related to alternating automata~\cite{DBLP:journals/tocl/DemriL09,DBLP:conf/lics/JurdzinskiL07,DBLP:conf/pods/Figueira09,figueira-icdt2010}. The original automaton model for data words was~\cite{KaminskiF94}. See~\cite{DBLP:conf/csl/Segoufin06} for a survey.

In this paper, we take a different approach. Any model that  captures XPath will have undecidable emptiness. We are not discouraged by this, and try to capture XPath by something that feels like an ``automaton''. Three tangible goals are:  1. use the automaton to decide emptiness for interesting restrictions of data trees; 2.  use the automaton to prove easily that the automaton (and consequently XPath) \emph{cannot} express a property; 3.  unify other automata models that have been suggested for data trees and words.
 
What is our new model? To explain it, we use logic.
From a logical point of view, a nondeterministic automaton is a formula of the form $\exists X_1 \ldots \exists X_n\ \varphi(X_1,\ldots,X_n)$, where the kernel $\varphi$ is relatively simple, e.g.~it only talks about the relationship of labels in successive positions. As often in automata theory, when designing the automaton model,  we  try to use the prefix of existential set quantifiers as much as possible, in the interest of simplifying  the  kernel $\varphi$. For  satisfiability, this is like a free lunch, since deciding  satisfiability with or without the prefix are the same problem.

In the automaton model that we propose in this paper, the kernel $\varphi$ is of the form ``for every class $X$ of $\sim$, property $\psi(X,X_1,\ldots,X_n)$ holds'', where  $\psi$ is an MSO formula that can use  predicates for navigation (sibling order, descendant), predicates for testing labels from the finite alphabet, but not  the predicate $\sim$ for data equality. The data $\sim$ is only used in saying that $X$ is a class. In the case of data words, this model is an extension of the \emph{data automata} introduced in~\cite{BojanczykDMSS11}, which correspond to the special case when first-order quantifiers in  $\psi$ range  only over positions from $X$. For instance, our new model, but not data automata, can express the property ``between every two different positions in the same class there is at most one position outside the class with label $a$''.

The principal technical contribution of this paper is that the  model above can recognize all unary queries of XPath. This proof is difficult, and takes over ten pages. We believe the real value of this paper lies  in this proof, which demonstrates some powerful normalization techniques for formulas describing properties of data  trees. Since the scope of applicability for   these techniques will be clear only in the future; and since the appreciation of an ``automaton model'' may ultimately be a question of taste,  we  describe in more details the three tangible goals mentioned above.

1. The ultimate  goal of this research is to find  interesting classes of data trees which yield decidable emptiness for XPath. 
As a proof of concept, we define a simple subclass of data trees, called bipartite data trees,  and prove that emptiness of our automata (and consequently of XPath)  is decidable for bipartite data trees. This is only a preliminary result, we intend to find new subclasses in the future.

2.  We use the automaton  to prove that XPath cannot define certain properties. Proving inexpressibility results for XPath is difficult, because the truth value of an XPath query in a position $x$ might depend on the truth value of a subquery in a position $y < x$, which in turn might depend on the truth value of a subquery in a position $z > y$, and so on. On the other hand, our  automaton  works in one direction, so it is easier to understand its limitations. We use (an extension of) our automata to prove that for documents with two equivalence relations $\sim_1$ and $\sim_2$, some properties of two-variable first-order logic cannot be captured by XPath, which was an open question.

3. We use the automaton to classify existing models for data words in a single framework. A problem with the research on data words and data trees is that  the models are often incomparable in expressive power. In an upcoming paper\footnote{This paper is a journal version of a LICS 2010 paper~\cite{BL10}, which included a rough description of the classification mentioned in item 3. However, a thorough explanation of the classification requires much space, and uses different techniques than the results  in this paper. Therefore, we plan to present the classification in a separate paper.}, we will show that many existing models can be seen as syntactic fragments of our automaton. We hope that this classification will underline more clearly what the differences are between the models.


\section{Preliminaries}

\malyparagraph{Trees}
Trees are unranked, finite, and labeled by a finite alphabet $\Sigma$. We use the terms child, parent, sibling, descendant, ancestor, node in the usual way. The siblings are ordered. We write $x \leq y$ when  $x$ is an ancestor of $y$.
\begin{full}
Every nonempty set of nodes $x_1,\ldots,x_n$ in a tree has a greatest common ancestor (the
greatest lower bound wrt.~$\leq$), which is denoted $\gca(x_1,\ldots,x_n)$.
\end{full}

Let $t$ and $s$ be two trees, over alphabets $\Sigma$ and $\Gamma$,
respectively, that have the same sets of nodes. We write $t \otimes s$
for the tree over the product alphabet $\Sigma \times \Gamma$ that has the same
nodes as $s$ and $t$, and where every node has the label from $t$ on
the first coordinate, and the label from $s$ on the second coordinate.
If $X$ is a set of nodes in a tree $t$, we write $t \otimes X$ for the tree
$t \otimes s$, where $s$ is the tree over alphabet $\set{0,1}$, whose
nodes are the nodes of $t$ and whose labeling is the characteristic
function of $X$.

\malyparagraph{Regular tree languages and transducers} 
We use the standard notion of regular tree languages for unranked trees~\cite{CLT05}. 
We also use transductions, which map trees to trees.
Let $\Sigma$ be an input alphabet and $\Gamma$ an output alphabet.
A regular tree language $f$ over the product alphabet $\Sigma \times \Gamma$  can be interpreted as 
a binary relation, which contains pairs $
(s,t)$  such that $ s \otimes t \in f$.
We use the name \emph{letter-to-letter transducer} for such a relation, 
underlining  that  the trees in a pair $(s,t) \in f$ must have the same nodes. 
In short, we simply say transducer. Observe that the transducer is nondeterministic.
We often treat a transducer as a function that maps an input tree to a set of output trees, 
writing $t \in f(s)$ instead of $(s,t) \in f$.


\malyparagraph{Data trees}  A \emph{data tree} is a tree $t$ equipped with 
an equivalence relation $\sim$ on its nodes that represents data equality. 
We use the name \emph{class} for equivalence classes of $\sim$. 

\malyparagraph{Queries} Fix an input alphabet. We use the name  $n$-ary query for a function  $\query$ that   maps a
tree $t$ over the input alphabet to  a set $\query(t)$ of $n$-tuples of its nodes.  
In this paper, we  deal with queries of arities 0,1,2 and 3, which are called boolean, unary, binary and ternary. 
We also study queries that input a data tree $(t, \sim)$; they output a set of node tuples $\query(t, \sim)$ as well.

\malyparagraph{MSO} Logic is a convenient way of specifying queries, both for  trees and data trees.  We
use monadic second-order logic (MSO). In a given tree, or a data
tree, a formula of MSO is allowed to quantify over
nodes of the tree using individual variables $x,y,z$, and also over
sets of nodes using set variables $X,Y,Z$. A formula $\query$ with
free individual variables $x_1,\ldots,x_n$ defines an $n$-ary query,
which selects in a tree $t$ the set $\query(t)$ of tuples
$(x_1,\ldots,x_n)$ that make the formula true. To avoid confusion, we
use round parentheses for the tree input of a query, $\query(t)$, and
square parentheses for indicating the free variables of a query.  
The two parenthesis can appear
together, e.g.~$\query[x_1,\ldots,x_n](t)$ will be the set of
$n$-tuples selected in a tree $t$ by a query with free variables
$x_1,\ldots,x_n$.

When working over trees without data,  MSO formulas use binary predicates for the child and
next-sibling relations (that allow to define descendant and following-sibling relations), as well as a unary predicate for each label. Queries defined by MSO with these predicates
are called \emph{regular queries} (of course, regular queries can also be characterized in terms of automata). When working over data
trees, we  additionally allow a binary predicate $\sim$ to test
data equality. A query using $\sim$ is no longer called regular. For instance, the following unary query selects positions that have classes of size at least two:
\begin{align*}
\varphi(x) \qquad = \qquad \exists y \quad x \neq y \land x \sim y.
\end{align*}

\malyparagraph{Extended Regular XPath}
We define a variant of XPath that works over data trees.
For unary queries, the variant is an extension of XPath,  thanks to including MSO  as part of its syntax. 
We call the variant Extended Regular XPath. Unlike XPath, Extended Regular XPath allows for queries of
arbitrary arity.
Expressions of   Extended Regular XPath  are defined below. 
\begin{iteMize}{$\bullet$}
\item Let $\Gamma=\set{\query_1,\ldots,\query_n}$ be a set of already defined unary queries of Extended Regular XPath, which will be treated as unary predicates. In the induction base, the set $\Gamma$ is necessarily empty.
Suppose that $\varphi[x_1,\ldots,x_m]$ is  an MSO query that uses unary predicates for queries from $\Gamma$, unary predicates for letters of the input alphabet, and the binary child and next-sibling predicates.  Then $\varphi$  is an $m$-ary query of Extended Regular XPath. It is important that $\varphi$ does not introduce any new use of  the data equality predicate $\sim$, all appearances of $\sim$ are reserved to the queries from $\Gamma$.  

\item Suppose that $\varphi[x,y_1,y_2]$ is a ternary query of Extended Regular XPath.  Then the following property of $x$ is a unary query of Extended Regular XPath
\begin{eqnarray}
\label{e:unaryquery}
\exists y_1 \exists y_2 \quad y_1 \sim y_2 \land \varphi[x,y_1,y_2].
\end{eqnarray}
Likewise for $y_1 \not \sim y_2$ instead of $y_1 \sim y_2$.
\end{iteMize} 
The definition above allows for queries of any arity. In the paper, we will be principally interested in queries of arity one, and the queries of arity at most three  used to  build them. By abuse of nomenclature, we will write XPath instead of Extended Regular XPath.

\malyparagraph{Binary trees}
A binary tree is a tree where each node has 
at most two children.  Although the interest of XPath is mainly for unranked trees, we assume in the proofs that  trees are binary. This assumption can be made because XPath, as well as the models of automata introduced later on, are  stable under the usual first-child /
next-sibling  encoding in the following sense.
A language $L$ of unranked data trees can be expressed by a boolean XPath query if and only if the set of  binary encodings of trees from $L$  can be expressed by a boolean XPath query. A similar, though more technical, statement holds for unary queries.

Words will be considered as a special case of binary trees where each node has at most one child.


\section{Class automata}
\label{sec:class-automata}
In this section we define a new type of  automaton for data trees, called a {class
automaton},   and state the main result: class automata
capture all queries definable in XPath.

A \emph{class automaton} is a type of automaton that recognizes
properties of data trees.  A class automaton is given by: an
{input alphabet} $\Sigma$, a {work alphabet} $\Gamma$, a
nondeterministic letter-to-letter tree transducer $f$ from the input
alphabet $\Sigma$ to the work alphabet $\Gamma$, and a regular tree language on alphabet $\Gamma \times \set{0,1}$, called the {class condition}. The class automaton accepts a data tree
$(t,\sim)$ over input alphabet $\Sigma$ if there is some output $s \in
f(t)$ such that for every class $X$, the class condition contains the tree $s \otimes X$.

\smallskip
\begin{example}\label{example:1}
Consider an input alphabet $\Sigma= \set{a,b}$. Let $L$ be the  data trees 
 where
some class contains at least three nodes with label $a$. This language is recognized by a class automaton. The work alphabet is $\Gamma=\set{a,c}$. The transducer 
guesses three nodes with label $a$, and outputs $a$ on them, other nodes get $c$. The class condition consists of trees $s \otimes X$ over alphabet $\Gamma \times \set{0,1}$ where $X$ contains all or none of the nodes with label $a$. Note that the class condition does not  inspect positions outside $X$. 
\end{example}

\smallskip
\begin{example}\label{example:2}
Let $K$ be the set of data words over $\Sigma = \set{a,b}$
where each class has exactly two positions $x< y$, and there is at most one $a$ in the positions $\set{x+1,\ldots,y-1}$.
In the class automaton recognizing $K$, the transducer is the identity function, and the class  condition is
\begin{eqnarray*}
	\Sigma_0^* \cdot \Sigma_1 \cdot  b_0^*  \cdot (a_0+\epsilon)  \cdot b_0^* \cdot \Sigma_1 \cdot \Sigma_0^*
\end{eqnarray*}
where $\Sigma_i$ is a shortcut for $\Sigma \times \set{i}$, likewise for $a_i$ and $b_i$.
\end{example}

\sredniparagraph{Comparison to data automata}
Class automata are closely related to \emph{data automata} introduced in~\cite{BojanczykDMSS11}. Data automata were defined for data words. Since  it is not clear what the correct tree version thereof is, we just present the version for data words. Like a class automaton,  a data automaton has an input alphabet $\Sigma$, a work alphabet $\Gamma$, and a nondeterministic letter-to-letter transducer $f$ (this time only for words). The difference is in the class condition, which is less powerful in a data automaton. In a data automaton, the class condition is a word language over $\Gamma$, and not $\Gamma \times \set{0,1}$. The data automaton accepts a data word $(w,\sim)$ if there is some output $v \in f(w)$ such that for every class $X$, the class condition contains the subsequence of $v$ obtained by only keeping positions from $X$.  In the realm of data words, data automata can be seen as a special case of class automata, where the class condition is only allowed to look at positions from the current class. The language $L$ in Example~\ref{example:1} can be recognized by a data automaton (in the case of words), while the language $K$ in Example~\ref{example:2} is a language that can be recognized by class automata, but not data automata. 

The difference between data automata and class automata is crucial for decidability of emptiness.
Data automata have decidable emptiness~\cite{BojanczykDMSS11}, the proof being a reduction to reachability in Vector Addition Systems with States.  Class automata have undecidable emptiness, because they capture the logic XPath, which has undecidable satisfiability. Also, a direct and simple proof of undecidable emptiness for class automata can be given, by encoding runs of two-counter machines, without going through the difficult reduction from XPath.

\removed{
It is not difficult to show that languages recognized by class automata are closed under projections, union and intersection. (The projection of a language of data words over an alphabet $\Gamma$ is obtained by applying a function $\pi : \Gamma \to \Sigma$ to each label in each data tree.)}

\sredniparagraph{Closure properties} 
Suppose that $f : \Sigma_1 \to \Sigma_2$ is any function. We extend $f$ to a function $\hat f$ from data trees over alphabet $\Sigma_1$ to data trees over alphabet $\Sigma_2$, by just changing the labels of nodes, and not the tree structure or data values. We use the name relabeling for any such function $\hat f$.
\begin{lemma}\label{lem:closure}
	Languages of data trees recognized by class automata are closed under union, intersection,  images under relabelings, and inverse images under relabelings.
\end{lemma}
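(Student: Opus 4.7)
\smallskip
\noindent\textbf{Proof proposal.} All four closures are obtained by massaging the transducer $f$ and the class condition of the given class automata, exploiting the closure of regular tree languages under union, intersection, product, projection and inverse homomorphism. The work alphabets are modified to carry whatever extra information the new transducer needs to guess, while the class conditions are lifted along the induced alphabet changes.

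For \emph{union} of languages $L_1, L_2$ recognized by class automata $A_i = (\Sigma,\Gamma_i,f_i,C_i)$, take the disjoint union $\Gamma = \Gamma_1 \sqcup \Gamma_2$ and let the new transducer nondeterministically pick $i \in \{1,2\}$ and then behave like $f_i$, producing a tree whose labels all lie in $\Gamma_i$. The new class condition is the union of the two regular languages obtained from $C_1$ and $C_2$ after padding their alphabets to $\Gamma \times \{0,1\}$; regular tree languages are closed under union, so this is regular. For \emph{intersection}, take the product work alphabet $\Gamma_1 \times \Gamma_2$, let the transducer run $f_1$ and $f_2$ in parallel on the same input tree (this is a regular letter-to-letter transducer because regular tree languages are closed under intersection after suitable alphabet extensions), and let the class condition be the intersection of the pullbacks of $C_1$ and $C_2$ along the two projections $(\Gamma_1 \times \Gamma_2) \times \{0,1\} \to \Gamma_i \times \{0,1\}$.

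For the \emph{image} under a relabeling induced by $f \colon \Sigma_1 \to \Sigma_2$, given a class automaton $A = (\Sigma_1,\Gamma,g,C)$ recognizing $L$, keep the work alphabet $\Gamma$ and the class condition $C$, and take as new transducer $g'$ the set of pairs $(t',v)$ over $\Sigma_2 \times \Gamma$ such that some $t$ with $\hat f(t) = t'$ satisfies $(t,v) \in g$. Viewed as a tree language over $\Sigma_2 \times \Gamma$, this is the image of the regular language $g \subseteq \Sigma_1 \times \Gamma$ under the letter-to-letter projection $(a,c) \mapsto (f(a),c)$, hence regular. Correctness is immediate: $A'$ accepts $(t',\sim)$ iff some $t$ with $\hat f(t) = t'$ admits a $g$-output witnessing $(t,\sim) \in L$. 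For the \emph{inverse image}, given $A = (\Sigma_2,\Gamma,g,C)$, define the new transducer as the set of pairs $(t,v)$ over $\Sigma_1 \times \Gamma$ with $(\hat f(t),v) \in g$; this is the preimage of $g$ under the same projection and is regular. The class condition stays as $C$. The automaton accepts $(t,\sim)$ over $\Sigma_1$ iff $\hat f(t,\sim)$ is accepted by $A$.

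The only non-routine point is verifying that each modified transducer is again letter-to-letter; but in every case the construction is a product, projection, or pullback along a letter-to-letter map, all of which preserve the class of regular tree languages with the letter-to-letter interpretation. There is no genuine obstacle here, so I would expect the proof to be essentially these four short constructions together with one-line correctness arguments.
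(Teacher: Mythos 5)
Your proposal is correct and follows essentially the same route as the paper, which gives only a three-line sketch: composing the transducer with the relabeling for inverse images, a Cartesian product construction for intersection, and nondeterminism (choice of branch, resp.\ guessing a preimage) for union and images. Your write-up simply makes these constructions explicit, including the harmless alphabet-padding details.
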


\begin{proof}
The inverse images are the simplest: the letter-to-letter tree transducer in the class automaton is composed with the relabeling.
For intersection, one uses Cartesian product.
For union and images under relabelings, one uses nondeterminism. 
\end{proof}

\sredniparagraph{Evaluation}
The evaluation  problem (given an automaton and a data word/tree,
check if the latter is accepted by the former) is NP-complete, even for a fixed data automaton
(cf.~\cite{BjorklundS07}). Hence it is also NP-complete for class automata, which extend data automata.
\removed{
Hardness is easily shown by reduction from the 3-colorability problem:
given a graph $G$, encode its edges by separate data values, 
and vertices as blocks of consecutive positions in a data word.
The transducer of the automaton $\Aa_G$ guesses a coloring and the class condition checks
color inequality along every edge.
$G$ is 3-colorable if and and only if $\Aa_G$ accepts the encoding of $G$.
}

\sredniparagraph{Class automata as a fragment of MSO} As mentioned in the introduction, one can see a
class automaton as a restricted type of formula of monadic
second-order  logic. This is a  formula of the form:
\begin{align}
  \label{eq:mso-form}
  \exists X_1 \cdots \exists X_n \  \forall X  \  \mathrm{class}(X) \Rightarrow
 \varphi(X_1,\ldots,X_n,X)
\end{align}
where $X_1,\ldots,X_n,X$ are variables for sets of nodes, the $\mathrm{class}$ formula is 
\begin{equation*}
	\mathrm{class}(X) = \qquad \exists y \forall x  \quad x \in X \iff y \sim x
\end{equation*}
and $\varphi$ is a formula of MSO that does not use $\sim$. 
Formulas of the above form recognize
exactly the same languages of data trees as class automata. For translating a class automaton to a formula, one uses the variables $X_1,\ldots,X_n$ to encode the output of the transducer, and the formula $\varphi$ to test two things: a) the variables $X_1,\ldots,X_n$ encode a legitimate output of the transducer; and b) the class condition holds for $X$.

\sredniparagraph{Main result}
The main result of this paper is Theorem~\ref{thm:main} below, which says that unary XPath queries over data trees can be recognized by class automata. To state the theorem, we need to say how a class automaton  recognizes a unary query. We do this by encoding a unary query $\query$ over data trees as a language of data trees:
\begin{align*}
L_\query =  \set{ (t \otimes X, \sim): \mbox{$(t,\sim)$ is a data tree, $X = \query(t,\sim)$}} .
\end{align*}
In other words, the language consists of data trees decorated with the set of nodes selected by the query. This encoding does not generalize to binary queries.

\begin{theorem}\label{thm:main}
  Every unary XPath query over data trees can be recognized by a class automaton.
\end{theorem}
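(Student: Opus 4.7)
The plan is to proceed by induction on the construction of the unary XPath query $\phi$, with the induction hypothesis that for every already-constructed unary query $\gamma$, the language $L_\gamma$ is recognized by some class automaton~$\Aa_\gamma$.

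For the MSO base case, $\phi$ is defined by a unary MSO formula $\varphi[x]$ over labels, child, next-sibling, and unary predicates for a finite set $\Gamma$ of previously defined unary XPath queries. By the induction hypothesis, each $\gamma\in\Gamma$ has a class automaton~$\Aa_\gamma$. Using Lemma~\ref{lem:closure} and the MSO presentation~(\ref{eq:mso-form}), I would run all the $\Aa_\gamma$ in parallel: the transducer nondeterministically produces, on top of the input, the work-alphabet labelings of each $\Aa_\gamma$ together with a candidate marking $X_{query}$, and verifies as a regular (MSO-definable) tree condition that $X_{query}$ equals the set of nodes satisfying $\varphi[x]$ when the predicate for each $\gamma$ is interpreted by the corresponding marking. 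The class condition is the conjunction, after appropriate projection, of the class conditions of the~$\Aa_\gamma$'s, which forces those guessed markings to really describe the queries $\gamma$.

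For the inductive step, $\phi(x)$ has the form $\exists y_1 \exists y_2.\ y_1 \sim y_2 \land \varphi[x,y_1,y_2]$ (or its $\not\sim$ variant), and by the base case we may take $\varphi$ to be an MSO query over labels, navigation and $\Gamma$-predicates. The transducer would mark $X_{query}$ together with the $\Gamma$-markings and, for each $x \in X_{query}$, commit to a witness pair $(y_1(x),y_2(x))$ lying in a common class; the class condition would then verify, for each class $X$, that the $\Gamma$-markings are consistent on $X$ and that for every $x$ whose designated witnesses sit in $X$ the MSO property $\varphi[x,y_1,y_2]$ holds.

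The hardest step is this inductive one. The class condition inspects one class at a time through the labels of the whole tree together with the characteristic function of that single class, yet it must associate each $x \in X_{query}$---which may itself lie in a completely different class---with its chosen witness pair; since arbitrarily many distinct $x$'s can share the same witness class, no finite palette of colors can match them up directly. I expect this step to rely on a normalization lemma that puts $\varphi[x,y_1,y_2]$ into a shape where the combinatorial type of the triple inside the tree (the relative positions under $\gca(x,y_1,y_2)$ together with finite MSO-types of the separating subtrees) is essentially finite, so that each $x$ can be annotated with a bounded amount of information specifying where its witnesses live, reducing the per-class test to a regular property of $s \otimes X$. The $\not\sim$ variant is more painful because $y_1$ and $y_2$ then live in different classes, so the check on $\varphi$ must be split across two separate class-condition runs; I expect this to be handled by guessing enough finite information at $x$ and at each of $y_1,y_2$ to let each of the two classes certify its half of the property independently.
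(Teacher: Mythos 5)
Your outer induction and the use of Lemma~\ref{lem:closure} to strip off the MSO layer and the $\Gamma$-predicates matches the paper's reduction: after that step everything boils down to recognizing a query $\exists y_1 \exists y_2\ y_1 \sim y_2 \land \varphi[x,y_1,y_2]$ with $\varphi$ regular (Proposition~\ref{prop:goal}), and you correctly isolate this as the hard case. The gap is in how you propose to handle it. You say the transducer should ``commit to a witness pair $(y_1(x),y_2(x))$'' for each $x$ and that a normalization of $\varphi$ into finitely many combinatorial types would let ``each $x$ be annotated with a bounded amount of information specifying where its witnesses live.'' This does not work as stated: a letter-to-letter transducer outputs only a bounded label per node, and no bounded annotation at $x$ can identify which concrete nodes of the tree are its witnesses; knowing the finite type of the triple is not enough, because the class condition, looking at $s \otimes X$ for a single class $X$, must still match each $x$ to specific witness positions inside $X$, and unboundedly many sources can share one witness class. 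The paper shows this obstacle is real: if you always route $x$ to its first witness (or always to its second), the number of ``routes'' crossing a single node can be unbounded, so no finite guessed decoration suffices for either fixed choice.

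What actually closes the gap in the paper is a different idea, absent from your proposal: Theorem~\ref{thm:bounded-width}, which says that one may choose, \emph{per source} $x$, either the first or the second witness, and connect each $x$ to that chosen witness by a guidance system (a family of source-to-target paths in the tree) colorable with a number of colors depending only on $\varphi$. The transducer guesses this bounded-color guidance system as part of its output; the induced relation $R$ is then regular in the guessed labeling, and the class condition only has to check, for each class $Y$, that whenever $x R y$ with $y \in Y$, some $y' \in Y$ completes $(x,y,y')$ or $(x,y',y)$ to a selected triple of $\varphi$ --- the completeness direction being the easy per-class regular check. Proving the bounded-width theorem is the bulk of the paper (arrangements, path-based normalization via a monoid morphism, composition of guidance systems, and forward Ramseyan splits), so your proposal is missing its central technical content. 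Also, your expectation that the $\not\sim$ case needs a qualitatively different ``two separate class-condition runs'' is off: the paper states the width theorem for graph trees with an arbitrary symmetric relation $E$, applies it with $E$ the complement of $\sim$, and only changes ``$y' \in Y$'' to ``$y' \notin Y$'' in the per-class check.
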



%





We begin the proof of Theorem~\ref{thm:main}, mainly to show where the difficulties appear. Then, we lay out the proof strategy in more detail.
When referring to the language of a unary query, we mean the  encoding above.

We do an induction on the size of the unary query.
The base case, when the query is a label $a$, is straightforward. 
Consider now the induction step, with a unary query 
\begin{eqnarray*}
\query[x] = \quad
\exists y_1 \exists y_2 \quad y_1 \sim y_2 \land \varphi[x,y_1,y_2]
\end{eqnarray*}
as in~\eqref{e:unaryquery}. (The same argument works for the case where $y_1 \not \sim y_2$.)
 Let  $\query_1,\ldots,\query_n$ be all the unary XPath subqueries that appear in $\varphi$. By the induction assumption, the languages of the subqueries are recognized by class automata $\Aa_1,\ldots,\Aa_n$. Let the variables $X,X_1,\ldots,X_n$ denote sets of nodes. Consider the set $L$ of data trees
\begin{align*}
	(t \otimes  X  \otimes X_1 \otimes \cdots \otimes X_n, \sim) 
\end{align*}
such that a) for each $i \in \set{1,\ldots,n}$, the data tree $(t \otimes X_i, \sim)$ is accepted by the automaton $\Aa_i$; and b) $X$ is the set of nodes selected by the query $\query'$ obtained from $\query$ by replacing each subquery $\query_i$ with ``has $1$ on coordinate corresponding to $X_i$''. Suppose that the language of $\query'$ is recognized by a class automaton. Then so is $L$, by closure of class automata under intersection and  inverse images of projections, see Lemma~\ref{lem:closure}. Finally, the language of $\query$ is the image of $L$ under the projection which removes the labels describing the sets $X_1, \ldots, X_n$. 

It remains to show that $\query'$ is recognized by a class automaton (the advantage of $\query'$ over $\query$ is that it uses data equality $\sim$ only once, to say that $y_1\sim y_2$).   A major part of this paper is devoted to this case, which is stated in the following proposition.

\begin{proposition}\label{prop:goal}
	Class automata can recognize queries 
	\[
		\query[x] = \quad
	\exists y_1 \exists y_2 \quad y_1 \sim y_2 \land \varphi[x,y_1,y_2],
	\]
	 where $\varphi$ is a regular ternary query (i.e.~$\varphi$ does not use $\sim$). Likewise for $y_1 \not \sim y_2$.
\end{proposition}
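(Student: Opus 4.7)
The natural approach is to reduce the problem to a per\text{-}class verification. For any set $C$ of nodes, define the regular unary query
\[
  \Phi_C[x] \;=\; \exists y_1 \exists y_2 \bigl( y_1 \in C \,\land\, y_2 \in C \,\land\, \varphi(x,y_1,y_2) \bigr),
\]
which uses $C$ only as a parameter and so does not involve $\sim$. Then $\query(t,\sim) = \bigcup_C \Phi_C(t)$, the union ranging over classes of $\sim$. So we must build a class automaton whose language is the set of $(t \otimes X_{\mathrm{sel}},\sim)$ with $X_{\mathrm{sel}} = \bigcup_C \Phi_C(t)$. This splits into the \emph{upward} inclusion $\bigcup_C \Phi_C(t) \subseteq X_{\mathrm{sel}}$ and the \emph{downward} inclusion $X_{\mathrm{sel}} \subseteq \bigcup_C \Phi_C(t)$.

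The upward direction is easy: it is equivalent to $\Phi_X(t) \subseteq X_{\mathrm{sel}}$ holding for \emph{every} class $X$, a regular property of $(t, X_{\mathrm{sel}}, X)$ that can be placed directly in the class condition, since that condition already reads the transducer output together with a set variable ranging over classes.

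The downward direction is the crux, and I expect it to be the main obstacle. It demands, for each selected $x$, an \emph{existential} choice of a class $C(x)$ witnessing $x$, while the class condition offers only a \emph{universal} quantifier over classes and the transducer's finite work alphabet cannot directly tag $x$ with the identity of an arbitrary class. My plan is to have the transducer nondeterministically mark, alongside $X_{\mathrm{sel}}$, a single ``pointer node'' $p(x)$ lying in the intended witness class for each $x \in X_{\mathrm{sel}}$, and then have the class condition verify, for every class $X$, that the set $\{\,x \in X_{\mathrm{sel}} : p(x) \in X\,\}$ is contained in $\Phi_X(t)$. The technical difficulty is ensuring that the relation ``$p(x)\in X$'' is MSO\text{-}definable from purely set\text{-}valued markings produced by the transducer.

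To achieve this, I would first normalize $\varphi(x,y_1,y_2)$ by a Feferman\text{-}Vaught\text{-}style decomposition along $z = \gca(x,y_1,y_2)$, case\text{-}splitting on the distribution of $x, y_1, y_2$ among the subtrees at $z$. There are only finitely many such shapes, so by closure of class automata under union, intersection, and (inverse) images of relabelings (Lemma~\ref{lem:closure}), it suffices to treat each shape separately. Within a fixed shape the witness pair $(y_1,y_2)$ can be selected in a tree\text{-}local, canonical way (e.g.\ lexicographically leftmost in a designated subtree), so that the pointer $p(x)$ becomes a regularly definable function of the transducer's set markings; the required implication ``$p(x) \in X \Rightarrow x \in \Phi_X(t)$'' is then an MSO formula in $(t, X_{\mathrm{sel}}, X, \text{markings})$ and can be added to the class condition. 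I expect the hardest sub\text{-}case to be the one where $x$ is unrelated to $y_1, y_2$ in the ancestor ordering, since there the witnesses may range freely and the canonical\text{-}choice argument has to be the most delicate.
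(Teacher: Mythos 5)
Your reduction to the two inclusions matches the paper's split into ``completeness'' (easy: a per-class regular check, exactly as you say) and ``correctness'' (the existential choice of a witnessing class for each selected $x$), and you correctly locate the crux: the function $x \mapsto p(x)$ must be encoded by nondeterministically guessed \emph{set-valued} markings in such a way that the class condition, which sees only $(t,\text{markings},X)$ without $\sim$, can verify ``$p(x) \in X$'' and the existence of a witness pair inside $X$. The gap is in how you propose to discharge this: you assert that after a Feferman--Vaught case split on the shape of $\gca(x,y_1,y_2)$, a \emph{canonical} (e.g.\ lexicographically leftmost) choice of witness makes the pointer ``a regularly definable function of the markings.'' This is precisely the hard part of the paper, and a canonical choice does not suffice. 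The paper's own introductory example shows why: for the query where $y_1 < x \le y_2$ with one $a$ in $[y_1,x)$ and one $b$ in $[x,y_2]$, the witness pair is \emph{unique} (so every canonical rule selects it), yet encoding the map $x \mapsto y_1$ by colored paths/markings requires unboundedly many colors as the input grows; symmetric examples make $x \mapsto y_2$ unboundedly expensive. No per-node marking scheme of fixed size can represent an arbitrary node-to-node function, so boundedness of the encoding is not automatic and cannot be obtained by fixing which witness to point to in advance.

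What actually closes the gap in the paper is Theorem~\ref{thm:bounded-width}: for every regular ternary query there is a constant $m$ such that in every graph tree and for every set of sources one can choose, \emph{adaptively per source}, either the first or the second witness so that the resulting witness function has guidance width at most $m$; the $m$-color guidance system is then what the transducer guesses, and your conditions A/B become MSO checks in the class condition. Proving this bounded-width statement is the bulk of the paper (reduction to path-based queries and three arrangements, a composition lemma for guidance systems, and an induction over factors of a forward Ramseyan split \`a la Colcombet). Your sketch, as written, assumes this theorem implicitly in the phrase ``regularly definable function of the markings,'' so the proposal is not a proof: the decomposition by arrangements is indeed part of the paper's route, but the lexicographic-choice step would have to be replaced by the whole bounded-guidance-width machinery (where, incidentally, a lexicographic choice does appear, but only inside one sub-case of the Main Lemma, after Ramseyan factorization has made it sound).
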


\sredniparagraph{Proof strategy}
The construction of the automaton for $\query[x]$ is spread across several sections. In
Section~\ref{subsec:witness-functions-bounded-width}, we introduce the main 
concepts underlying the proof.
In particular, we define a new complexity measure for binary relations on tree domains, 
called guidance width,   
that seems to be of independent interest.
In Section~\ref{subsec:startproof}
we start the proof itself, formulate
an induction, and reduce Proposition~\ref{prop:goal} to 
a more technical Theorem~\ref{thm:bounded-width}.
Then in Section~\ref{sec:simplifying-delta} we
identify a simplified form of queries appearing in Theorem~\ref{thm:bounded-width}
and show how arbitrary queries can be transformed to the simplified form.
Finally, Section~\ref{sec:diff-arrang-trees} 
contains the proof of Theorem~\ref{thm:bounded-width} for these simplified queries,
the heart of the whole proof. 


\subsection{Discussion of the proof}

In this section, we discuss informally the concepts that appear in the proof of Theorem~\ref{thm:main}.
For the purpose of illustration, we use words.

We begin our discussion with words without data. For  a regular binary query $\varphi[x,y]$, consider  the unary query 
\begin{align*}
\psi[x] =	 \exists y \; \varphi[x,y].
\end{align*}
We use the name witness function in a word $w$  for a function which maps every position $x$ satisfying $\psi$ to some $y$ such that $\varphi[x,y]$ holds. Consider, as an example, the case where $\varphi[x,y]$ says that there exists exactly one $z$ that has label $a$ and satisfies $x < z < y$. The following picture shows a witness function.
\begin{center}
\includegraphics[scale=0.7]{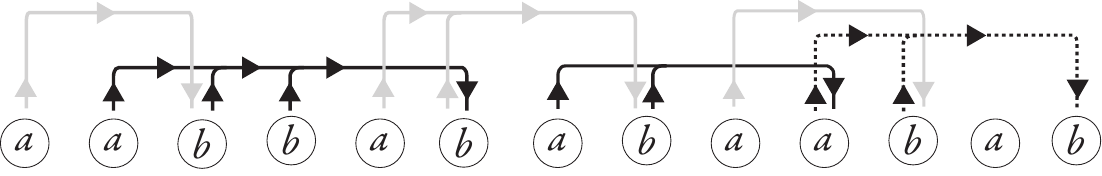}
\end{center}

The way the picture is drawn is important. The witness function is recovered by following arrowed lines.
The arrowed lines are colored black, dashed black, or gray, in such a way that no position is traversed by two  arrowed lines of the same color. With the formula $\psi$ in the example, any input word has a witness function that can be drawn with three colors of arrowed lines. This can be generalized to arbitrary MSO binary queries; the number of  colors depends only on the query, and not the input word.

The above observation may be used to design a nondeterministic automaton recognizing a property like $\forall x \; \psi[x]$. The automaton would guess the labeling by arrows and then verify its correctness. The number of states in the automaton would grow with the number of colors; hence  the need for a bound on the number of colors.  Of course, there are other ways of recognizing $\forall x\ \psi[x]$, but we talk about the coloring since this is the technique that will work with data.

We now move to data words. Consider a unary query
\begin{align*}
	\begin{array}{ll}
		\psi[x] = &\exists y_1 \exists y_2\  \quad y_1 \sim y_2 \ \land \ \varphi[x,y_1,y_2] ,
	\end{array}
\end{align*} where  $\varphi[x,y_1,y_2]$  says that $y_1 < x \le y_2$, there is exactly one $a$ label in the positions $\set{y_1,\ldots,x-1}$ and there is exactly one $b$ label in the positions $\set{x,\ldots,y_2}$. The query $\psi[x]$ is an example of a query as in Proposition~\ref{prop:goal}. Consider  the following data word (the labels are blank, $a$ and $b$, the data values are 1 \ldots 6).\\
\begin{center}
	\includegraphics[scale=0.7]{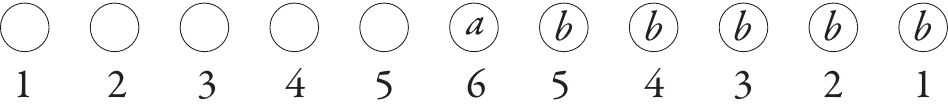}
\end{center}
Say $x$ is  the first node with label $b$. This node is selected by~$\psi$. Consider the pairs $(y_1, y_2)$
required by $\psi[x]$, which we call \emph{witnesses}. 
The only possibility for $y_2$  is $x$ itself;
thus $y_1$ is also determined, as the only other position with the same data value.  So there is only one witness pair.
The same situation holds for all other positions with label $b$, which  are the only positions selected by $\psi$.  The drawing below shows how witness pairs are assigned to positions.
\begin{center}
	\includegraphics[scale=0.7]{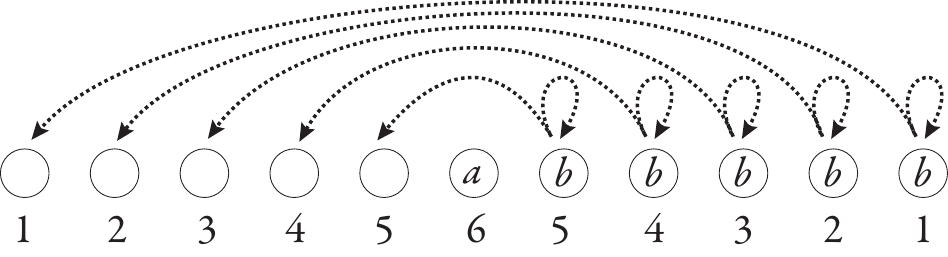}
\end{center}

We would like to draw this picture with  colored arrows, as we did for the first example of witness functions. If we insist on drawing arrows that connect each position $x$ with its corresponding witness $y_1$, then we will need 5 colors
as the middle position (labeled by a) is traversed by 5 arrows; 
the picture also generalizes to any number of colors. On the other hand, connecting each position $x$ with its corresponding $y_2$ (a self-loop) requires only one color. We  can symmetrically come up with instances of data words where connecting each node $x$ to $y_2$ requires 
an unbounded number of colors.

A consequence of our main technical result, Theorem~\ref{thm:bounded-width}, is that  a bounded number of colors  is sufficient if we want to perform the following task: for each position $x$ selected by $\psi$, choose some witness pair $y_1 \sim y_2$, and connect $x$ to  \emph{either} $y_1$ \emph{or} $y_2$.
The bound depends only on $\psi$; in particular, the bound does not depend on input data tree.

The concepts of witness functions and coloring are defined more precisely below.

\subsection{The core result}
\label{subsec:witness-functions-bounded-width}

We will state some technical results for a  structure more general than a data tree, namely a graph tree. A \emph{graph tree} is a tree $t$ endowed with an arbitrary symmetric binary relation $E$ over its nodes.
A data tree is the special case of a graph tree where $E$ is an equivalence relation.

\malyparagraph{Witness functions}
Let $\triples[x, y_1, y_2]$ be a regular  query (think of Proposition~\ref{prop:goal}), and consider a graph tree $(t,E)$.
We are interested in triples $(x,y_1,y_2)$ selected by $\triples$ 
in $t$ such that $(y_1, y_2) \in E$.  (Think of $E$ being either the data equivalence relation $\sim$, or its complement.)
Consider any such triple. The node $x$
is called the \emph{source node};
the notion of source node is  relative to the query $\triples$
and relation $E$,
which will usually be clear from the context and  not mentioned
explicitly.
The pair $(y_1,y_2)$ is called the \emph{witness pair},
$y_1$ is called the \emph{first witness}, and $y_2$ is called the
\emph{second witness}. These notions 
are all relative to a given $x$, but if we do not
mention the $x$, then $x$ is quantified
existentially. 
Let $X$ be a set of nodes in a graph tree (not necessarily containing all
nodes). A \emph{witness function} for $\triples$ and $X$ in a graph tree is
a  function which maps every node $x \in X$, treated as a source node, to some (first
or second) witness. There may be many witness functions, since for
each node we can choose to use either a first witness or a second
witness, and there may be multiple witness pairs.

The key technical result of this paper is that one can always find a   witness function of low complexity. 
The notion of complexity is introduced below.

\sredniparagraph{Guidance width}
A \emph{guide} in a tree $t$ is given by two nonempty sets of \emph{source nodes} and 
\emph{target nodes}. The \emph{support} of the guide is the set of all
nodes and edges on (the shortest) paths that connect some source node with a target node,
including all the source and target nodes. 
A guide \emph{conflicts} with another guide if their supports
intersect. We  write $\pi$ for guides.


A guidance system is a set of guides $\Pi$. 
It induces a relation containing all pairs $(x, y)$ of tree nodes such that
$x$ is a source and $y$ a target in some guide in $\Pi$.
An $n$-color guidance
system is a guidance system whose guides can be colored by $n$ colors
so that conflicting guides have different colors.
The \emph{guidance width} of a binary relation $R$ on tree nodes
is the smallest $n$ such that some $n$-color guidance system induces $R$.

In the proof we will only consider guidance systems for
relations $R$ that are partial functions from tree nodes to tree nodes.
In such cases, it is sufficient to restrict to \emph{deterministic} guides,
i.e., those with precisely one target node.
From now on, if not stated otherwise, a guidance system will be implicitly
assumed to contain only deterministic guides.

\sredniparagraph{Witness functions of bounded width}
\label{subsec:wf-of-bw}
We are now ready to state the main technical result, which forms the core of the proof of Theorem~\ref{thm:main}.
\begin{theorem}\label{thm:bounded-width}
  Let $\triples$ be a regular ternary query.
  There exists a constant $m$, depending only on $\triples$, such that
  in every graph tree, every set of source nodes has some witness
  function of guidance width at most $m$.
\end{theorem}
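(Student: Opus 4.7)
The plan is to combine MSO composition methods on trees with a case analysis on the geometric arrangement of the triple $(x, y_1, y_2)$ inside $t$. First I would argue that $\triples$ may be assumed, without loss of generality, to fix a single \emph{arrangement type} of $(x, y_1, y_2)$: the mutual ancestor/descendant relations among the three nodes, together with the location of $\gca(x, y_1, y_2)$ and, on each of the finitely many subtrees hanging off the path skeleton spanned by these three points, the MSO-type (of bounded quantifier rank) of that subtree. By standard Feferman--Vaught / composition arguments for MSO on trees, any regular ternary $\triples$ is a finite boolean combination of such ``pure'' queries, so it suffices to prove the width bound separately for each of them; since there are only finitely many arrangement types and residual type combinations, the resulting constants aggregate to a single $m$ depending only on $\triples$.

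Within a single arrangement type, the feature to exploit is that the two candidate witnesses $y_1$ and $y_2$ for a source $x$ sit on opposite branches below a common hub (essentially $\gca(y_1,y_2)$ or $\gca(x,y_1,y_2)$): the guide from $x$ to $y_1$ and the guide from $x$ to $y_2$ share a prefix and then diverge. This gives genuine freedom, because for each source $x$ one can choose the guide whose support avoids any given ``crowded'' region on one side of the hub. The informal discussion preceding the theorem illustrates exactly this mechanism, and the proof should make it quantitative.

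I would then prove the width bound by induction on $t$, processed bottom-up. For each subtree I would maintain a summary of bounded size, recording, modulo the relevant MSO equivalence, the source nodes that still await a witness outside the subtree, the prospective witnesses still waiting to be paired with a source, and the multiset of color classes currently in use, together with a local count of how many guides of each color have entered or left. The merging step at a parent is a pairing problem: guides from the two children may have to be combined with witness assignments discovered at the parent, and colors may have to be recycled. The main obstacle, and where I expect most of the real work to lie, is ensuring that recycling does not create a conflict across the merge boundary; this is where the choice between $y_1$ and $y_2$ is used decisively, via a pigeonhole-style argument showing that whenever too many guides of a given color would have to pass through the same edge, a bounded subset of the offending sources can be reassigned to their \emph{other} witness, whose branch, by the arrangement type, provably avoids that edge.
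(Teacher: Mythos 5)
Your first reduction (fixing the arrangement of $x,y_1,y_2$ and making the query depend only on path data via composition/transducer arguments) matches the paper's Sections on arrangements and path-based queries, and the ``opposite branches below a hub'' mechanism is indeed how the paper handles the easy case $x=\gca(y_1,y_2)$ (a greedy root-to-leaf argument with 2 colors). But the core of your plan --- a bottom-up induction on the tree with bounded summaries and a pigeonhole step that reassigns overcrowding sources ``to their other witness, whose branch provably avoids that edge'' --- has a genuine gap, in two places. First, the avoidance claim is false outside the trivial arrangement: in arrangement (A2)/(A3), after simplification $y_1$ is an ancestor of $x$ and $y_2$ a descendant, and in nested (``onion''-like) configurations every choice, up or down, crowds some edge; moreover each source may have a \emph{unique} witness pair, so per-source switching between $y_1$ and $y_2$ cannot reduce the number of distinct guides through an edge. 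What actually keeps the width bounded in the paper is guide \emph{sharing and merging}: many sources are routed to one common target, and whole source sets of one guide are moved onto a \emph{different} witness pair belonging to another guide. The soundness of such moves is not a consequence of bounded-rank MSO types of subtrees hanging off the skeleton; it needs an interchangeability statement for path values across unboundedly many crossings, which the paper obtains from Colcombet's forward Ramseyan split (the value $\alpha(\word_t(y_1,x))$, resp.\ $\alpha(\word_t(x,y_2))$, is determined by the first two and the last pieces once at least two border edges are crossed), together with a \emph{consistency} condition on guides so that a guide determines its witness pair.

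Second, your induction is on the tree itself, so it has unbounded depth, and nothing in the proposal explains why color recycling across arbitrarily many merge steps does not blow up or create long-range conflicts; ``a multiset of color classes currently in use'' plus per-edge counts is not enough, because the pending sources whose guides cross a subtree root are unboundedly many and cannot be summarized by MSO equivalence without already knowing they may share guides. The paper avoids this by inducting not on the tree but on the \emph{factor height} of the Ramseyan split, which is $\Oo(|S|)$, so the (quadratic-per-level) growth in colors stops after boundedly many levels; within a level it maintains explicit quantitative invariants (at most $2n$ guides through a border edge, at most one live transit guide, etc.), and restoring these invariants after adding a subfactor is exactly where the hard work lies (ancestor guides with parity-bit colors in the case $m_1\le 1$, $m_2\ge 2$; merging of live guides in the case $m_1,m_2\ge 2$). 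To repair your proof you would need both ingredients: a factorization of Simon/Colcombet type to make witness pairs interchangeable along long paths, and an induction parameter bounded in terms of the query (not the input tree) on which the color budget can depend.
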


In other words,  regular ternary queries have
witness functions of \emph{bounded guidance width}.
 Before
proving the theorem,
we show  how it implies Theorem~\ref{thm:main}.

\subsection{From Theorem~\ref{thm:bounded-width} to Proposition~\ref{prop:goal}}  
\label{subsec:startproof}
We  show how Theorem~\ref{thm:bounded-width} implies the last remaining piece of Theorem~\ref{thm:main}, 
namely Proposition~\ref{prop:goal}.  Consider a unary query $\query[x]$ as in the statement of Proposition~\ref{prop:goal}. 
We begin with the case when $\query[x]$  is of the form
\begin{align*}
	\begin{array}{ll}
		\query[x] = &\exists y_1 \exists y_2\  \quad y_1 \sim y_2 \ \land \ \varphi[x,y_1,y_2].
	\end{array}
\end{align*}
We need to find a class automaton that accepts the
data trees $(t \otimes X, \sim)$ where $X$ is the set of all nodes 
selected by $\query$ in the data tree $(t, \sim)$.
The class automaton will test the conjunction of two properties:

\medskip

\noindent{\it Completeness.} Each node selected by $\query$  in $(t,\sim)$ is in $X$. \\
\noindent {\it Correctness.} Each node in $X$ is selected by $\query$ in $(t,\sim)$.  \medskip

Recall that $\varphi$ is a regular query.
We give separate class automata for the two
properties. Completeness is  simple. It can be rephrased as
\begin{quote}
  for every class $Y$ and  triple $(x, y_1, y_2)$ selected by $\varphi$,  if $y_1, y_2 \in Y$ then $x \in X$.
\end{quote}
This is the type of property class automata are designed for: for every class, test a regular property. (Recall the discussion on class automata as a fragment of MSO.) Correctness is the difficult property, since the order of quantifiers is not the same as in a class automaton:
\begin{quote}
  for every  $x \in X$ there is a class $Y$ and   $y_1,y_2 \in Y$  such that  $(x,y_1,y_2)$ is selected by $\varphi$.
\end{quote}
Our solution is to use, as a part of the class automaton to be designed,
a guidance system
given by Theorem~\ref{thm:bounded-width}. 
%
%



Apply Theorem~~\ref{thm:bounded-width} to  $\triples$, yielding a constant $m$.
The class automaton for the correctness property works as follows. 
Given an input data tree $(t \otimes X,\sim)$, it guesses an $m$-color guidance system;
let $R$ stand for the induced relation.
The automaton then checks the two conditions below.
\begin{iteMize}{A.}
\item For every $x \in X$ there is some $y$ with $x R y$.    
\item For every class $Y$,  if $x R y$,
  $x \in X$, $y \in Y$, then 
  either $(x,y,y')$ or $(x, y', y)$ is in $\triples(t)$, 
  for some $y' \in Y$.
\end{iteMize}
If the class automaton accepts,
then clearly every position in $X$ is a source node. Conversely, if all nodes
in $X$ are source nodes, then there is an accepting run of the above
class automaton. This accepting run uses the guidance system
for the witness function from Theorem~\ref{thm:bounded-width}.

This completes the proof for the case when $\query[x]$ requires $y_1 \sim y_2$. For the case $y_1 \not \sim y_2$, the proof is almost the same, 
except for two changes. 
The first change is that we apply Theorem~\ref{thm:bounded-width} to 
the graph trees $(t,E)$, obtained from data trees $(t, \sim)$ by taking as $E$ the complement of $\sim$. 
This explains why  Theorem~\ref{thm:bounded-width} is formulated for 
graph trees and not just data trees. 
The  second change is that we write $y' \not \in Y$ instead of $y' \in Y$ at the end of  condition B.

\section{Simplifying the query}   
\label{sec:simplifying-delta}
\newcommand{\join}{\otimes}


Before proving Theorem~\ref{thm:bounded-width}, we formulate two simplifying conditions about the 
regular query  $\triples[x,y_1,y_2]$. 

For two nodes $x, y$ in a tree $t$, we write $\word_t(x, y)$ for the sequence of labels on the
unique shortest path from $x$ to $y$ in $t$, including $x$ and $y$.  We  omit the
subscript $t$ when a tree is clear from the context.  Note that
$\word_t(x, y)$ is always nonempty and $\word_t(x,x)$ is the label of $x$.

The two  conditions about the query  $\triples[x,y_1,y_2]$ are:
\begin{enumerate}
\item  All  selected triples satisfy $y_1 < x < y_2$.
\label{en:simplify1}
\item Whether or not a triple is selected depends only on  the words
	$\word_t(y_1,x)$ and $\word_t(x,y_2)$. It does not depend on nodes outside the path from $y_1$ to $y_2$.
\label{en:simplify2}
\end{enumerate}

The goal of this section is to reduce
Theorem~\ref{thm:bounded-width} to the case when $\triples[x,y_1,y_2]$
is a \emph{simplified query}, as defined above.
This simplification is achieved in several steps.
(In the case of words, the simplification would be standard, but for trees it  requires new ideas about guidance systems.)
Formally, in this section we show that  Theorem~\ref{thm:bounded-width} follows
from Theorem~\ref{thm:simplified-bounded-width} (deliberately formulated as late as in the forthcoming 
Subsection~\ref{sec:arrangm-eqref-eqref}) that only speaks about simplified queries.
Theorem~\ref{thm:simplified-bounded-width} itself is proved in Section~\ref{sec:diff-arrang-trees}.

\subsection{Generalized witness functions}

Fix any number $n \in \Nat$, although we will be mainly interested in
$n \in \set{1,2}$. Consider a regular query $\triples[x,y_1,\ldots,y_n]$
over trees. 
Consider now a tree $t$ together with a set
$E$ of $n$-tuples of nodes in $t$. As before, the idea is that $E$ gives a
constraint on the witness variables.  A \emph{witness tuple} for a node $x$
is a tuple $(y_1,\ldots,y_n) \in E$ such that $(x,y_1,\ldots,y_n)$ is
selected by $\triples$ in $t$. In this case, we say that 
$x$ is a \emph{source}, and $y_i$ is an \emph{$i$-th witness} for $x$ 
(the other variables 
are quantified existentially).

A \emph{witness function} for $\triples$ and a set of source nodes $X$ in $(t,E)$ is a
function which assigns to each node $x \in X$ some witness (an $i$-th
witness for some $i$, with $i$ depending on $x$). 

We say that a regular query $\triples[x,y_1,\ldots,y_n]$
has \emph{witness functions of guidance width $m$}
if for every tree $t$, every choice $E$ of
$n$-tuples of nodes of $t$ and every set $X$ of source nodes in $(t,E)$, there is a witness function for $\triples$ 
and $X$ of guidance width at most $m$.
A query $\triples$ has witness functions of \emph{bounded guidance width} if some
such $m$ exists.

\subsection{Three arrangements}
\label{sec:three-arrangements}
By an \emph{arrangement} of the nodes $x,y_1,y_2$ in a tree we mean 
the information on how these nodes, and their greatest common ancestors
\begin{align*}
  \gca(x,y_1) \quad   \gca(x,y_2) \quad  \gca(y_1,y_2) 
\end{align*}
are related with respect to the descendant ordering. We distinguish
three different arrangements, pictured below.
\begin{center}
  \includegraphics[scale=0.6]{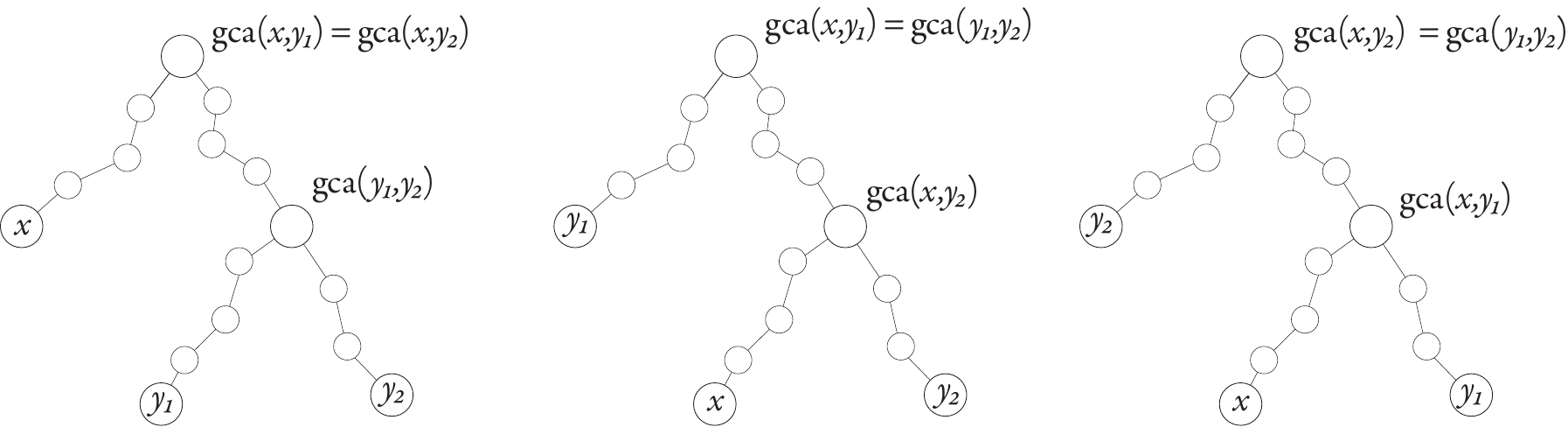}
\end{center}
These arrangements correspond, respectively, to the following situations.
\begin{align}
    \gca(x,y_1) =   \gca(x,y_2) \le  \gca(y_1,y_2) \label{eq:a1} \tag{A1}\\
    \gca(x,y_1) =   \gca(y_1,y_2) \le  \gca(x,y_2) \label{eq:a2} \tag{A2}\\
    \gca(x,y_2) =   \gca(y_1,y_2) \le  \gca(x,y_1) \label{eq:a3} \tag{A3}
\end{align}
The arrangements are not contradictory, for instance the case
$x=y_1=y_2$ is covered by all three.  
The slightly more general case $y_1 = y_2$ that essentially represents binary queries $\triples[x,y]$,
is fully covered by~\eqref{eq:a1}.

\begin{lemma}
We may assume without loss of generality that all the triples selected
by $\triples$, as in the statement of Theorem~\ref{thm:bounded-width}, have the same arrangement.  
\end{lemma}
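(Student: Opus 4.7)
The plan is to split the query $\triples$ into three regular sub-queries, one per arrangement, and then combine three witness functions obtained from the restricted version of Theorem~\ref{thm:bounded-width}.

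Formally, for $i \in \set{1,2,3}$ let $\mathrm{Arr}_i[x,y_1,y_2]$ denote the MSO formula that expresses the corresponding condition from~\eqref{eq:a1}--\eqref{eq:a3}; these conditions only speak about the descendant order between the nodes $x,y_1,y_2$ and their pairwise greatest common ancestors, so each $\mathrm{Arr}_i$ is a regular ternary query. Define
\[
  \triples^{(i)}[x,y_1,y_2] \;=\; \triples[x,y_1,y_2] \,\land\, \mathrm{Arr}_i[x,y_1,y_2].
\]
Each $\triples^{(i)}$ is regular, and because the three arrangements cover every possible configuration of $x,y_1,y_2$, every triple selected by $\triples$ is selected by at least one $\triples^{(i)}$.

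Now assume Theorem~\ref{thm:bounded-width} holds for every regular ternary query whose selected triples all have a fixed arrangement; let $m_i$ be the constant that the theorem yields for $\triples^{(i)}$, and put $m = m_1+m_2+m_3$. Given any graph tree $(t,E)$ and any set $X$ of source nodes for $\triples$, partition $X = X_1 \cup X_2 \cup X_3$ by assigning each $x \in X$ to some $X_i$ such that $x$ is a source node of $\triples^{(i)}$ in $(t,E)$ (such an $i$ exists by the covering property; ties are broken arbitrarily). Apply the restricted theorem to each $(\triples^{(i)}, X_i)$ to obtain a witness function $f_i$ of guidance width at most $m_i$, witnessed by an $m_i$-color guidance system $\Pi_i$. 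Their disjoint union $f_1 \cup f_2 \cup f_3$ is a witness function for $\triples$ on $X$, and the guidance system $\Pi_1 \cup \Pi_2 \cup \Pi_3$ induces its graph; by using three disjoint palettes of sizes $m_1,m_2,m_3$, we recolor it as an $m$-color guidance system. Hence $\triples$ has witness functions of guidance width at most $m$.

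\textbf{Main obstacle.} The only non-routine point is verifying that the arrangement predicates $\mathrm{Arr}_i$ are expressible as regular ternary queries; this follows because $\gca$ is MSO-definable on trees and the relations among $x,y_1,y_2,\gca(x,y_1),\gca(x,y_2),\gca(y_1,y_2)$ under $\le$ are first-order over the descendant relation. Everything else is a straightforward partition-and-combine argument on guidance systems, exploiting that disjoint color palettes keep conflicting guides separated across the three pieces.
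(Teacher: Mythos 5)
Your proof is correct and follows essentially the same route as the paper: the paper's own argument is exactly to split $\triples$ into a union of three regular queries, one per arrangement, and combine the three guidance systems (with disjoint colors, so the widths add). Your write-up merely fills in the routine details (MSO-definability of the arrangement predicates, partitioning the source set, summing the color palettes) that the paper leaves implicit.
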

\begin{proof}
   Otherwise we can split $\triples$ into a union of three
queries, one for each arrangement, and then combine the three separate
guidance systems.
\end{proof}

\subsection{Path-based queries}
\label{sec:path-based-queries}

 Let us fix one of the
arrangements. 
There are four words $w_1,w_2,w_3,w_4$ that will interest us. These
are shown on the picture below  for the arrangement~\eqref{eq:a1} only, but
the reader can easily see the situation for all other arrangements.
\begin{center}
  \includegraphics[scale=0.6]{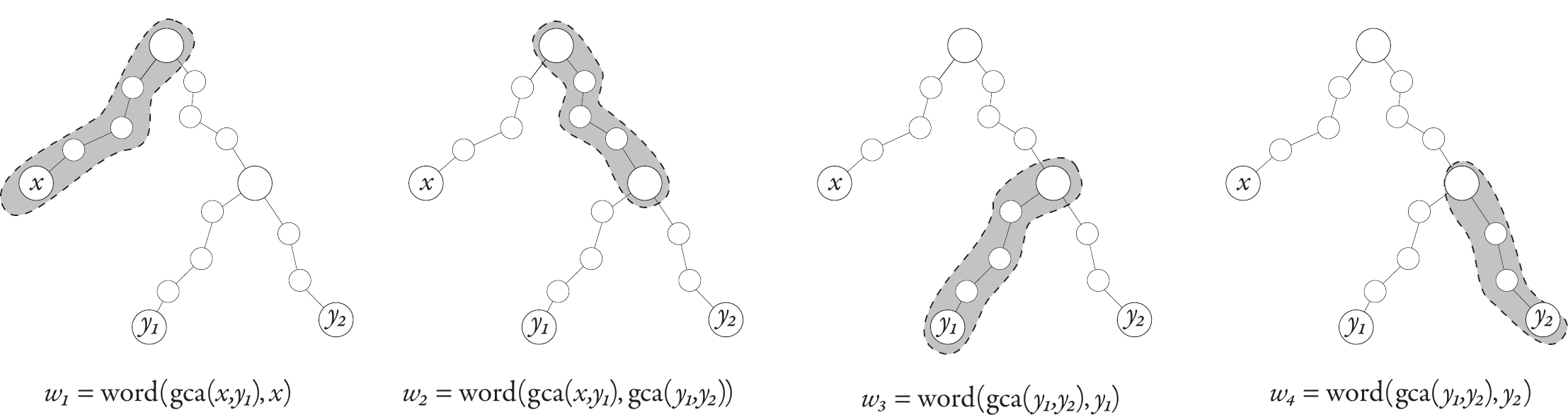}\ .
\end{center}
A regular query $\varphi[x,y_1,y_2]$ is called \emph{path-based} if its truth value depends only
on some regular properties of the four words $w_1,\ldots,w_4$. The precise definition of path-based queries we use in this paper is in terms of monoids. A query that selects triples only in arrangement~\eqref{eq:a1} is called \emph{path-based} if there exists a monoid morphism
\begin{eqnarray*}
  \alpha : \Sigma^* \to S
\end{eqnarray*}
such that membership $(x,y_1,y_2) \in \triples(t)$ depends only on
the values assigned by $\alpha$ to the words $w_1, \ldots, w_4$.
In other words, there is a set of \emph{accepting quadruples}
$F \subseteq S^4$ such that  $(x,y_1,y_2)$ belongs to $\triples(t)$ if and 
only if 
\begin{align*}
  (\alpha(w_1), \ldots, \alpha(w_4)) \in F\ .
\end{align*}
An analogous definition of path-based queries is given for the other arrangements~\eqref{eq:a2} and~\eqref{eq:a3}.

\begin{lemma}
\label{lemma:path-based}
  We may assume without loss of generality that $\triples$, as in the statement of 
  Theorem~\ref{thm:bounded-width},   is path-based.
\end{lemma}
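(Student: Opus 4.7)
The plan is to enrich each tree with a finite amount of extra labeling so that $\triples$ becomes path-based on the enriched tree, while exploiting the fact that guidance width is label-blind: guides, conflicts, and their colorings are defined purely in terms of nodes and paths in the skeleton of the tree. Assume, using the previous lemma, that $\triples$ selects only triples in one fixed arrangement, say \eqref{eq:a1}.

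Since $\triples[x,y_1,y_2]$ is regular, it is expressible in MSO of some finite quantifier depth $k$. Let $\mathcal T$ be the (finite) set of MSO $k$-types of labeled binary trees (optionally with one marked node), and define a relabeling $\hat t$ of any input tree $t$ by tagging each node $v$ with the $k$-types of the subtrees rooted at its two children. This enrichment is a deterministic function of $t$. Given any triple $(x, y_1, y_2)$ in arrangement \eqref{eq:a1}, the four paths $w_1, \ldots, w_4$ partition the tree into the ``skeleton'' of the paths plus a collection of hanging subtrees, each attached at some node of the skeleton via a child edge that is not on a path. By a standard Feferman--Vaught / Shelah-style composition theorem for trees, the MSO $k$-type of $(t, x, y_1, y_2)$ — hence the truth of $\triples$ on $(x, y_1, y_2)$ — is determined by the types of these hanging subtrees together with the labels and branching structure along $w_1, \ldots, w_4$. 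But the enrichment has made exactly that information locally available at each path node: the $k$-type of each hanging subtree is recorded in the label of its attachment point. Packaging the compositional calculation into a monoid morphism $\alpha : (\Sigma \times \mathcal T)^* \to S$ and a set of accepting quadruples $F \subseteq S^4$ yields a path-based query $\hat\triples$ over $\Sigma \times \mathcal T$ such that $\hat\triples(\hat t) = \triples(t)$ for every $t$.

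To close the loop, suppose Theorem~\ref{thm:bounded-width} has been established for path-based queries, yielding a constant $m$ for $\hat\triples$ (depending only on $\hat\triples$, hence only on $\triples$). Given any graph tree $(t, E)$ and any set of source nodes for $\triples$, pass to $(\hat t, E)$: the nodes, the relation $E$, and the set of source nodes are literally the same as in $(t, E)$, because the selected triples coincide. Apply Theorem~\ref{thm:bounded-width} for $\hat\triples$ to $(\hat t, E)$, obtaining a witness function of guidance width at most $m$. Since guides and their coloring depend only on the tree skeleton and on sources/targets — not on labels — this is at once a witness function for $\triples$ on $(t, E)$ of the same guidance width. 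Analogous enrichments handle arrangements \eqref{eq:a2} and \eqref{eq:a3}.

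The main obstacle is the composition step: one must choose the enrichment carefully enough that, for the fixed arrangement, every piece of structural information that $\triples$ might depend on — including the types of sibling subtrees branching off each path node in the appropriate direction — is already summarized in the enriched labels along the four paths, so that the resulting query is truly path-based in the monoid sense. This is routine MSO composition, but the bookkeeping has to match the geometry of the chosen arrangement, which is why the lemma is stated after arrangements have been singled out in Subsection~\ref{sec:three-arrangements}.
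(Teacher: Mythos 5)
Your overall strategy is the paper's: relabel the tree by a functional letter-to-letter relabeling that records MSO type information, observe that the relabeled query becomes path-based (recognized by a monoid morphism on the path words), and transfer the witness function back because guidance systems and their colorings depend only on the underlying nodes, which the relabeling does not change. That transfer step is exactly how the paper closes the argument. However, as written your enrichment has a genuine gap: you tag each node only with the $k$-types of the subtrees rooted at its children, i.e.\ purely downward information, and you then claim that the skeleton of the paths $w_1,\ldots,w_4$ together with the subtrees hanging off it via non-path child edges partitions the tree. That is false unless the topmost node of the skeleton (in arrangement \eqref{eq:a1}, the node $\gca(x,y_1)=\gca(x,y_2)$) happens to be the root: the entire part of the tree \emph{above} this node --- its ancestors and all subtrees branching off them --- lies outside both the skeleton and the hanging child subtrees, and none of its type information appears in any label along the four path words. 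Since $\triples$ is an arbitrary regular ternary query, its truth may well depend on that upper context (e.g.\ a conjunct ``the root carries label $a$''), so the query $\hat\triples$ you build is not in fact determined by the images of $w_1,\ldots,w_4$ under a morphism, and $\hat\triples(\hat t)=\triples(t)$ fails.

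The repair is standard and keeps you within the paper's argument: in addition to the downward types, have the relabeling record at each node $v$ the MSO $k$-type of the upward context of $v$ (the tree with the subtree rooted at $v$ removed, or equivalently with $v$ marked); this is still computable by a functional letter-to-letter transducer, and with both upward and downward types available at the topmost skeleton node and at every attachment point, the Feferman--Vaught-style composition you invoke really does reduce the truth of $\triples$ to a monoid computation on the enriched path words. With that correction, your proof coincides with the paper's proof, which posits a functional transducer $f$ ``computing MSO theories'' and a path-based $\gamma$ with $\triples=\gamma\circ f$, and then notes that $\triples$ and $\gamma$ have the same witness functions in $t$ and $f(t)$.
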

\begin{proof}
The key observation is that   there is a functional letter-to-letter transducer
  $f$ and a path-based query $\gamma$ such that $\triples = \gamma
  \circ f$, i.e.~for a tree $t$ the set $\triples(t)$ of tuples selected
  by $\triples$ in $t$ is the same as the set of tuples selected by
  $\gamma$ in $f(t)$. The observation can be proved using logical methods
  (the transducer computes MSO theories) or using automata methods
  (the transducer computes state transformations).

To prove the lemma, we need to show that if
Theorem~\ref{thm:bounded-width} is true for the path-based queries
$\gamma$, then it is also true for arbitrary ternary queries $\triples$. But this is
straightforward, as $\triples$ and $\gamma$ have the same witness functions
in trees $t$ and $f(t)$, respectively.
\end{proof}

\subsection{Composing guidance systems}

In the sequel we will  compose guidance systems as outlined in the lemma 
below.
For two partial functions $f, g$ on the set of nodes of a tree,
by the composition $g \circ f$ we mean, somewhat non-standardly, 
the partial function with the same domain as $f$, and defined as follows:
\[
(g \circ f) (x) =  
\begin{cases}
 g(f(x)) & \text{ if g is defined on } f(x) \\
f(x) & \text{ otherwise.}
\end{cases}
\]

\begin{lemma}
\label{lemma:comp-guid-systems}
Let $f, g$ be partial functions on the set of nodes of a tree,
of guidance width $m_1$ and $m_2$, respectively.
Then their composition $g \circ f$ is of guidance width at most $2 m_1 m_2$.
\end{lemma}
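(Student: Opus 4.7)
The plan is to construct an explicit guidance system $\Pi$ for $g \circ f$ by chaining guides from $\Pi_f$ and $\Pi_g$, and then to colour it with at most $2 m_1 m_2$ colours. For each guide $\pi^f \in \Pi_f$ whose target $t_{\pi^f}$ lies in $\mathrm{dom}(g)$, I pick (arbitrarily) some $\pi^g \in \Pi_g$ having $t_{\pi^f}$ among its sources, and add to $\Pi$ the \emph{chained guide} with sources $S_{\pi^f}$ and target $t_{\pi^g} = g(t_{\pi^f})$. For $\pi^f$ with $t_{\pi^f} \notin \mathrm{dom}(g)$, I add $\pi^f$ unchanged. By construction and the two-case definition of composition given in the lemma, the relation induced by $\Pi$ is exactly $g \circ f$.

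The key geometric fact is the tree triangle inequality: for any three nodes $x, y, z$, the shortest path from $x$ to $z$ is contained in $\mathrm{path}(x, y) \cup \mathrm{path}(y, z)$. Specialised to $(x, y, z) = (x, f(x), g(f(x)))$, this gives the crucial inclusion $\mathrm{supp}(\pi_{(\pi^f, \pi^g)}) \subseteq \mathrm{supp}(\pi^f) \cup \mathrm{supp}(\pi^g)$ for every chained guide. With this in hand, I colour each chained guide by a triple $(c_f, c_g, b) \in [m_1] \times [m_2] \times \{0,1\}$, where $c_f$ and $c_g$ are inherited from the base colourings of $\Pi_f$ and $\Pi_g$ and $b$ is an extra bit explained below. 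Non-chained guides keep their original $c_f$ paired with fixed dummy values and fit comfortably inside the $2 m_1 m_2$ budget.

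Whenever two chained guides $\pi_1, \pi_2$ of the same colour conflict at a node $v$, the inclusion above places $v$ in one of four sub-cases according to whether it lies in the $f$-part or the $g$-part of each guide. The two \emph{homogeneous} sub-cases (both $f$-parts, or both $g$-parts) directly contradict the matching colour in the corresponding base system. The two remaining \emph{mixed} sub-cases are what the bit $b$ must preclude; this is the main obstacle. A natural choice is to split each chained guide into two copies according to whether $f(x)$ actually lies on the shortest path from $x$ to $g(f(x))$ --- equivalently, whether $f(x)$ coincides with the Steiner centre of $\{x, f(x), g(f(x))\}$. A careful case analysis on the relative positions of the endpoints $x_i, f(x_i), g(f(x_i))$ for $i = 1, 2$ should then show that any mixed conflict between guides of the same triple forces one of the base guides to genuinely overlap another of its own colour, yielding the desired contradiction and completing the bound of $2 m_1 m_2$.
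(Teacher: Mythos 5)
Your construction and your colour budget follow the right track up to the decisive step, but the way you propose to spend the extra factor of $2$ does not work. The obstacle you correctly isolate --- mixed conflicts, where the $g$-part of one chained guide meets the $f$-part of another chained guide carrying the same pair $(c_f,c_g)$ --- is not resolved by your ``does $f(x)$ lie on the path from $x$ to $g(f(x))$'' bit. Concretely, take a binary tree in which a node $\ell_1$ has children $\ell_2$ and $k$, the node $\ell_2$ has children $\ell_3$ and $m$, below $\ell_3$ hangs the path $\ell_4,\ell_5$, below $m$ the path $m_2,m_3$, and below $k$ the path $k_1,k_2,k_3$. Let $f$ map $\ell_5\mapsto\ell_3$ and $m_3\mapsto k_1$ (supports $\{\ell_5,\ell_4,\ell_3\}$ and $\{m_3,m_2,m,\ell_2,\ell_1,k,k_1\}$ are disjoint, so $m_1=1$), and let $g$ map $\ell_3\mapsto\ell_1$ and $k_1\mapsto k_3$ (supports $\{\ell_3,\ell_2,\ell_1\}$ and $\{k_1,k_2,k_3\}$ are disjoint, so $m_2=1$). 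The two chained guides, $\ell_5\to\ell_1$ and $m_3\to k_3$, each pass through their intermediate node, so your bit $b$ gives them the same value; yet their supports share $\ell_1$ and $\ell_2$ (the $g$-part of the first meets the $f$-part of the second), while no two same-coloured base guides overlap. So the contradiction your last step hopes for is simply not available, and your colouring is improper even though $2m_1m_2=2$ colours do suffice here. (A smaller issue: when a chained guide has several sources, your bit is not even well defined per guide; the example above already uses single-source guides, so this is not the essential problem.)

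The paper resolves mixed conflicts globally rather than locally: it forms the graph whose vertices are the $(c_f,c_g)$-coloured composed guides and whose edges are the mixed conflicts, proves that this graph is a forest (a cycle of conflicts would yield edge-disjoint paths closing a cycle inside the tree, which is impossible), and then takes a proper $2$-colouring of that forest as the extra bit. This parity necessarily alternates along chains of conflicting guides, so it cannot be read off from the shape of a single guide in isolation, which is exactly why a local geometric criterion such as yours (doubling back at the intermediate node, Steiner centre, etc.) cannot replace it. Everything earlier in your proposal --- the chaining, the support inclusion via the tree triangle inequality, the $m_1m_2$ pair colouring, and the observation that homogeneous conflicts contradict the base colourings --- matches the paper's argument and is fine; the gap is confined to, but fatal at, this final step.
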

\begin{proof}
Fix a tree $t$ together with some
$m_1$- and $m_2$-color guidance systems $\Pi_f$ and $\Pi_g$, inducing $f$ and $g$, respectively.
We will show existence of a $2 m_1 m_2$-color guidance system for $g \circ f$.

As the first step, combine $\Pi_f$ and $\Pi_g$ as follows:
a node $x$ is first guided by $\Pi_f$, and then, if $g$ is defined on $f(x)$, guided by $\Pi_g$
to its final destination.
Formally, $\Pi$ contains those guides of $\Pi_f$ whose destination node is not in the domain of $g$;
and moreover a number of guides that are composed of at least two guides, to be described now.

Fix a pair of colors $(k, l)$, where $k$ is a color used in $\Pi_f$ and $l$ is a color used in $\Pi_g$.
A composed guide, colored by the pair $(k, l)$, is derived from one $l$-colored guide from $\Pi_g$,
say $\pi$, and all those $k$-colored guides from $\Pi_f$ whose destination node is a source node of $\pi$,
The source nodes of the composed guide are all source nodes of all the above-mentioned $k$-colored
guides from $\Pi_f$. The target node of the composed guide is the target node of $\pi$.

We will focus on the composed guides only. 
(A 'non-composed' guide in $\Pi$, say colored $k$, may be safely considered 
as colored by $(k, l)$, for any $l$.)

The above coloring, using $m_1 m_2$ colors, is not satisfactory as same colored guides may be in conflict.
We will show how to resolve these conflicts by introducing an additionally distinguishing
piece of data into the colors.
Fix a color pair $(k, l)$ as above.
Note that a  conflict may only arise when the $\Pi_g$-part ($l$-colored in $\Pi_g$)
of one $(k, l)$-colored guide,
say $\pi_1$, conflicts with the $\Pi_f$-part ($k$-colored in $\Pi_f$) 
of another same colored guide, say $\pi_2$.
Consider an undirected graph $G$, whose nodes are all $(k, l)$-colored guides; there is an edge
between $\pi_1$ and $\pi_2$ in the graph if the abovementioned conflict arises.

We claim that the graph $G$ is a forest, i.e., a disjoint union of trees.
Towards a contradiction, suppose that $G$ has a cycle consisting of $n$ pairwise different guides
$\pi_1, \ldots, \pi_n$. Take $\pi_{n+1} = \pi_1$.
Let $x_1, \ldots, x_n$ denote arbitrarily chosen nodes witnessing the conflicts,
i.e., $x_i$ belongs to the supports of guides $\pi_i$ and $\pi_{i+1}$.
In $\pi_{i+1}$, for any $i \leq n$, there is a unique path from $x_i$ to $x_{i+1}$ (take $x_{n+1}$ as $x_1$),
denote it $p_i$;
$p_i$ always uses a path of a guide from $\Pi_f$, colored $k$, and a path of a guide from $\Pi_g$,
colored $l$.
As the $k$-colored guides never conflict, and likewise the $l$-colored ones,
the $k$-colored part of $p_i$
is separated from the same colored part of $p_{i+1}$ by at least one $l$-colored edge; 
thus the paths $p_i$ are nonempty, i.e., $x_i \neq x_{i+1}$.
Assume that $x_1, \ldots, x_n$ are pairwise distinct (if this is not the case, i.e.,
$x_i = x_j$, consider $x_i, \ldots, x_{j-1}$ instead; and consider $\pi_i, \ldots, \pi_{j-1}$ instead of
$\pi_1, \ldots, \pi_n$).

Now we are prepared to obtain a contradiction, thus proving that $G$ is a forest.
If two paths $p_i$ and $p_{i+1}$ share an edge adjacent to $x_{i+1}$, 
the edge may be removed from both paths;
this clearly forces $x_{i+1}$ to be replaced appropriately.
Thus the paths can be made edge-disjoint; moreover we keep the $x_i$ nodes pairwise distinct,
argued as above. 
Hence the paths $p_1, \ldots, p_n$ form a cycle in the tree $t$, a contradiction.

Knowing that $G$ is a forest, we may easily label its nodes by two numbers ${1, 2}$, 
level by level, starting from an arbitrary leaf in any connected component.
This additional numbering, added to the colors of the guides in $\Pi$, eliminates the problematic
conflicts and makes $\Pi$ a $2 m_1 m_2$-color guidance system as required.
\end{proof}

\subsection{Binary queries}
\begin{lemma}\label{lemma:witness-binary}
  Every binary regular query $\triples[x,y]$ has witness functions of bounded guidance width.
\end{lemma}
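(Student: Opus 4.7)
The plan is first to apply Lemma~\ref{lemma:path-based} to reduce to the case where $\triples[x,y]$ is path-based. Since the query is binary we are in the degenerate $y_1=y_2=y$ subcase of arrangement~\eqref{eq:a1}, and path-basedness supplies a finite monoid $\monoid$, a morphism $\alpha:\Sigma^*\to\monoid$, and an accepting set $F\subseteq\monoid\times\monoid$ such that $\triples[x,y]$ holds iff $(\alpha(\word_t(z,x)),\alpha(\word_t(z,y)))\in F$, where $z=\gca(x,y)$. Because $F$ is finite, it suffices to produce, for every $(s,t)\in F$, a partial witness function $f_{s,t}$ of bounded guidance width whose domain covers all sources admitting a witness of this type; the overall witness function is their union, and guidance widths add up on disjoint colour palettes.

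For fixed $(s,t)$ I would realise $f_{s,t}$ as a composition $f_{s,t}=g\circ h$ of two partial functions on the nodes of $t$ and then invoke Lemma~\ref{lemma:comp-guid-systems}. The \emph{ascent} $h$ sends $x$ to the pivot $z=\gca(x,f_{s,t}(x))$, chosen as the closest ancestor of $x$ (possibly $x$ itself) satisfying $\alpha(\word_t(z,x))=s$ and admitting a descendant witness of type $t$; the \emph{descent} $g$ sends $z$ to the canonical such descendant $y$, picked as the $\preceq$-least descendant of $z$ with $\alpha(\word_t(z,y))=t$ under some fixed tie-breaking order $\preceq$. Both $h$ and $g$ are ``closest vertical neighbour satisfying a monoid-valued condition'' style maps; each, I expect, has guidance width bounded by a constant depending only on $|\monoid|$, and Lemma~\ref{lemma:comp-guid-systems} then bounds their composition. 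The non-locality of the condition $\alpha(\word_t(z,x))=s$ (it depends on both $z$ and $x$) prevents a clean two-colouring argument of the kind that works for the plain parent function, so my plan is to pre-annotate each node $v$ with the descending monoid image $\alpha(\word_t(\mathrm{root},v))$ and to view each root-to-leaf branch as a finite-state walk in $\monoid$; along a single branch the closest-$s$-pivot behaves like a KMP-style back pointer and is $O(|\monoid|)$-colourable, after which the full tree case follows by a branch-wise argument combined with a depth-alternation trick.

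The main obstacle, I expect, lies in ensuring that $g\circ h$ actually delivers a \emph{valid} witness: one needs $\gca(x,g(h(x)))=h(x)$ so that the types $(s,t)$ pair up correctly. This would fail if the canonical descendant $g(z)$ happens to share a subtree of $z$ with $x$. Using that we have reduced to binary trees, I would split $f_{s,t}$ into four geometric sub-cases according to the position of $y$ relative to $x$: $y$ is an ancestor of $x$; $y$ is a descendant of $x$ (so $z=x$); $y$ is a left-side cousin of $x$ (in the subtree below the left child of a proper ancestor $z$, while $x$ lies below the right child); or the right-side mirror. In each sub-case the descent is defined per subtree, independently of $x$, and the ascent is forced to halt only at pivots whose ``other'' subtree indeed supplies a witness, which restores the compositional picture. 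Summing over $(s,t)\in F$ and the four sub-cases yields a witness function of bounded guidance width, finishing the proof.
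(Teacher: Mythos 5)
Your overall skeleton is the same as the paper's: reduce to a path-based query via Lemma~\ref{lemma:path-based}, fix an accepting pair $(s_1,s_2)\in F$ and a left/right geometric case, decompose the witness map into an ascent (source $x$ to the intermediate node $z=\gca(x,y)$) and a descent ($z$ to a final node $y$), and combine the two stages with Lemma~\ref{lemma:comp-guid-systems}. The problem is that the entire content of the lemma sits in the step you defer with ``each, I expect, has guidance width bounded by a constant depending only on $|\monoid|$'': nothing in your sketch actually establishes this, and the concrete device you propose is flawed. Pre-annotating every node $v$ with $\alpha(\word_t(\mathrm{root},v))$ cannot work, because a finite monoid is not cancellative: the images of the prefixes do not determine the infix image $\alpha(\word_t(z,x))$, so this annotation carries too little information to locate your pivots, and the remaining devices (``KMP-style back pointer'', ``branch-wise argument combined with a depth-alternation trick'') are named but not carried out. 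Moreover, insisting on the \emph{closest} valid ancestor as pivot and on a canonical $\preceq$-least descendant is an extra rigidity you never exploit; what you need is not a canonical choice but an argument that many sources can \emph{share} guides.

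The missing idea, which is exactly what the paper's proof supplies, is a state-merging argument: take a deterministic word automaton $\aut_1$ recognizing $\alpha^{-1}(s_1)$ and run it upward from every source $x$ along the path towards an $x$-intermediate node. If the runs started at two sources $x$ and $x'$ reach some node of $t$ in the same state, then from that node on they can be continued identically and both sources can be directed to the same intermediate node, i.e.\ placed in a single guide. Consequently the conflicting guides at any node are distinguished by states of $\aut_1$, and the ascent stage needs only as many colours as $\aut_1$ has states --- a bound depending only on the query, not on $t$. A symmetric construction with an automaton $\aut_2$ for $\alpha^{-1}(s_2)$ handles the descent from intermediate to final nodes, and Lemma~\ref{lemma:comp-guid-systems} finishes the proof. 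Without this (or an equivalent) argument your proposal asserts the bound rather than proving it, so as it stands there is a genuine gap at the decisive step.
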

\begin{proof}
Whenever a pair $(x,y)$ belongs to $\triples(t)$, call the node $z = \gca(x,y)$ an
$x$-intermediate node, and call $y$ a $z$-final node.
We will define two guidance systems, the first one directing any source node $x$ to an
$x$-intermediate one, and the second one directing any intermediate node $z$ to a $z$-final one. 
The two guidance systems will be combined using Lemma~\ref{lemma:comp-guid-systems}.

A binary query is essentially a degenerate case of ternary query,
with $y_1 = y_2$.
By Lemma~\ref{lemma:path-based} assume that $\triples$ is path-based.
Thus its truth value in a tree $t$ only depends on some regular properties 
of two words 
\[
w_1 = \word_t(x, \gca(x,y)) \ \text{ and } \ w_2 = \word_t(\gca(x,y), y),
\]
 as depicted in the figure in Section~\ref{sec:path-based-queries}.
(Words $w_3$ and $w_4$ are empty as $y_1 = y_2$.)
Namely, $(x, y)$ belongs to $\triples(t)$ if and only if $(\alpha(w_1), \alpha(w_2)) \in F$,
for a designated set $F \subseteq S^2$. Fix $(s_1, s_2) \in F$. We will define a guidance system for
pairs $(x, y)$ which satisfy
\[
\alpha(w_1) = s_1 , \ \ \alpha(w_2) = s_2.
\] 
(Then the required guidance system will be a disjoint union over all pairs $(s_1, s_2) \in F$.)

Assume further, without loss of generality, that $x$ is in the left subtree, and $y$ is in the right subtree
of $\gca(x,y)$, including possibly $y = \gca(x,y)$. 
(Again, the required guidance system will be a disjoint union of two systems.)

Consider deterministic word automata $\aut_1$ and $\aut_2$ that recognize the properties $\alpha^{-1}(s_1)$ and
$\alpha^{-1}(s_2)$, respectively.
Think of a run of $\aut_1$, starting from a source node $x$, 
along the path in $t$ leading from $x$ to some $x$-intermediate node.
Consider such runs of $\aut_1$ starting from all source nodes $x$, one run from every source node.
These runs may be translated into a guidance system, as follows.

Each of the runs labels nodes on the path from $x$ to an $x$-intermediate node with states of $\aut_1$.
The idea is that two source nodes $x$ and $x'$ may be directed to the same intermediate node if the two runs of
$\aut_1$ that start in $x$ and $x'$ label some node of $t$ with the same state. 
In other words, one may use the same guide both for $x$ and $x'$.
Thus there is a guidance system, with as many colors as the number of states of $\aut_1$,
that follows the runs of $\aut_1$ until acceptance, and directs any source node $x$ to 
some $x$-intermediate node.

Likewise for $\aut_2$, there is a corresponding guidance system, that leads any intermediate
node $z$ to some $z$-final node. 
Applying Lemma~\ref{lemma:comp-guid-systems} for these two guidance systems we get the result.
\end{proof}

\subsection{Arrangement \eqref{eq:a1}} In this section we
show that Theorem~\ref{thm:bounded-width} holds if all
triples selected by $\triples[x,y_1,y_2]$ have
arrangement~\eqref{eq:a1}, pictured below.
\begin{center}
  \includegraphics[scale=0.65]{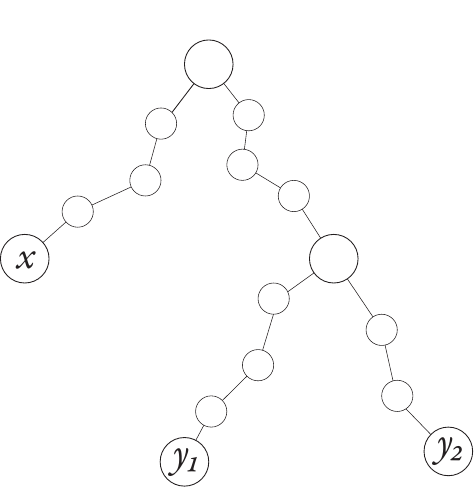}
\end{center}

Suppose $\tau[x,y]$ is a binary query, and $\sigma[y,y_1,y_2]$ is a
ternary query. We define the following ternary query
\begin{eqnarray*}
  \tau \circ_y \sigma[x,y_1,y_2] \qquad = \qquad  \exists y \ \tau[x,y]
  \land \sigma[y,y_1,y_2]\ .
\end{eqnarray*}

\begin{lemma}\label{lemma:witness-decompose}
%
%
 Let $\tau,\sigma$ be as above. If $\tau$ and $\sigma$ have bounded width
 witness functions then so does
 $\tau \circ_y\sigma$.
\end{lemma}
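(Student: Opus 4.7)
The plan is to decompose each witness assignment through the existentially quantified intermediate variable $y$: first guide the source $x$ to some node $y$ with $\tau[x,y]$ that is itself a source of $\sigma$, then guide $y$ to one of its $\sigma$-witnesses. The two guidance systems will be combined via Lemma~\ref{lemma:comp-guid-systems}.

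More concretely, fix a graph tree $(t,E)$, where $E$ is a set of pairs of nodes, and let $X$ be the set of sources of $\tau \circ_y \sigma$ in $(t,E)$. Let $Y$ be the set of those nodes $y$ for which some $(y_1,y_2) \in E$ satisfies $(y,y_1,y_2) \in \sigma(t)$; equivalently, $Y$ is the set of sources of $\sigma$ in $(t,E)$. By the hypothesis that $\sigma$ has witness functions of bounded guidance width, there is a witness function $f_\sigma \colon Y \to \mathrm{nodes}(t)$ for $\sigma$, $E$ and $Y$ of guidance width at most some constant $m_\sigma$ depending only on $\sigma$. The crucial observation is that $X$ coincides with the set of sources of the binary query $\tau$ with respect to the 1-tuple set $Y$: a node $x$ lies in $X$ iff there exists $y \in Y$ with $\tau[x,y]$. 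The hypothesis on $\tau$ therefore yields a witness function $f_\tau \colon X \to Y$ of guidance width at most some constant $m_\tau$ depending only on $\tau$.

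Since $f_\tau$ maps $X$ into the domain $Y$ of $f_\sigma$, the composition $f_\sigma \circ f_\tau$, in the sense introduced just before Lemma~\ref{lemma:comp-guid-systems}, is defined on all of $X$ and coincides there with the usual composition. For $x \in X$ set $y = f_\tau(x)$; then $f_\sigma(y)$ is an $i$-th $\sigma$-witness of $y$ for some $i \in \{1,2\}$, meaning there is a pair $(y_1,y_2) \in E$ with $(y,y_1,y_2) \in \sigma(t)$ whose $i$-th coordinate equals $f_\sigma(y)$. Together with $\tau[x,y]$, this exhibits $f_\sigma(f_\tau(x))$ as a legitimate $i$-th witness of $\tau \circ_y \sigma$ at $x$, so $f_\sigma \circ f_\tau$ is a valid witness function for $\tau \circ_y \sigma$ and $X$ in $(t,E)$. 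By Lemma~\ref{lemma:comp-guid-systems} its guidance width is bounded by $2 m_\sigma m_\tau$, a constant independent of the input. The only step requiring any thought is the identification $X = \{x : \exists y \in Y,\ \tau[x,y]\}$, which is what allows one to invoke the hypothesis on $\tau$ with codomain restricted to $Y$; after that, the argument is a direct appeal to Lemma~\ref{lemma:comp-guid-systems}.
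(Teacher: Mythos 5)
Your proposal is correct and takes the same route as the paper: the paper's proof is a one-line appeal to composing the two witness functions and invoking Lemma~\ref{lemma:comp-guid-systems}, and your argument simply spells out the details (in particular the identification of the sources of $\tau \circ_y \sigma$ with the $\tau$-sources relative to the set $Y$ of $\sigma$-sources, which makes the composition well-defined and valid). No gaps.
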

\begin{proof}
  By considering the witness function for $\tau \circ_y \sigma$ obtained as
  a composition and applying Lemma~\ref{lemma:comp-guid-systems}.
%
\end{proof}

\begin{lemma}\label{lemma:decompose-gca}
  We may assume without loss of generality that $\triples$ only selects triples $(x,y_1,y_2)$ where $x=\gca(y_1,y_2)$.
\end{lemma}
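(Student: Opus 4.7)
The plan is to reduce the general arrangement~\eqref{eq:a1} case to the special case $x = \gca(y_1,y_2)$ by inserting the node $z = \gca(y_1,y_2)$ as an intermediate, existentially quantified variable. By Lemma~\ref{lemma:path-based} I may assume $\triples$ is path-based via a monoid morphism $\alpha : \Sigma^* \to \monoid$ and an accepting set $F \subseteq \monoid^4$. In arrangement~\eqref{eq:a1} the four path-words $w_1,\ldots,w_4$ split naturally into an upper pair $w_1,w_2$ which depends only on the positions of $x$ and $z$, and a lower pair $w_3,w_4$ which depends only on $z,y_1,y_2$ once one enforces $z = \gca(y_1,y_2)$.

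Exploiting this split, I would introduce, for each quadruple $(s_1,s_2,s_3,s_4) \in F$, a binary regular query $\tau_{s_1,s_2}[x,z]$ that tests $\alpha(w_1) = s_1$ and $\alpha(w_2) = s_2$ along the path from $x$ through $\gca(x,z)$ to $z$, and a ternary regular query $\sigma_{s_3,s_4}[z,y_1,y_2]$ that tests $z = \gca(y_1,y_2)$ together with $\alpha(w_3) = s_3$ and $\alpha(w_4) = s_4$. The target query $\triples$ then coincides with the finite disjunction over $F$ of the compositions $\tau_{s_1,s_2} \circ_z \sigma_{s_3,s_4}$, understood in the sense of Lemma~\ref{lemma:witness-decompose} with the bound variable renamed from $y$ to $z$.

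The pieces now fit together cleanly. Lemma~\ref{lemma:witness-binary} yields bounded guidance width for each binary $\tau_{s_1,s_2}$. The restricted version of Theorem~\ref{thm:bounded-width} that this lemma is about to justify---queries in which $x = \gca(y_1,y_2)$---yields bounded guidance width for each $\sigma_{s_3,s_4}$. Lemma~\ref{lemma:witness-decompose} then transfers bounded width to the composition $\tau_{s_1,s_2} \circ_z \sigma_{s_3,s_4}$. A finite disjunction over $F$ is handled by routing each source node through the particular disjunct that actually selects it and taking the disjoint union of the corresponding guidance systems; the total number of colors is bounded by the sum, over the finitely many quadruples in $F$, of the widths of the individual compositions.

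I do not expect a serious obstacle here: the heavy machinery (Lemmas~\ref{lemma:witness-binary}, \ref{lemma:witness-decompose}, \ref{lemma:comp-guid-systems}) has already been set up. The only point that needs care is to confirm, from the picture of arrangement~\eqref{eq:a1}, that the monoid values of $(w_1,w_2)$ genuinely factor through $z$ as a binary regular query of the pair $(x,z)$ alone---independently of $y_1,y_2$---which is exactly what the shape of arrangement~\eqref{eq:a1} guarantees.
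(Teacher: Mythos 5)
Your proposal is correct and follows essentially the same route as the paper: decompose through $z=\gca(y_1,y_2)$ using the path-based form, split $\triples$ into the union over $F$ of compositions $\tau_{s_1,s_2}\circ_z\sigma_{s_3,s_4}$, and combine Lemma~\ref{lemma:witness-binary} with Lemma~\ref{lemma:witness-decompose}. The only cosmetic difference is that you spell out the color-counting for the finite union, which the paper leaves implicit.
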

\begin{proof}
By the considerations in Section~\ref{sec:path-based-queries}, we know
that a triple $(x,y_1,y_2)$ is selected by $\triples$ if and only if the
images, under the morphism $\alpha$, of the four path words
$w_1,w_2,w_3,w_4$ belong to a designated set $F \subseteq S^4$ of
accepting tuples.

Let $s_1,\ldots,s_4\in S$. Let $\tau_{s_1,s_2}$  be the binary query that selects a
pair $(x,y)$ if
\begin{align*}
\alpha( \word_t(\gca(x,y),x)) & = s_1 \\ 
\alpha( \word_t(\gca(x,y),y)) & = s_2\ .
\end{align*}
Likewise, let $\sigma_{s_3,s_4}$ be the ternary query that selects a
triple $(y,y_1,y_2)$ if 
\begin{align*}
\gca(y_1,y_2) & = y \\ 
\alpha( \word_t(y,y_1)) & = s_3 \\ 
\alpha( \word_t(y,y_2)) & = s_4\ .
\end{align*}
The queries $\tau_{s_1,s_2}$ and $\sigma_{s_3,s_4}$ 
%
%
can be joined to define $\triples$, in the following way.
\begin{align*}
  \triples = \bigcup_{(s_1,s_2,s_3,s_4) \in F} \tau_{s_1,s_2} \circ_y \sigma_{s_3,s_4} .
\end{align*}
By  Lemma
\ref{lemma:witness-decompose}, we see that the width of  witness
functions for $\triples$ is bounded by the widths of the witness
functions for the $\tau$ queries, 
which is bounded by Lemma~\ref{lemma:witness-binary}, and
the width of the witness functions for the $\sigma$ queries.  The
latter are queries where the first variable is the $\gca$ of the
second and third variables, which concludes the proof of the lemma.
\end{proof}

Thanks to the above lemma, we are left with a query $\triples$ that
selects triples in the arrangement pictured below
(for future reference let us call this arrangement \emph{trivial}).
\begin{center}
  \includegraphics[scale=0.65]{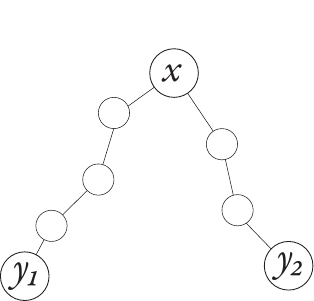}
\end{center}
We will provide a 2-color guidance system that induces 
a witness function for $\triples$ in $(t,E)$.
This is guaranteed by the following lemma:

\begin{lemma}
\label{lem:trivial-arr}
Let $\triples$ be any (not necessarily regular) query that selects
only nodes in a trivial arrangement.
Then $\triples$ has witness functions of guidance width 2.
\end{lemma}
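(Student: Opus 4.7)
The plan is to prove the lemma combinatorially, using the freedom provided by the trivial arrangement: since $x = \gca(y_1,y_2)$ with $y_1, y_2$ in different subtrees of $x$ (the degenerate cases $y_1=x$ or $y_2=x$ can be guided by trivial one-node guides), we have a binary choice for each source $x$ - route $x$ through its left subtree or through its right subtree. I would construct the witness function $\omega$ and the $2$-coloring simultaneously, by induction on the tree structure.

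For the induction to go through, I would strengthen the hypothesis as follows: for every subtree $t'$ rooted at $v$ with a set of sources $X'$ satisfying the trivial-arrangement hypothesis, and for any designated ``forbidden'' color $c_0 \in \{1,2\}$, one can build $\omega$ and a $2$-colored guidance system on $X'$ such that no guide of color $c_0$ has $v$ in its support. The base case (empty $X'$) is immediate. In the inductive step, if $v \notin X'$, recursively build systems for the two subtrees using forbidden color $c_0$ and take their union; since the recursive systems have supports confined to their respective subtrees, the invariant at $v$ is preserved. If $v \in X'$, choose to route $v$ into one of its subtrees, say $L$, via some witness $y \in L$ available from a witness pair of $v$, and give the new guide for $v$ the color $c_v \ne c_0$. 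Then recursively build the system for $L$ with forbidden color $c_v$, which guarantees that no guide of color $c_v$ starts at the root of $L$, so no immediate conflict with $v$'s guide occurs at the entry point.

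The hard part, and the main obstacle, is that $v$'s guide for the target $y$ does not merely touch the root of $L$: its support stretches along the entire path from $v$ down to $y$, and there may be recursively constructed $L$-guides of color $c_v$ that intersect this path at deeper nodes. To deal with this, I would refine the recursion further so that $v$'s guide is not ``added on top'' of the recursive system for $L$, but rather absorbed into it: pass the pair $(v,y)$ as an extra source-target requirement to the recursive call on $L$, with $v$ treated as an additional source whose target is the fixed node $y$. If $L$ already contains a source mapping to $y$ in the recursive solution, the new guide can be obtained by extending the existing same-target guide upward to include $v$ as a further source, which does not change the chain structure of the support; otherwise, a fresh guide of color $c_v$ is added and the recursive $L$-construction is arranged to avoid color $c_v$ along the chain from the left child of $v$ down to $y$. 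The fact that the single forbidden color can be propagated along a full downward chain (and not just a single node) is the key technical content that the strengthened induction must carry.

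Once the refined induction is established, applying it at the root of $t$ with any choice of forbidden color produces the required witness function together with a $2$-color guidance system, yielding guidance width at most $2$.
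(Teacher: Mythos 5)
There is a genuine gap, and you have named it yourself: the inductive invariant ``no guide of the forbidden color $c_0$ has the root $v$ in its support'' is too weak, because the guide created at $v$ occupies the whole path down to its target $y$, not just the root of the chosen subtree. The proposed repair --- propagating a forbidden color along the entire chain from the child of $v$ down to $y$ --- is exactly the statement that would have to be proved, and your sketch neither formulates it precisely nor shows the induction closes for it. In fact, as written it does not close: if $v$ itself lies on a chain inherited from an ancestor (forbidden color $c_0$) and you route $v$ into the \emph{same} subtree that the inherited chain continues into, the recursive call must now respect two overlapping chains forbidding the two different colors, and a source sitting on the overlap cannot be guided at all. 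Making the recursion work requires the extra stipulation that every source on the currently forbidden chain is routed into the subtree \emph{not} containing the chain's continuation (this is always possible in the trivial arrangement, since the two witnesses lie in different subtrees of $x=\gca(y_1,y_2)$), so that each recursive call carries exactly one (chain, color) constraint. Your fallback move of ``extending an existing same-target guide upward to include $v$'' is also unjustified: the extension adds the path from $v$ down to that guide's topmost source to the support, and nothing rules out same-colored guides crossing that added segment.

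The paper's proof avoids colors in the invariant altogether. It processes nodes one at a time, top-down, over ancestor-closed sets $X$, and maintains the purely structural invariant that no tree edge is traversed by two guides. Then any existing guide reaching a new source $x$ must come through the parent edge of $x$, so at most one existing guide passes through $x$; hence at least one of the two subtrees of $x$ is guide-free, and the new guide for $x$ is sent into that subtree, preserving edge-disjointness. Two-colorability is then automatic: with edge-disjoint downward guides, two guides can only meet at a node that is the source of one of them, each source is met by at most one other guide (again via its parent edge), and since sources along a conflict chain strictly descend, the conflict graph is a forest, hence $2$-colorable. If you want to salvage your subtree recursion, the statement to prove is: for every subtree root $v$, every source set, every downward path $P$ starting at $v$ and every color $c_0$, there is a witness function induced by a $2$-colored system in which no $c_0$-colored guide meets $P$; with the off-chain routing rule above this induction does go through, but it is more delicate than, and not shorter than, the paper's edge-disjointness argument.
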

\begin{proof}
The guidance system is constructed in a single root-to-leaf pass.

More formally, for each set $X$ of nodes that is closed under
ancestors, we will provide a guidance system $\Pi_X$ that directs each
node that is a source node and in $X$ to some witness, either $y_1$ or $y_2$.  The guidance
system will have the additional property that no tree edge is traversed by two guides.

The guidance system is constructed by induction on the size of
$X$. The induction base, when $X$ has no nodes, is straightforward. We
now show how $\Pi_X$ should be modified when adding a single $x$ node
to $X$.  When $x$ is not a source node, then nothing needs to be done. Otherwise, suppose that $x$ is a source node, and the witness is $(y_1,y_2)$.  Since all guides in $\Pi_X$ originate in nodes from $X$, any
guide that passes through $x$ must also pass through its parent. Using
the additional assumption, we conclude that at most one guide $\pi$
from $\Pi_X$ passes through $x$. In particular, either the left subtree of $x$, which contains $y_1$,   or the 
right subtree of $x$, which contains $y_2$,  has no guide passing through it.  We create a new guide  that connects $x$ to the witness in the subtree without a guide.
\end{proof}

For arrangement~\eqref{eq:a1} the proof of Theorem~\ref{thm:bounded-width}
is thus completed.

\subsection{Arrangements~\eqref{eq:a2} and~\eqref{eq:a3}}
\label{sec:arrangm-eqref-eqref}

For the remaining arrangements, in this section we only show how 
they can be reduced to the simplified ones.
We formulate Theorem~\ref{thm:simplified-bounded-width} below, which we will
use in this section, and which follows easily from the Main Lemma
(forthcoming Lemma~\ref{l:main}) to be proven in the next section.

To state the theorem, we need a new notion. 
A guidance system in a graph tree is called 
\emph{consistent} wrt.~a given ternary query
if each of guides obeys the following uniqueness requirement: 
whenever a set $Z \subseteq X$ of source nodes is guided to the same node $y$,
then there is a pair $(y_1, y_2)$ that is a witness pair for all
nodes $Z$, with $y_1=y$ or $y_2=y$.
Roughly speaking: 
if all nodes in $Z$ agree on the witness they are guided to,
then they agree on the other witness as well.
The notion of consistency is  meaningful only
relative to a given ternary query.
Below, the consistency property will make it possible to combine two
guidance systems appropriately.

\begin{theorem}
\label{thm:simplified-bounded-width}
Every simplified regular query has witness functions of bounded guidance width.
Furthermore, a consistent guidance system always exists (of the required bounded guidance width).
\end{theorem}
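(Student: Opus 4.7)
The plan is to reduce Theorem~\ref{thm:simplified-bounded-width} to the forthcoming Main Lemma~\ref{l:main} by decomposing $\triples$ according to its finitely many accepting types.

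Because $\triples$ is simplified, it is path-based: there exist a monoid morphism $\alpha : \Sigma^* \to S$ and an accepting set $F \subseteq S\times S$ such that a triple $(x,y_1,y_2)$ with $y_1 < x < y_2$ is selected by $\triples$ iff $(\alpha(\word_t(y_1,x)), \alpha(\word_t(x,y_2))) \in F$. For each $(s_1,s_2)\in F$, let $\triples_{s_1,s_2}$ denote the subquery selecting exactly those triples whose pair of path-images equals $(s_1,s_2)$; then $\triples = \bigcup_{(s_1,s_2)\in F}\triples_{s_1,s_2}$, and each witness pair of a selected triple unambiguously belongs to a single subquery.

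The first step is to apply the Main Lemma to each subquery $\triples_{s_1,s_2}$ separately, producing for every graph tree $(t,E)$ and every set of source nodes a consistent guidance system $\Pi_{s_1,s_2}$ of bounded width that witnesses $\triples_{s_1,s_2}$. This is where the real work sits: within a single type, a guide whose source set $Z$ is directed to a common first-witness $y_1$ must admit a common second-witness $y_2$ that is a descendant of \emph{every} element of $Z$, which forces the sources in $Z$ to lie along a single root-to-leaf chain (and symmetrically with $y_1,y_2$ swapped). Accommodating this rigidity with a bounded color budget is the core combinatorial challenge and the main obstacle of the whole argument; I defer it to the Main Lemma.

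The second step is to combine the systems $\{\Pi_{s_1,s_2} : (s_1,s_2)\in F\}$ into a single consistent guidance system for $\triples$. For every source $x$ of $\triples$, fix one pair $(s_1,s_2)\in F$ admitting a witness for $x$ (such a pair exists by definition) and delegate the handling of $x$ to the restriction of $\Pi_{s_1,s_2}$ to the sources delegated to it; then tag each guide's color with its originating type $(s_1,s_2)$, yielding a combined coloring of size at most $\size{F}\cdot\max_{(s_1,s_2)\in F}\mathrm{width}(\Pi_{s_1,s_2})$. Two combined guides sharing a tagged color must come from the same component, so they inherit the non-conflict property from that $\Pi_{s_1,s_2}$, and consistency is inherited per guide: the witness pair certifying consistency of a guide inside $\Pi_{s_1,s_2}$ remains a witness pair for the larger query $\triples$ since $\triples_{s_1,s_2}\subseteq\triples$. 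Hence the combined system has bounded width and is consistent with respect to $\triples$, completing the reduction.
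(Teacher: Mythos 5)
Your proposal does not actually prove the theorem: it defers the entire content to the Main Lemma (Lemma~\ref{l:main}), but in the paper that lemma is not an independently available result --- the section entitled ``Proof of Theorem~\ref{thm:simplified-bounded-width}'' \emph{is} the proof of the Main Lemma, and the theorem is obtained from it simply as the special case where the factor $F$ is the whole tree. So the reduction you describe is essentially circular: the Main Lemma is a factor-parameterized restatement of the theorem (strengthened precisely so that an induction can be run), and everything you label ``the core combinatorial challenge'' and defer is exactly what a proof of the theorem must supply. None of the actual ideas appear in your argument: Colcombet's forward Ramseyan splits, the decomposition of the tree into factors and subfactors, the induction on factor height, the split of the source set into $X_1$, $X_2$, $X_3$ according to how many internal border edges ($m_1$, $m_2$) separate a source from its witnesses, the invariant bounding how many guides cross a border edge, the parity-bit coloring of ancestor guides in the $X_2$ case, and the use of the Ramseyan property in the $X_3$ case (where the semigroup images $\alpha(\word_t(y_1,x))$ and $\alpha(\word_t(x,y_2))$ collapse to products of three factor-local pieces, which is what makes the lexicographic one-color construction and the guide-merging claim work).

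Moreover, the one piece of structure you do add --- splitting $\triples$ into subqueries $\triples_{s_1,s_2}$ by accepting pair and recombining with type-tagged colors --- is not where the difficulty lies and is not how the paper organizes the argument: the Main Lemma is stated and proved for the full simplified query, and the splitting by semigroup values is performed only \emph{inside} its proof, in the $X_3$ case, where the set $X_3$ is partitioned according to six fixed values $s_1,\ldots,s_6\in S$ (at a cost of $\size{S}^6$ separate systems). Your combination step itself is sound as far as it goes (disjoint color palettes, and consistency with respect to $\triples_{s_1,s_2}$ implies consistency with respect to $\triples$ since $\triples_{s_1,s_2}\subseteq\triples$), but since the per-type systems you invoke are exactly what remains to be constructed, the proposal leaves the theorem unproved.
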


Now using Theorem~\ref{thm:simplified-bounded-width} we prove
Theorem~\ref{thm:bounded-width} for arrangements~\eqref{eq:a2} and~\eqref{eq:a3}.
By symmetry, we only consider the arrangement~\eqref{eq:a2}.  
We simplify the arrangement in two steps. First we claim that without loss of
generality $x$ can be assumed to be an ancestor of $y_2$ -- this may be shown
by essentially the same technique as in Lemma~\ref{lemma:decompose-gca}
hence we omit the details.
Second, we
show that $y_1$ can be assumed to be an ancestor of $x$ and $y_2$. The
arrangement~\eqref{eq:a2}, as well as its two successive
simplifications, are pictured below.
\begin{center}
  \includegraphics[scale=0.5]{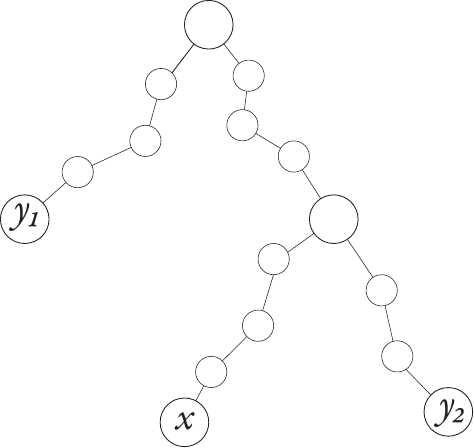} \qquad
  \includegraphics[scale=0.5]{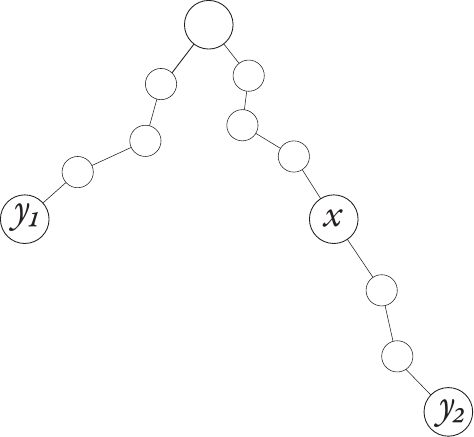} \qquad 
  \includegraphics[scale=0.5]{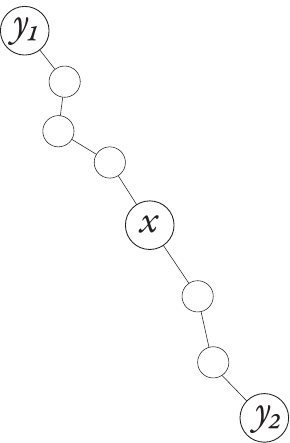}
\end{center}

Let our starting arrangement  be the middle one in the picture above,
i.e.~we assume that the first simplification has been already applied.
Without loss of generality we may assume that $y_1$ is in the left subtree, and $x$ in the right subtree
of the $\gca(x, y_1)$ node (thus we again split into two sub-cases), 
and that both $x$ and $y_1$ are not equal to $\gca(x, y_1)$.

By the considerations in Section~\ref{sec:path-based-queries}, we know
that a triple $(x,y_1,y_2)$ is selected by $\triples$ in $t$ if and only if the
images, under the morphism $\alpha$, of the three path words:
\[
\word_t(x, \gca(x, y_1)), \ 
\word_t(x, y_2), \ 
 \word_t(\gca(x, y_1), y_1),
\]
belong to a designated set $F \subseteq S^3$ of accepting tuples.

Fix $(s_1,s_2,s_3) \in F$. We define a guidance system for triples $(x,y_1,y_2)$ which satisfy 
\[
s_1= \word_t(x, \gca(x, y_1)), \ 
s_2= \word_t(x, y_2), \ 
s_3 =  \word_t(\gca(x, y_1), y_1).
\]
In general,  the guidance system will be a disjoint
union over all triples $(s_1, s_2, s_3) \in F$. 
Consider an arbitrary graph tree $(t, E)$ over which $\triples$ is evaluated, together with an arbitrary
subset $X$ of source nodes.
We aim at constructing a guidance system whose number of colors depends only on $\triples$,
that directs any source node $x \in X$ to either $y_1$ or $y_2$, for some triple $(x, y_1, y_2)$ selected
by $\triples$.
We will do it in two stages. In the first stage, the node $x$ is directed either to $y_2$, or
to $y = \gca(x, y_1)$. By Theorem~\ref{thm:simplified-bounded-width} we will be able to assume that this guidance
system is consistent. In the second stage, every $y$ node will be directed either to
an appropriate $y_1$ node, or to the $y_2$ node, using Lemma~\ref{lem:trivial-arr}.
Finally, we will compose the two guidance systems using Lemma~\ref{lemma:comp-guid-systems}.

Formally speaking, for the first stage we use the simplified query
$\sigma_{s_1,s_2}[x, y, y_2]$  that selects a triple $(x,y,y_2)$ if 

\begin{iteMize}{$\bullet$}
\item $\alpha(\word_t(x,y)) = s_1$ 
\item $\alpha(\word_t(x,y_2)) = s_2$ 
\item $y < x \leq y_2$
\item
$x$ is in the right subtree of $y$.
\end{iteMize}

The idea now is that the query $\sigma_{s_1,s_2}$ is evaluated over a modified graph tree
$(t, E_{s_3})$. The relation $E_{s_3}$ is defined as follows: $(y, y_2)$ is in $E_{s_3}$ iff
$(y_1, y_2) \in E$ for some $y_1$ such that

\begin{iteMize}{$\bullet$}
\item
$y = \gca(y_1, y_2)$,
\item
$y_1$ is in the left subtree of $y$, 
\item
$\alpha(\word_t(y, y_1)) = s_3$.
\end{iteMize}

Intuitively, the first node $y_1$ of every edge $(y_1, y_2) \in E$ is moved
to $y = \gca(y_1, y_2)$, but only if the equation $\alpha(\word_t(y, y_1)) = s_3$ holds.
Note that we  use here  the more general notion of  graph trees, rather than data trees,
as the relation $E_{s_3}$ is not an equivalence in general.
By Theorem~\ref{thm:simplified-bounded-width} we know that $\sigma_{s_1,s_2}$
has a witness function induced
by a consistent $m$-color guidance system $\Pi$, where $m$ only depends on $\sigma_{s_2, s_3}$ and does not depend
on $t$, $E$ or $X$.

Each source node $x$ is directed so far either to some $y < x$, or to $y_2 \geq x$.
Without loss of generality we may assume that colors used for target nodes of the first kind
are distinct from colors used for target nodes of the other kind.
For the second stage, consider the set $Y$ of all target nodes $y$ of $\Pi$ of the first kind,
and restrict attention to the induced sub-guidance system of $\Pi$.
Note that every such node $y \in Y$ has an associated node $y_1$ located in the left subtree of $y$
such that $\alpha(\word_t(y, y_1)) = s_3$. Moreover, due to consistency of $\Pi$,
$y$ has also an associated node $y_2$ located in the right subtree of $y$ such that
$\alpha(\word(y, y_2)) = s_1 s_2$.
Consider the set of all such triples $(y, y_1, y_2)$ and apply Lemma~\ref{lem:trivial-arr}
to obtain a 2-color guidance system that directs every $y \in Y$ to one of its two associated nodes.

Finally, using  Lemma~\ref{lemma:comp-guid-systems} we obtain a guidance system for $\triples$ of
bounded width.

\removed{
Let $\tau_{s_3, \Pi}[y, y_1, y_2]$  be a ternary query that selects
a triple $(y, y_1, y_2)$ in $(t, E)$ if
\begin{iteMize}
\item $y = \gca(y_1, y_2)$, $y_1$ is in the left subtree of $y$ and $y_2$ in the right one,
\item $y$ is one of the target nodes of $\Pi$,
\item $(y_1, y_2) \in E$ is a witness pair for \emph{all} nodes $x$ that are guided 
by $\Pi$ to $y$ (by consistency of $\Pi$, such a pair $(y_1, y_2)$ exists for any
target node $y$ of $\Pi$),
\item $\alpha(\word_t(y, y_1) = s_3$.
\end{iteMize}
Query $\tau_{s_3, \Pi}$ depends on the guidance system $\Pi$, we thus implicitly assume
that $\Pi$ is included in the labeling of a tree.
Query $\tau_{s_3, \Pi}$ is not a regular one in general, but it selects only triples in
 trivial arrangement.
 Applying Lemma~\ref{lem:trivial-arr} we get a 2-color guidance
 system that induces a witness function for $\tau_{s_3, \Pi}$.
 The two guidance systems can be then combined into one $4m$-color guidance
 system due to Lemma~\ref{lemma:comp-guid-systems}.
This guidance system induces a witness function for $\triples$ and $X$ in $(t, E)$.
}
 
  As the graph tree $(t, E)$ and the subset $X$ were chosen arbitrarily, 
this completes the proof  of Theorem~\ref{thm:bounded-width}.


\section{Proof of Theorem~\ref{thm:simplified-bounded-width}}
\label{sec:diff-arrang-trees}

Fix in this section a simplified regular query $\triples[x,y_1,y_2]$, i.e., satisfying
the conditions~\eqref{en:simplify1} and~\eqref{en:simplify2} from
Section~\ref{sec:simplifying-delta}.
Since  the query is regular, the dependency stated in item~\eqref{en:simplify2} is a regular dependency. 
We may thus assume a semigroup morphism $\alpha : \Sigma^* \to S$ recognizing $\triples$, which maps each word to an element of a finite semigroup $S$. 
Whether or not a triple $(x,y_1,y_2)$ is selected by $\triples$ depends only on the images 
\begin{eqnarray}
\label{e:infixes}
  s_1 &=& \alpha(\word_t(y_1, x)) \in S \\
  s_2 &=& \alpha(\word_t(x, y_2)) \in S .
\end{eqnarray}
In other words, there is a set of \emph{accepting pairs} $F \subseteq
S^2$ such that $\triples(w)$ is the set of triples $(x,y_1,y_2)$ with
$(s_1, s_2) \in F$.
We fix the morphism $\alpha$ for the rest of this section.

We distinguish two types of edges in a graph tree $(t,E)$. The \emph{tree edges} are edges that connect parents  with children, as well as a dummy edge going into the root of the tree and dummy edges going out of the leaves. The \emph{class edges} are the edges from $E$.
We order tree edges by the ancestor relation $\leq$, according to the positions
in the tree, with the dummy edges coming as the least one and the maximal ones, 
respectively. For two tree edges $e \leq f$ in a tree $t$, we write
$\word_t(e, f)$ for the word labeling $t$ on the path that begins in the target of $e$
and ends in the source of $f$.  In particular $\word_t(e, e) = \epsilon$.

\sredniparagraph{Forward Ramseyan splits} The key tool in our proof is a forward Ramseyan split, as defined by Colcombet in~\cite{DBLP:conf/icalp/Colcombet07}.
Let $t$ be a tree labeled with $\Sigma$.
A \emph{split of height $n$} in $t$ is a function $\sigma$ that
maps each tree edge to a number in $\set{1, \ldots, n}$.  
We say that two tree edges $e < f$ are \emph{neighbors} with
respect to a split $\sigma$, if $\sigma$ assigns the same number to
$e$ and $f$, and all tree edges between $e$ and $f$ are mapped to at most
$\sigma(e)$.   A split $\sigma$ is called \emph{forward Ramseyan} with
respect to a morphism $\alpha$ if
\begin{equation}
\label{e:forwardRams}
\alpha(\word_t(e, f)) = \alpha(\word_t(e, g))
\end{equation}
holds for every three pairwise neighboring tree edges $e < f < g$.
The following theorem was shown  in~\cite{DBLP:conf/icalp/Colcombet07}.   

\begin{theorem}\label{thm:ramseyan-split}
  Fix a morphism $\alpha : \Sigma^* \to \monoid$. 
  Every tree $t$ has a forward Ramseyan split of height $\Oo(|S|)$. 
 Furthermore, the split is top-down deterministic in
 the sense that all tree edges from a node to its children are assigned the same number in the split.
\end{theorem}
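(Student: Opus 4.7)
The plan is to reduce the tree case to a top-down refinement of Simon's factorization forest theorem for words, exploiting the fact that the forward Ramsey condition~\eqref{e:forwardRams} is strictly weaker than the symmetric one. First I would recall the word case: for any morphism $\alpha : \Sigma^* \to \monoid$, every finite word admits a Ramseyan split of height $\Oo(|S|)$. The standard induction on $|S|$ picks an idempotent $e$, isolates the maximal infixes mapping into $eSe$, and recurses on the smaller local monoid $eSe$; the height grows by one per induction step. For trees, the idea is to apply this construction branch-by-branch, but with a global synchronization mechanism across siblings so that the resulting split is top-down deterministic.

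The main construction proceeds by induction on $|S|$. At the root, I would consider the semigroup elements $\alpha(\word_t(\varepsilon, u))$ attached to each descendant $u$. Certain tree edges are marked as \emph{skeleton} edges so that, descending along any branch, between two consecutive skeleton edges $e < f$ the element $\alpha(\word_t(e, f))$ lies in a designated local monoid $eSe$ for a fixed idempotent $e$. Since $\monoid$ has only $\Oo(|S|)$ $\mathcal{J}$-classes, a constant number of skeleton levels suffice to reduce to a strictly smaller local monoid. I give every skeleton edge the top number $n$ of the split, making the same choice uniformly for all children of each node; this is where top-down determinism is enforced. Below each skeleton edge the tree breaks into subtrees on which I recurse using the morphism restricted to the smaller semigroup $eSe$, yielding a split of height $n-1$ by the induction hypothesis.

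The forward Ramsey condition~\eqref{e:forwardRams} is then checked at the top level as follows. For neighbors $e < f < g$, both $\alpha(\word_t(e, f))$ and $\alpha(\word_t(e, g))$ live in $eSe$ and are built by absorption from the stabilized idempotent, hence they coincide. Crucially, the symmetric identity $\alpha(\word_t(f, g)) = \alpha(\word_t(e, f))$ is \emph{not} required, which is exactly what allows a top-down deterministic construction: the skeleton edges may be chosen based on the forward view from $e$, without ever needing to agree with the forward view from $f$.

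The main obstacle, and the technical heart of Colcombet's argument, is synchronizing the skeleton selection across siblings whose subtrees may look very different. The skeleton choice must be based on data that is uniform across all children of a given node; I would use the set of semigroup elements reachable from that node in the forward direction, which is an invariant of the node itself rather than of how far one descends. Once this uniform skeleton is in place, the remainder is a routine layering of Simon-style factorizations level by level, which is why the final height is $\Oo(|S|)$ rather than a tower in $|S|$.
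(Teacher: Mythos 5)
The paper does not prove this statement at all: Theorem~\ref{thm:ramseyan-split} is imported verbatim from Colcombet's ICALP~2007 paper on factorization forests for trees, and the authors only use it as a black box in Section~\ref{sec:diff-arrang-trees}. So there is no in-paper argument to match your sketch against; the relevant comparison is with Colcombet's proof, which your outline gestures at but does not reconstruct.

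As a proof attempt, your sketch has a genuine gap at the decisive step. The forward Ramseyan condition~\eqref{e:forwardRams} for pairwise neighboring edges $e<f<g$ amounts to $\alpha(\word_t(e,f)) = \alpha(\word_t(e,f))\cdot\alpha(\word_t(f,g))$, i.e.\ right absorption of the later block by the earlier value. Your justification --- that both $\alpha(\word_t(e,f))$ and $\alpha(\word_t(e,g))$ ``live in $eSe$ and are built by absorption from the stabilized idempotent, hence they coincide'' --- is a non sequitur: two elements of a local monoid $eSe$ need not be equal, and nothing in your skeleton construction forces the blocks between consecutive same-level edges to evaluate to one fixed idempotent (which is what makes the analogous step work in Simon's symmetric factorization). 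Arranging this absorption property \emph{deterministically}, with the split value of an edge depending only on information available uniformly at its source node so that sibling edges agree, is exactly the technical heart of Colcombet's argument, which proceeds via Green's relations (tracking the $\mathcal{R}$-class of the product read down each branch and exploiting that the $\mathcal{R}$-class sequence is monotone and changes at most $|S|$ times), not via a naive recursion into $eSe$. Relatedly, your claims that ``a constant number of skeleton levels suffice to reduce to a strictly smaller local monoid'' and that the sibling synchronization can be done using ``the set of semigroup elements reachable from that node in the forward direction'' are left unsubstantiated; the latter set is not obviously compatible with the neighbor condition, and without a concrete invariant you cannot conclude the $\Oo(|S|)$ height bound. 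In short, the asymmetry of the forward condition is indeed the right intuition for why a deterministic split exists, but the absorption identity and the deterministic skeleton selection are precisely what still needs to be proved; as it stands the argument would also be accepted by the (false) symmetric analogue of the claim, which signals that the essential mechanism is missing.
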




From now on wlog we consider only \emph{complete} binary trees, where each non-leaf node
has precisely two children.

\sredniparagraph{Factors.} 
Two comparable wrt.~$\leq$ (i.e., belonging to one path) tree edges $e$ and $f$ are called \emph{visible}
if all tree edges between $e$ and $f$ are mapped by the split to values
strictly smaller than $\sigma(e)$ and $\sigma(f)$.  Visible pairs of tree edges
naturally determine a nested factorization of $t$ in the following
way. 

A \emph{pre-factor} in a tree $t$ is a connected set of nodes
(connected by tree edges) such that if a node $x$
is in the pre-factor, then either all children of $x$ are in the
pre-factor, or none of them.  Each pre-factor of $t$ has a root and some leaves (maximal nodes wrt.~$\leq$), and inherits its edges from
$t$. In the definitions below, we talk about tree edges and not class edges.
We distinguish \emph{internal} edges of a pre-factor,
connecting two nodes in that pre-factor,
and \emph{external} edges connecting the root or the leaves with some node outside the pre-factor.
This includes the tree edge leading to the root of the pre-factor 
(called the \emph{root edge} of the pre-factor) and the tree edges going out of 
the leaves (called the \emph{leaf} edges of the pre-factor).
Note that external edges may be either proper tree edges, or dummy edges.
As the split $\sigma$ is assumed to be deterministic, all tree edges leaving a given
leaf of a pre-factor are assigned the same number. A pre-factor $F$ is
called a \emph{factor} in $t$ if it respects the split $\sigma$ in the
following way: the root edge is visible from each of the leaf edges.  This
means that on each (shortest) path in a factor from its root edge to a leaf edge,
numbers assigned by $\sigma$ to the internal edges on that path
are strictly smaller than those assigned to the two external edges.
By the height of a factor we mean the greatest number assigned to an
internal edge, or $0$ if no such edge exists (the case of one-node
pre-factor).  Additionally, the whole tree $t$ is also a factor if,
wlog, we assume that the root dummy edge is visible from all leaf dummy edges; 
its height at most equals the height of $\sigma$.

A \emph{subfactor} of a factor $F$ is any factor $G
\subsetneq F$ that is maximal with respect to inclusion. By the
definition of factor, we get:
\begin{claim}\label{claim:disj-subf}
Every two different subfactors of $F$ are 
disjoint (have disjoint sets of nodes, but possibly share an external edge).
\end{claim}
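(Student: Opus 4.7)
My plan is to identify the sub-factors of $F$ with the connected components of $F$ obtained by removing all its internal edges whose $\sigma$-value equals $h := \text{height}(F)$. Once this identification is in hand, the disjointness claim is immediate, since the components partition the nodes of $F$ and can share only the removed $h$-edges---each such edge becoming a leaf edge of one adjacent component and the root edge of the other.

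First I would show that every such component $C$ is itself a proper factor of $F$. Pre-factoriality at any $x \in C$ would follow from the top-down determinism of $\sigma$ (Theorem~\ref{thm:ramseyan-split}): all outgoing edges of $x$ share a single $\sigma$-value, either $=h$ (both children land in another component) or $<h$ (both children lie in $C$), so $C$ inherits the all-or-none property from $F$. For the factor condition, the internal edges of $C$ have $\sigma$-value strictly less than $h$ by construction, whereas every external edge of $C$ is either one of the cut $h$-edges (of $\sigma$-value $h$) or one of $F$'s own external edges (of $\sigma$-value $>h$, since $F$ has height $h$); hence on every path from the root edge of $C$ to a leaf edge the internal edges are strictly smaller than both external ones.

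Next, I would argue that every sub-factor $G$ of $F$ is contained entirely in a single component, and hence, by maximality, equals that component. The key point is that $G$ cannot contain any $h$-edge internally: otherwise the factor condition would force the root edge of $G$ to have $\sigma$-value strictly greater than $h$; but since $G \subseteq F$, this root edge would either be internal to $F$ (so of $\sigma$-value at most $h$, contradicting $>h$) or coincide with the root edge of $F$ itself; running the analogous analysis at each leaf of $G$ on paths that pass through an $h$-edge, together with pre-factoriality and the determinism of $\sigma$, would then force $G$ to span all of $F$, contradicting its properness.

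The main obstacle will be this second direction, where the interplay between the factor condition on $G$ and its position inside $F$ must be unfolded carefully. The remaining arguments are routine consequences of the definitions together with the determinism of the split.
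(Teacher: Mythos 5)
Your route is genuinely different from the paper's: the paper assumes two overlapping subfactors, places the two roots and two leaves on one path in the order $r_2 \leq r_1 \leq l_2 < l_1$, and derives a contradiction with visibility, whereas you aim at the stronger structural statement that the subfactors are exactly the connected components obtained by cutting the internal edges of value $h$. The gap sits exactly where you flag it, and it is real. Containing an internal $h$-edge only forces the root edge of $G$ to coincide with the root edge of $F$, and forces those leaf edges of $G$ that lie \emph{below} an $h$-edge to be leaf edges of $F$; on branches of $G$ that avoid all $h$-edges, the factor condition as literally stated is only a per-path condition, so $G$ may legitimately stop early at an edge whose value merely exceeds the (small) internal values on its own root-to-leaf path. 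Concretely, take $F$ with root $r$ (root edge valued $10$), children $a,b$ reached by edges valued $2$; let the two edges below $a$ have value $5$ and lead to leaves of $F$ with leaf edges valued $6$; below $b$ continue with edges valued $3$ and then $1$, ending in leaves with leaf edges valued $6$. Then $h=5$, and the set consisting of $r$, $b$, $a$ and the two children of $a$ satisfies the per-path factor condition (on the $b$-branch the leaf edge has value $3$, exceeding the only internal value $2$ there), is maximal among proper factors, contains the two $5$-edges internally, is not one of your components, and overlaps the component of $r$, which is also a maximal proper factor. So with the per-path reading your characterization fails, and the decisive step ``$G$ must span all of $F$'' does not go through, because only the leaves under the $h$-edges are pushed out to the boundary of $F$.

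What is missing is the stronger property that in a factor \emph{every} external edge exceeds \emph{every} internal edge (equivalently, the root edge and all leaf edges exceed the height), not just the internal edges on their own path. This is evidently the intended reading: the paper's reasoning here, and explicitly in the proof of Claim~\ref{claim:visible}, treats external border edges as dominating all the internal values in question, and note that your own assertion that $F$'s leaf edges have value $>h$ ``since $F$ has height $h$'' already relies on it. Once that domination is available (either built into the definition or proved beforehand), your argument closes: a proper factor containing an internal $h$-edge would need all of its leaf edges to exceed $h$, hence to be leaf edges of $F$, and together with the shared root and the all-or-none condition this forces $G=F$; then every subfactor avoids $h$-edges, lies in a single component, and equals it by maximality, which gives disjointness and, as a bonus, Claim~\ref{claim:visible} as well. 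So the plan is attractive and more informative than the paper's short contradiction argument, but as written it proves the key exclusion step only for the branches through an $h$-edge, and under the stated per-path factor condition that is not enough.
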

\begin{proof}
Indeed, assume two non-disjoint different subfactors $F_1, F_2$ of some factor $F$.
The root node of one of them, say the root node $r_1$ of $F_1$, is necessarily contained in the other.
As $F_1$ is not included in $F_2$, there must be a leaf node $l_1$ of $F_1$ not contained
in $F_2$; the path that leads to that leaf node passes though a leaf node $l_2$ of
$F_2$. If we denote by $r_2$ the root node of $F_2$ we know that the four nodes
are located on one path in the following order:
\[
r_2 \leq r_1 \leq l_2 < l_1 .
\]
This arrangement of the nodes is in a clear contradiction with the assumption that
the tree edge incoming to $r_i$ is visible from (all) tree edges outgoing from $l_i$,
for $i = 1, 2$.
\end{proof}
Hence each factor $F$ is the disjoint union of its subfactors.
We say a subfactor $G$ is an \emph{ancestor} of a subfactor $H$ if
their roots are so related. Likewise we talk about a subfactor being a
\emph{child} or \emph{parent} of some other subfactor.

A factor of height 2 together with its decomposition into subfactors is pictured below.

\begin{center}
  \includegraphics[scale=1]{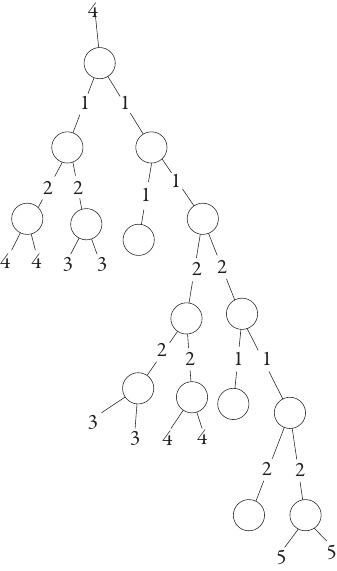}
  \includegraphics[scale=1]{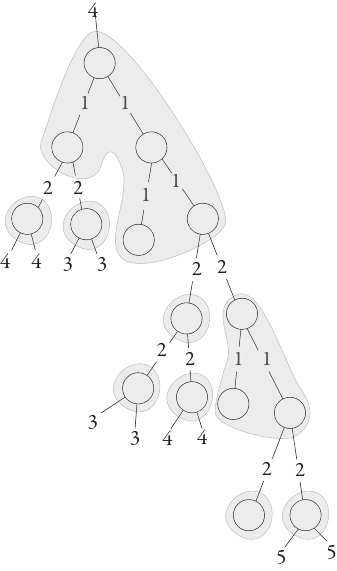}
\end{center}

Our proof of Theorem~\ref{thm:simplified-bounded-width} is based on the Main Lemma stated below. The lemma is proved by induction on the height of factors.
To state the lemma recall the notion of consistent guidance system introduced in Section~\ref{sec:arrangm-eqref-eqref}.

\begin{lemma}[Main Lemma]
\label{l:main}
  Fix a factor height $h$.
  There is a bound $n \in \Nat$, depending only on $\triples$ and $h$,
   such that for every graph tree
  $(t,E)$, every factor $F$ in $t$ of height $h$, and every set $X \subseteq F$ of
  source nodes, there is a witness function for $\triples$ and $X$ in $(t, E)$
  induced by a consistent guidance system using at most $n$ colors. 
  Furthermore, this witness function only points to
  descendants of the root of $F$.
\end{lemma}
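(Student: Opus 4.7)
The plan is to proceed by induction on $h$.

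For the base case $h=0$, the factor $F$ has no internal tree edges, so it consists of a single node and $|X|\le 1$. If $X=\{x\}$ then $x$ is a source, so some witness pair $(y_1,y_2)\in E$ with $y_1<x<y_2$ exists, and I use a single guide from $x$ to $y_2$; the target is a descendant of $x=\mathrm{root}(F)$, and consistency is vacuous. So $n(0)=1$ suffices.

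For the inductive step, fix $F$ of height $h$ and decompose it into subfactors $F_1,\ldots,F_k$ of heights at most $h-1$, pairwise disjoint by Claim~\ref{claim:disj-subf}. Since $y_2$ is always a descendant of $x$, it is automatically a descendant of $\mathrm{root}(F)$; the real difficulty is to keep the number of colors bounded independently of $k$ and of the shape of $t$. Using the morphism $\alpha:\Sigma^*\to S$ from Section~\ref{sec:path-based-queries}, each source $x\in F_i$ admits a ``flavour'' $(s_1,s_2,\ell)$ of a chosen witness pair, where $s_1=\alpha(\word_t(y_1,x))$, $s_2=\alpha(\word_t(x,y_2))$ and $\ell$ records whether $y_2\in F_i$ or else lies in a proper descendant subfactor of $F_i$ (in the subfactor tree). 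The number of flavours depends only on $|S|$, hence only on $\triples$.

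For a flavour with $y_2\in F_i$, the inductive hypothesis applied to $F_i$ (with $E$ restricted to pairs whose $y_2$-coordinate lies in $F_i$) yields a consistent guidance system of width at most $n(h-1)$, whose guides stay inside $F_i$ since both endpoints do. For a flavour whose $y_2$ lies in a descendant subfactor $F_j$, I would exploit the Ramseyan property on the chain of height-$h$ edges crossed from $F_i$ to $F_j$: along such a chain the $\alpha$-image of the word between any two consecutive neighbour edges is a single idempotent $u\in S$, so the inter-subfactor contribution to $s_2$ collapses to $u$, independent of the number of intervening subfactors. This reduces each such source to two subproblems at height strictly below $h$: routing $x$ inside $F_i$ to a designated exit leaf of prescribed transition type, and routing from a designated entry node inside $F_j$ to the eventual $y_2$. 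Consistency of the intra-$F_i$ guidance provided by the induction is crucial here: whenever several sources in $F_i$ are routed to the same exit leaf, consistency forces them to share a common continuation, so the downward continuation needs only one inter-subfactor guide per exit leaf rather than one per source.

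Combining the flavour-wise guidance systems within each subfactor by disjoint unions (at a constant-factor cost depending only on $|S|$) and then stitching the intra- and inter-subfactor guidances via Lemma~\ref{lemma:comp-guid-systems} yields a consistent guidance system of width $n(h)\le C\cdot n(h-1)$, where the constant $C$ depends only on $\triples$. The main obstacle is precisely this inter-subfactor step: without the Ramseyan idempotency, the number of distinct inter-subfactor ``transition types'' would depend on the depth of the subfactor tree, and without the consistency property of the inductive guidance system, continuations at common exit leaves could not be aggregated; either failure would make the color count grow with the shape of $t$. Together, Ramseyan idempotency and consistency let the induction close with a bound depending only on $\triples$ and $h$.
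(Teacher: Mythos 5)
There is a genuine gap: your inductive step routes \emph{every} source to its second witness $y_2$ (your ``flavours'' only record where $y_2$ sits), and you never create guides pointing upward to $y_1$. But a bounded-width witness function that always uses the second witness does not exist in general --- the paper's own motivating example (and its symmetric variant) shows that insisting on connecting each source to $y_2$ can force unboundedly many colors; the whole content of the Main Lemma is the freedom to switch some sources to $y_1$. Concretely, the sources your argument cannot handle are those whose first witness is local (in the same subfactor or its parent, i.e.\ the path from $y_1$ to $x$ crosses at most one internal border edge) while $y_2$ lies at least two internal border edges below. For such a source, the Ramseyan collapse controls only the $y_2$-side word; the $y_1$-side word is a specific, uncontrolled path. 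Hence two such sources heading through the same border edge toward different targets cannot be merged --- $(y_1,y_2)$ of one need not be a witness pair for the other, since the interchange argument via \eqref{eq:forwApplied} needs \emph{both} sides to cross at least two internal border edges. Without merging, the number of downward guides crowding a single border edge grows with the tree, and ``one guide per exit leaf'' does not bound the color count, because a deep subfactor's root edge can be crossed by transit guides accumulated from many ancestor levels. The paper's proof splits $X$ into three parts by $(m_1,m_2)$, uses the Ramseyan interchange only when $m_1,m_2\ge 2$ (maintaining an invariant of at most two guides per border edge), and resolves exactly the problematic case $m_1\le 1$, $m_2\ge 2$ by rerouting those sources \emph{upward} to $y_1$ via ancestor guides, with a dedicated coloring by triples (transit color, starting color, parity bit) costing $4n^2$ extra colors --- a mechanism entirely absent from your proposal.

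A secondary misuse: you invoke consistency to claim that sources routed to a common exit leaf ``share a common continuation.'' Consistency only guarantees that sources guided to the \emph{same target node} share a witness pair; it does not let you merge guides with distinct targets, nor does it bound how many distinct targets (hence guides) pass through one edge. In the paper, consistency is what makes the ancestor-guide rerouting and the two-stage composition in Section~\ref{sec:arrangm-eqref-eqref} possible, while the merging of downward guides is done by the Ramseyan argument, not by consistency. So the inter-subfactor step, which you yourself identify as the main obstacle, is not closed by the tools you cite, and the induction does not go through as written.
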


The proof of the lemma is by induction on the height $h$.
The number of colors $n$ will depend on $h$
and the size of the monoid $S$ recognizing the query. It will
not depend on $t$. When going from height $h$ to height $h+1$, there
will be a quadratic blowup in the number of colors. Therefore, $n$
will be doubly exponential in the height of $F$. 

Since the witness function will be induced by a guidance system, the
last assumption in Lemma~\ref{l:main} could be
restated as saying that no guide passes through the root edge of $F$.
Theorem~\ref{thm:simplified-bounded-width} is a special case of the Main
Lemma when $F$ is the whole tree.

The base case when $h = 0$, and hence the factor $F$ has one or no nodes, is easy ($1$
color, going downwards, is sufficient).  For the induction step, fix a
factor $F$, and assume that there is a bound $n$ sufficient for any
factor of smaller height than $F$, which includes all subfactors of $F$.  
Below, subfactors of $F$ are simply called \emph{subfactors}, 
without explicitly referring to $F$.

A tree edge of $F$ that is an external edge of one if its subfactors is
called a \emph{border edge}.  In particular each external edge of $F$
is a border edge.  
Special care will be paid in our proof to internal (i.e.~not external)
border edges, i.e., the edges that connect one subfactor to another.

\begin{claim}\label{claim:visible}
If two internal border edges in a factor are comparable by the ancestor relation $\leq$
then they are assigned the same value by the split.
\end{claim}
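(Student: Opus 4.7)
The plan is to use maximality of the subfactor structure combined with the visibility property of factors.

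First I would reduce to the case of two border edges $e < f$ that are consecutive on the path from $e$ to $f$, meaning no further internal border edge lies strictly between them. For general comparable $e < f$, one chains equalities across all intervening border edges on the path. Under this reduction, $e$ and $f$ bound a single subfactor $G$ of $F$ lying on the path, with $e$ the root edge of $G$ and $f$ one of its leaf edges (this uses Claim~\ref{claim:disj-subf} to guarantee that the path from $e$ to $f$ passes through a well-defined chain of subfactors). By the factor property of $G$, every internal edge of $G$ on the $e$-to-$f$ path has $\sigma$ value strictly less than both $\sigma(e)$ and $\sigma(f)$.

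Suppose for contradiction that $\sigma(e) \ne \sigma(f)$; by symmetry assume $\sigma(e) < \sigma(f)$. Let $G^{\uparrow}$ denote the subfactor of $F$ immediately above $e$, so that $e$ is a leaf edge of $G^{\uparrow}$. I would show that absorbing $G^{\uparrow}$ into $G$, with $e$ becoming internal, yields a pre-factor $G^{\uparrow} \cup G \cup \{e\}$ that is actually a factor, thereby contradicting the maximality of $G$ as a subfactor of $F$. Visibility on paths lying entirely inside $G^{\uparrow}$ alone or inside $G$ alone is inherited from the factor properties of $G^{\uparrow}$ and $G$; the nontrivial new paths are those from the root edge of $G^{\uparrow}$ through $e$ down to leaf edges of $G$. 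Along such a path, the internal edges of the merged pre-factor have maximum $\sigma$ value equal to $\sigma(e)$, bounded by visibility in $G^{\uparrow}$ and in $G$; and for the target $f$ this maximum is strictly below $\sigma(f)$.

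The main obstacle is handling the remaining external edges of the merged pre-factor, namely the root edge of $G^{\uparrow}$ and the other leaf edges of $G$, each of which must strictly exceed $\sigma(e)$ for the pre-factor to be a factor. Here one invokes the maximality principle once more: if some such external edge had $\sigma$ value at most $\sigma(e)$, then $G$ (respectively $G^{\uparrow}$) could already have been extended through that edge alone while preserving the visibility property, contradicting its own maximality directly; otherwise one iterates the extension upward along $F$ until reaching the root edge of $F$ itself, which strictly exceeds every internal edge of $F$ and in particular $\sigma(e)$, thereby supplying a valid extension and yielding the required contradiction. The symmetric case $\sigma(e) > \sigma(f)$ is handled by absorbing the subfactor directly below $f$ instead of the one above $e$.
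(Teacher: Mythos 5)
Your reduction to consecutive internal border edges $e<f$, with $e$ the root edge and $f$ a leaf edge of a single subfactor $G$, is fine and agrees with the paper. The gap is in the absorption step. For $G^{\uparrow}\cup G$ (with $e$ made internal) to be a factor you need \emph{every} other external edge of the merged pre-factor --- the root edge $r$ of $G^{\uparrow}$ and every leaf edge $\ell\neq f$ of $G$ --- to satisfy $\sigma(\ell)>\sigma(e)$ and $\sigma(r)>\sigma(e)$, and your justification of this is not valid. You claim that if some such edge had value at most $\sigma(e)$, then $G$ (resp.\ $G^{\uparrow}$) ``could already have been extended through that edge alone while preserving the visibility property, contradicting its own maximality.'' This is false in the equality case: if $\sigma(\ell)=\sigma(e)$, any extension of $G$ through $\ell$ keeps $e$ as root edge and makes $\ell$ internal, so visibility of the root edge from the new leaf edges would require $\sigma(\ell)<\sigma(e)$ --- impossible --- hence no larger factor exists and maximality of $G$ is not contradicted. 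This is not a marginal case: the content of Claim~\ref{claim:visible} is precisely that the border edges surrounding a subfactor typically carry the \emph{same} value, so a maximal subfactor routinely has $\sigma(\ell)=\sigma(e)=\sigma(r)$; your dichotomy would wrongly declare such a $G$ non-maximal. Even when $\sigma(\ell)<\sigma(e)$ strictly, ``extending through $\ell$ alone'' is circular as stated: the extension is a factor only if the leaf edges of the absorbed subfactor in turn dominate the edges that became internal, which is the same unproved assertion one level down (it can be repaired via the maximal factor hanging below $e$ with threshold $\sigma(e)$, using the top-down determinism of the split, but you never invoke this, and no such repair exists for the equality case). Finally, ``iterate the extension upward to the root edge of $F$'' only addresses $\sigma(r)\le\sigma(e)$; every subfactor absorbed on the way up contributes fresh leaf edges with exactly the same problem, no termination or correctness argument is given, and greedy absorption may simply swallow all of $F$, which is not a proper subfactor and yields no contradiction.

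The paper argues differently, and never needs to control the other leaf edges of $G$. Say $\sigma(e)<\sigma(f)$ (the case $\sigma(f)<\sigma(e)$ is symmetric, walking downward from $f$). Walking upward from $e$ through the successive border edges, one reaches a first border edge $f'<e$ whose value strictly exceeds $\sigma(e)$; it exists because the external border edges of $F$ dominate all internal edges of $F$. All edges strictly between $f'$ and $f$ then have value below $\min(\sigma(f'),\sigma(f))$, i.e.\ $f$ is visible from $f'$ ``over'' $e$, and consequently $e$ lies as an internal edge inside a factor properly contained in $F$ (the maximal factor hanging below $f'$, well defined thanks to the top-down determinism of the split), contradicting the fact that a border edge is internal to no subfactor. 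If you want to keep your merging strategy, the maximality appeal must be replaced by an argument of this visibility type; as written, the decisive step is unsupported.
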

\begin{proof}
Assume two internal border edges $e < f$, with different values assigned by the split,
such that no other internal border edge is located on the (shortest) path from $e$ to $f$. 
As values assigned to all external border edges are strictly larger, 
one of $e, f$ is visible from some (possibly external) border edge ``over'' the other.
That is, either $e$ is visible from some $e' > f$, or $f$ is visible from some $f' < e$. 
In both cases, one of $e$, $f$ is an internal edge of some subfactor -- a contradiction.
\end{proof}

We do a case distinction, regarding the
number of internal border edges on the paths from a source to witness
nodes.  For a node $x \in X$ and a witness $(y_1,y_2)$ we define two
numbers $m_1,m_2$. Let $m_1$ be the number of internal
border edges on the path between $y_1$ and $x$, and let $m_2$ be the
number of internal border edges on the path between $x$ and $y_2$. For
technical convenience, we deliberately choose not to count external
border edges.  We divide the set $X$ into three parts:
\begin{enumerate}
\item[$X_1$\ \ ] Nodes $x \in X$ that have a witness with $m_2 \le 1$.
\item[$X_2$\ \ ] Nodes $x \in X$ that have a witness with $m_1 \le 1$ and $m_2 \ge 2$.
\label{en:ancestor}
\item[$X_3$\ \ ] Nodes $x \in X$ that have a witness with $m_1,m_2 \ge 2$.
\end{enumerate}
We prove the Main Lemma for each of the three parts separately. Next,
we combine the three guidance systems into a single guidance system.
Our construction will yield two kinds of guides:
the \emph{ancestor guides} pointing to the first witness and thus going up a tree;
and \emph{descendant guides} pointing to the second witness, and thus going down the tree.
Interestingly, ancestor guides will be only created in case of nodes from $X_2$.    
All the guides will satisfy the consistency condition required in 
Lemma~\ref{l:main}.

\paragraph*{\bf Nodes from $X_1$, i.e.~nodes  that have a witness with $m_2 \le 1$.} 
Consider a subfactor $G$ of $F$.
In this case,  each node $x \in X_1 \cap G$ has a descendant witness $y_2$ that is either in $G$, or in a  child subfactor of $G$, or perhaps outside $F$. Apply the induction assumption to  $G$, producing a guidance system $\Pi_G$ with at most $n$
colors.  Since the Main Lemma requires the guidance system to point to
descendants of the factor's root, and $m_2 \le 1$, we infer that inside $F$ the
guides of $\Pi_G$ can only intersect $G$ and its child subfactors, and
no other subfactors (it is possible that the guides leave the factor
$F$, though).  Therefore, all the guidance systems $\Pi_G$, for all subfactors $G$ of $F$, can be
combined into a single guidance system with at most $2n$ colors, used
alternatingly for even and odd depths.

\paragraph*{\bf Nodes from $X_2$, i.e.~nodes   that have a witness with $m_1 \le 1$ and $m_2
  \ge 2$.} 
In this case, for each node $x \in X_2$ there is an ancestor witness
$y_1$ that is either in the subfactor of $x$, in the parent subfactor,
or outside $F$.
Note that the latter is possible only when $x$ belongs 
either to the root subfactor of $F$, or to some of its child subfactors;
denote this set of subfactors by $\Gg_0$.
We will construct the guidance system in a step-by-step manner, for all
subfactors, according to the ancestor ordering.

Formally speaking, consider a family $\Gg$ of subfactors that is closed
under ancestors and includes $\Gg_0$.   
We provide a guidance system $\Pi_\Gg$ of $4n^2+3n$
colors that provides witnesses for all nodes of $X$ belonging to the
subfactors in $\Gg$.
The construction of $\Pi_\Gg$ is by induction on the number of
subfactors in $\Gg$. 

The induction base is when $\Gg$ equals $\Gg_0$.
For each subfactor $G \in \Gg_0$, we apply the  Main Lemma, for the smaller height, to $G$ and nodes from $X_2$
that belong to $G$, yielding an $n$-color guidance system. 
We combine these guidance systems into $\Pi_{\Gg_0}$ as follows: 
use one set of $n$ colors for the root subfactor, and another set of $n$ colors 
for all the child subfactors of the root.

   For the induction step, suppose that we have already
constructed $\Pi_\Gg$ for $\Gg$, and that $G \not \in \Gg$ is a
subfactor whose parent is in $\Gg$.  Consider the guides of $\Pi_\Gg$
that pass through the root edge of $G$.  We apply two
distinctions to these guides. First, we use the name \emph{parent
  guides}, for the guides that originate in the parent subfactor of $G$,
and the name \emph{far guides} for the other guides. Second, we use
the name \emph{ending guides} for the guides whose target is in $G$
and the name \emph{transit guides} for the other guides, which
continue into a child subfactor of $G$, or even exit $F$. Altogether, there are four
possibilities: parent transit guides, far ending guides, etc. We
assume additionally that there are at most $n$ parent guides and at
most $2n$ far and parent guides altogether, 
and hence at most $2n$ guides entering $G$. This additional invariant is satisfied by the induction base, and it will be preserved through the construction.


Apply  the induction assumption of the Main Lemma to $G$ and nodes from $X_2$
that belong to $G$, yielding
a guidance system $\Pi$ with $n$ fresh colors. We use the name
\emph{starting guides} for the guides of $\Pi$.  We want to combine
$\Pi_\Gg$ with $\Pi$ in such a way that the resulting guidance system
still uses at most $4n^2+3n$ colors,  like $\Pi_\Gg$, and satisfies the additional invariant. If we were
to simply take the two systems together, we might end up with a leaf edge of $G$ which is traversed both by starting and
transit guides, which could exceed the bound $2n$ on guides passing
through border edges.

We solve this problem as follows. Consider the  leaf edges of
$G$ that are traversed by the far transit guides. There are at most
$2n$ such edges by our invariant assumption.  We will remove all starting guides
that pass any of these edges, and find other witnesses for nodes that use these
starting guides.   
This guarantees that the invariant condition is recovered: 
at most $2n$ guides passes through any  leaf edge of $G$,
and at most $n$ of them are starting guides. These other witnesses will be ancestors. This explains why the induction  starts with $\Gg$ containing the root subfactor and its children, since these are the subfactors that may have ancestor witnesses outside the whole factor $F$ (recall that passing through the root edge is not counted in $m_1$). The statement of the Main Lemma does not allow guides that pass through the root edge of $F$.

The removing of starting guides proceeds as follows.
Let $e$ be a leaf edge of
$G$ that is traversed by a far transit guide, which has color $j$
in the guidance system $\Pi_\Gg$. Let $\pi$ be a starting guide, which
has color $i$ in $\Pi$, that also traverses $e$, with $y_2$ its target
node.  By the consistency property of $\pi$, there is some $y_1$ such that
$(y_1,y_2) \in E$ is a witness pair for all source nodes of $\pi$. Note that by
assumption on $m_1 \le 1$, the node $y_1$ is either in the subfactor
$G$ or its parent. We create an ancestor guide with a fresh color that
connects all the source nodes of $\pi$ to $y_1$.  The color of this
guide, which we call an \emph{ancestor color}, will take into account
three parameters: the colors $i$ and $j$, as well as a parity bit $b
\in \set{0,1}$. The parity bit is $0$ if and only if $G$ has an even
number of ancestor subfactors. We use the triple $(i,j,b)$ for the
color name.

We will show that this new ancestor guide does not conflict with any
other ancestor guide with the same color. Each new ancestor guide is
contained in $G$ and possibly its parent subfactor $H$, by assumption
on $m_1 \le 1$.  Inside the subfactor $G$ there is at most one
ancestor guide of each color, so there are no collisions inside
$G$. One could imagine, though, that a new ancestor guide $\pi$ with
color $(i,j,b)$ collides inside the parent subfactor $H$ with some
other ancestor guide $\pi'$ of the same color.  Since the colors of
$\pi$ and $\pi'$ agree on the parity bit $b$, we conclude that the
guide $\pi'$ cannot originate in $H$, which has a different parity
than $G$. Therefore, $\pi'$ must originate in some other child
subfactor of $H$, call it $G'$, that had been previously added to
$\Gg$. Since the color of $\pi'$ is also $(i,j,b)$, we conclude that
$j$ was the color of a far transit guide in $G'$. This is impossible,
since a far transit guide in $G'$ or $G$ must originate not in $H$,
but in an ancestor of $H$, and therefore there would be a collision in
the root of $H$.

The ancestor guides created above are the only ancestor guides in our solution.
In the subfactors where they are used, the ancestor guides have the target in
the parent subfactor.  
\removed{
These ancestor guides are not used in the root subfactor $F_{\min}$,
 since the
root subfactor is not traversed by any transfer guides. 
Moreover, they are not used in the child subfactors of $F_{\min}$,
as the ancestor guides could exit $F$ in this case (and thus violate the requirement of the Main Lemma).
Thus $\cal G$ consisting of the factor $F_{\min}$ and all its child subfactors constitutes 
the induction base;
this choice supports the invariant condition: 
at most $n$ parent guides and at most $n$ far guides enter a subfactor $G$.
}

Let us count the number of colors used. We need $2n$ colors for the
transit guides, and $n$ colors for the starting guides. For the
ancestor guides, we need $4n^2$ colors.  Altogether, we need $4n^2+3n$
colors.
Note that all the guides satisfy the consistency condition required by
Lemma~\ref{l:main}.

\paragraph{\bf Nodes from $X_3$, i.e.~nodes   that have a witness with $m_1,m_2 \ge 2$.}
In this case, each node $x \in X_3$ has a witness $(y_1,y_2)$ such that
the path from $y_1$ to $x$, as well as the path from $x$ to $y_2$,
passes through at least two internal border edges.  This case is the
only one where we use the forward Ramseyan split. 

Consider a source $x$ with a witness $(y_1,y_2)$.  The internal border
edges naturally split $\word_t(y_1, x)$ and $\word_t(x, y_2)$ into $m_1
{+} 1$ and $m_2 {+} 1$ words, respectively:
\[
\begin{array}{rcl}
\word_t(y_1, x) & = & v_0 {\cdot} v_1 {\cdot} \ldots {\cdot} v_{m_1} \\
\word_t(x, y_2) & = & w_0 {\cdot} w_1 {\cdot} \ldots {\cdot} w_{m_2}.
\end{array}
\]
The first letter of $v_0$ is the label of $y_1$. The last letter of
$v_{m_1}$ and also the first letter of $w_0$ is the label of $x$. The
last letter of $w_{m_2}$ is the label of $y_2$ (recall that 
$y_1$ or $y_2$ might be  outside $F$).  Furthermore, each two consecutive internal border edges are not
only visible, but also neighbouring, by  Claim~\ref{claim:visible}.  Hence, 
as we have a forward Ramseyan split (cf.~\eqref{e:forwardRams}), the
values $\alpha(\word_t(y_1, x))$ and $\alpha(\word_t(x, y_2))$ are determined by
the first two parts and the last part:
\begin{equation}
\label{eq:forwApplied}
\begin{array}{lrcl}
\mathrm{(i)}  & 
\alpha(\word_t(y_1, x)) & = & \alpha(v_0) {\cdot} \alpha(v_1) {\cdot} \alpha(v_{m_1})\\
\mathrm{(ii)} &
\alpha(\word_t(x, y_2)) & = & \alpha(w_0) {\cdot} \alpha(w_1) {\cdot} \alpha(w_{m_2}).
\end{array}
\end{equation}
Let us fix six values $s_1,\ldots,s_6 \in S$. By splitting the set $X_3$
into at most $\size{S}^6$ parts, each requiring a separate
guidance system, we can assume that each $x \in X_3$ has a witness where
\begin{align*}
  \begin{array}{lll}
    s_1=\alpha(v_0)\quad&  s_2=\alpha(v_1)\quad& s_3=\alpha(v_{m_1}) \\
s_4=\alpha(w_0) & s_5=\alpha(w_1)& s_6=\alpha(w_{m_2}) .
  \end{array}
\end{align*}
We will only consider witnesses that satisfy the assumptions above.


We now proceed to create the guidance system. As in the case $m_1 \le
1$, the guidance system will be defined for a family $\Gg$ of
subfactors that is closed under ancestors.  The guidance system will
use at most $3$ colors, and will have the following additional
invariant property: if $e$ is an edge that connects a subfactor $G$ with a child
subfactor $H$, then at most two guides pass through
$e$. Furthermore, if exactly two guides pass through $e$, then one
of the guides has its target in $H$.

The construction is by induction on the number of subfactors in
$\Gg$. The induction base when $\Gg$ has no subfactors is
obvious. Below we show how to modify a guidance system $\Pi_\Gg$ for $\Gg$
when adding a new subfactor $G$.

Consider the (at most two) guides of $\Pi_\Gg$ that pass through
the edge connecting $G$ to $\Pi_\Gg$. As in the case $m_1 \le 1$, we
use the term \emph{transit guide} for the guides of $\Pi_\Gg$ that
enter $G$ through its root and exit through one of its external leaf
edges. By the invariant assumption, there is at most one transit guide.

We now define a guidance system $\Pi$ for the nodes in $G$, which we will next combine with $\Pi_\Gg$.
\begin{claim}
	There is a one-color guidance system $\Pi$ defining a witness function for all nodes in $G \cap X_3$.
\end{claim}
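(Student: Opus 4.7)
The plan is to partition $G \cap X_3$ by the leaf edge of $G$ through which each source's path to an admissible second witness exits, and to build one guide per leaf, all sharing a single color.

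The crucial ingredient is the forward Ramseyan property combined with $m_1, m_2 \geq 2$: by~\eqref{eq:forwApplied} the semigroup images $\alpha(\word_t(y_1, x))$ and $\alpha(\word_t(x, y_2))$ of any admissible witness pair for any source in $G \cap X_3$ collapse to the fixed products $s_1 s_2 s_3$ and $s_4 s_5 s_6$. Consequently the acceptance condition $(x, y_1, y_2) \in \triples(t)$ does not depend on the specific $y_1, y_2$ chosen outside $G$, only on the boundary words $v_0$, $v_{m_1}$, $w_0$, $w_{m_2}$ realising the fixed images. In particular, since $s_3 = \alpha(v_{m_1})$ and $s_4 = \alpha(w_0)$ depend only on the intrinsic portion of the path lying inside the subfactor $G$, every source $x \in G \cap X_3$ in the current part agrees with every other on these two boundary images.

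For each leaf edge $\ell$ of $G$, I would pick one source $x_\ell \in G \cap X_3$ whose path to an admissible second witness exits through $\ell$, fix one of its witness pairs $(y_1^{(\ell)}, y_2^{(\ell)}) \in E$, and form a guide with target $y_2^{(\ell)}$ whose sources are the nodes of $G \cap X_3$ canonically assigned to $\ell$. The Ramseyan collapse guarantees that $(y_1^{(\ell)}, y_2^{(\ell)})$ is simultaneously a witness pair for every source in this group: they all share the fixed $s_3, s_4$, and the arrangement $y_1^{(\ell)} < x < y_2^{(\ell)}$ holds since all sources in the group lie on the common root-to-$\ell$ spine of $G$. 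This gives consistency in the sense required by the Main Lemma.

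The main obstacle I anticipate is arranging the leaf guides to have pairwise disjoint supports, so that the whole collection fits into a single color. I would resolve this by a canonical leaf assignment: each source $x \in G \cap X_3$ is assigned to the leftmost leaf of $G$ that is reachable by some admissible second witness of $x$. With this choice, sources attached to two distinct leaves lie in disjoint subtrees of $G$ hanging below their lowest common branching point, while the outside-$G$ portions of the respective guides descend through distinct leaves into disjoint subtrees of $t$. Hence the leaf guides are pairwise non-conflicting and may all be given the same color, yielding the desired one-color guidance system.
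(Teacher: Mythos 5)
Your grouping by exit leaf and the use of the Ramseyan collapse to argue that all sources attached to one leaf share a witness pair is sound, and it is close in spirit to the paper's construction (choosing the lexicographically first admissible second witness effectively sends each source below the leftmost leaf of $G$ under which it has an admissible witness). The genuine gap is in the one-color step. Your justification --- ``sources attached to two distinct leaves lie in disjoint subtrees of $G$ hanging below their lowest common branching point'' --- is false: a source assigned to a leaf edge $\ell$ is only constrained to be an ancestor of the leaf node of $\ell$, so it may perfectly well be an ancestor of the branching point $b=\gca$ of the two leaves. Moreover, admissibility is not monotone along the spine: whether some witness below $\ell$ is admissible for $x$ requires, among other things, $\alpha(\word_t(x,z))=s_4$ for the leaf node $z$ of $\ell$, and this condition can hold for one node of the spine while failing for its ancestors or descendants. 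Hence one can have two sources $x < x'$, both ancestors of $b$, such that $x$ has admissible witnesses only below the left leaf $\ell$ and $x'$ only below the right leaf $\ell'$. Your leftmost-leaf rule then assigns them to different leaves, the guide for $\ell$ (whose sources include $x$) passes through $x'$ and through $b$ on its way down, the guide for $\ell'$ starts at $x'$ and also passes through $b$, so the two supports intersect and a single color does not suffice. The canonical assignment does nothing to exclude this configuration, because the two sources simply have disjoint sets of reachable leaves.

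This cross-leaf, nested-source configuration is exactly the delicate point, and it is what the paper's proof is organized around: it fixes for every source a single canonical witness \emph{node} (the lexicographically first admissible one), observes that the chosen targets form an antichain while the sources sharing a target form a chain, and then invokes a non-crossing property of the lexicographic choice to fit all guides into one color. Some per-node canonical choice together with an argument of this non-crossing kind is the ingredient your proposal is missing; the purely leaf-based grouping plus the ``disjoint subtrees'' assertion does not deliver it, so as written the claimed guidance system need not be one-color.
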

\begin{proof}
 For each node $x \in G \cap X_3$, choose the lexicographically first witness $y_2 > x$ that satisfies the assumptions on the six images in the semigroup, call it $y_x$. Let $Y$ be all these witnesses $y_x$, for $x \in G \cap X_3$; this set is an antichain with respect to the descendant relation. For each $y \in Y$, let $X_y$ be the chain of 	nodes $x$ which are witnessed by $y$. By the lexicographic assumption, if $y,y' \in Y$ are such that $y$ is lexicographically before $y'$, then no element from $X_y$ has an ancestor in $X_{y'}$. Consequently, if we define $\pi_y$ to be the guide that connects all $X_y$ to $y$, then $\Pi=\set{\pi_y}_{y \in Y}$ is a one-color guidance system for all nodes in $G \cap X_3$.
\end{proof}

The guides of $\Pi$ we call \emph{starting guides} as usual.

We now need to combine $\Pi$ and $\Pi_\Gg$.
If we simply combine $\Pi_\Gg$ and $\Pi$, we might end up with a starting guide going through and external edge of $G$  that is already traversed by two transit guides. To avoid this problem, we need to do an optimisation
relying on a simple observation formulated in the claim below. 

A descendant guide $\pi$ is called \emph{live} in a subfactor $G$ if
$\pi$ passes through $G$, and its target is \emph{not} in a child subfactor 
of $G$ (i.e., the target is in a proper descendant of some child of $G$,
or outside $F$). The idea is that the target
of $\pi$ satisfies the assumption $m_2 \ge 2$ from the 'point of view' of 
nodes in $G$.  Note that guides live in $G$ may be either transit or starting in $G$.

\begin{claim}
  Suppose that two consistent descendant guides $\pi$ and $\pi'$ are live in 
  a subfactor $G$ and exit $G$ through the same edge.
  Suppose also that at least one of them is starting in $G$. 
  Then all source nodes of one of $\pi$, $\pi'$ can be moved to the other.
\end{claim}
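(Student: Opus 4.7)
The plan is to show that, by the forward Ramseyan collapse, the two targets $y$ and $y'$ of $\pi,\pi'$ become semigroup-indistinguishable from any node above the common exit edge $e$, and then to transfer all sources of one guide onto the other along the chain through $e$. Write $\tilde y_1, \tilde y_1'$ for the common first witnesses of $\pi, \pi'$ granted by consistency, so that $(\tilde y_1, y), (\tilde y_1', y') \in E$ and each of them constitutes a witness pair for every source of its guide. By Claim~\ref{claim:visible} the consecutive internal border edges on any source-to-target path are pairwise neighbours with respect to $\sigma$, and the forward Ramseyan identity~\eqref{e:forwardRams} therefore collapses the middle $w_i$-factors in~\eqref{eq:forwApplied}(ii) and yields
\[
\alpha(\word(e,y)) \;=\; s_5 \cdot s_6 \;=\; \alpha(\word(e,y')).
\]
Consequently, for every node $x$ above $e$, we have $\alpha(\word(x,y)) = \alpha(\word(x,e)) \cdot s_5 s_6 = \alpha(\word(x,y'))$, which is precisely the value $s_4 s_5 s_6$ required by the $X_3$-partition.

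Because $e$ is a leaf edge of $G$ and every source of either guide routes through $e$, all such sources, and in turn all of their ancestors, lie on the unique root-to-$e$ chain. In particular $\tilde y_1$ and $\tilde y_1'$ are $\leq$-comparable. WLOG $\tilde y_1 \leq \tilde y_1'$; I propose to replace $\pi$ and $\pi'$ by a single guide whose source set is their union, whose target is $y$, and whose declared witness pair is $(\tilde y_1, y)$ (so membership in $E$ is inherited from $\pi$).

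To verify that the new pair is witness-valid for every transferred source $x'$, I apply the Ramseyan collapse on the ancestor side via~\eqref{eq:forwApplied}(i). Since $\tilde y_1 \leq \tilde y_1'$, the path from $\tilde y_1$ to $x'$ crosses at least as many internal border edges as the path from $\tilde y_1'$ to $x'$ did, so there are still $\geq 2$ such edges, and Ramsey applies. Its factorisation $u_0 u_1 \cdots u_k$ satisfies $\alpha(u_0)=s_1$ and $\alpha(u_1)=s_2$, because the first two segments are determined entirely by $\tilde y_1$ and the chain below it and already carry these values for the sources of $\pi$; and $\alpha(u_k) = s_3$ because the last segment depends only on the deepest border ancestor of $x'$, which lies in $x'$'s subfactor, and coincides with the final segment of the original witness $(\tilde y_1', y')$ that respected the $X_3$-partition for $x'$. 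Combined with $\alpha(\word(x',y)) = s_4 s_5 s_6$ from the first paragraph, this yields $(x', \tilde y_1, y) \in \triples(t)$, so $(\tilde y_1, y)$ is a common witness pair for every merged source and the new guide is consistent. The opposite case $\tilde y_1' \leq \tilde y_1$ is handled symmetrically, moving the sources of $\pi$ onto $\pi'$.

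The main obstacle is precisely this last verification: exchanging $\tilde y_1'$ for the potentially much higher $\tilde y_1$ as a first witness of $x'$ changes the traversed path and the list of border edges crossed, and yet each of the three semigroup factors that survive the Ramseyan collapse must still reach its designated value. This is what forces both the chain structure of the previous paragraph (so that the initial two segments below $\tilde y_1$ coincide with those that witness the original sources of $\pi$) and the $X_3$-partition (so that the final segment from $x'$'s subfactor's root edge down to $x'$ is pinned to the value $s_3$ independently of the choice of first witness).
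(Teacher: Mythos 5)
Your overall strategy (use consistency to pin whole witness pairs, observe that everything relevant lies on the single root-to-$e$ chain, and then transfer sources by collapsing the segment factorization with the forward Ramseyan split) is the same spirit as the paper's argument, and your pivot on the first witnesses ($\tilde y_1 \leq \tilde y_1'$) does make the ancestor-side verification go through. But there is a genuine gap on the descendant side, in the displayed identity $\alpha(\word(e,y)) = s_5\cdot s_6 = \alpha(\word(e,y'))$. The collapse in \eqref{eq:forwApplied}(ii) is anchored at the \emph{first} internal border edge of the witness path of the guide's own sources; it gives $\alpha(\word(e,y)) = s_5 s_6$ only if $e$ is that first border edge, i.e.\ only if the guide is \emph{starting} in $G$ (so that no internal border edge separates its sources from $e$). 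The claim, however, must cover the case where one of the two guides is a live \emph{transit} guide whose sources lie above $G$ --- this mixed case is precisely the one in which the construction invokes the claim. For a transit guide, $e$ is some later border edge $f_j$ on its witness path, and forward Ramseyanness only yields $\alpha(\word(e,y)) = \alpha(w_j)\cdot s_6$, where $w_j$ is one of the uncontrolled middle segments (only the first, second and last segments are fixed to $s_4,s_5,s_6$ by the $X_3$-partition). Nothing pins $\alpha(w_j)$ to $s_5$: below $e$ the two guides' paths may exit the child subfactor through different edges, so the $s_5$-segment of the starting guide's witness is simply a different word. Consequently the asserted value $s_4 s_5 s_6$ for the transferred sources is unjustified exactly when the absorbing pair belongs to the transit guide. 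The follow-up step ``for every node $x$ above $e$, $\alpha(\word(x,y)) = \alpha(\word(x,e))\cdot s_5 s_6 = s_4 s_5 s_6$'' compounds this: $\alpha(\word(x,e)) = s_4$ holds only for sources lying in $G$ (for which $\word(x,e)$ is the first segment of their own witness), not for arbitrary nodes above $e$, so when your WLOG forces you to move the transit guide's sources you cannot use this formula either.

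By contrast, the paper fixes the direction of the transfer by the \emph{sources} (when both guides start in $G$, the guide with the topmost source absorbs; in the mixed case the starting guide's sources are moved to the transit guide's pair) and justifies the transfer through the ancestor-side collapse \eqref{eq:forwApplied}(i), where the first two segments below the surviving first witness and the last segment above the moved source are all pinned; it does not attempt your identity at $e$. A secondary point to watch: your verification quietly assumes that \emph{every} source of $\pi'$ carries the full six-value factorization with respect to $(\tilde y_1',y')$; for sources acquired in earlier merges, consistency only guarantees that $(\tilde y_1',y')$ is a witness pair, so the segment-by-segment values you rely on ($s_1,s_2$ below $\tilde y_1'$, $s_3$ above $x'$, $s_4$ down to $e$) have to be maintained as an explicit invariant of the construction rather than read off from the $X_3$-partition.
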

\begin{proof}
  Let $(y_1,y_2)$ be the witness pair corresponding to $\pi$ and let
  $(y'_1,y'_2)$ be the witness pair corresponding to $\pi'$ --  
  by consistency, not only the second witnesses $y_2$ and $y'_2$ 
  are determined by $\pi$ and $\pi'$, but the whole witness pairs.
  Note that the source nodes of a descendant guide in $G$ are all situated on one path
  from the root of $G$ to one of the leaves.
  If both $\pi$ and $\pi'$ are starting in $G$, assume wlog that $\pi$ has a
  source node that is an ancestor of all source nodes of $\pi'$; otherwise 
  one of the guides is starting in $G$, wlog assume it is $\pi'$.
   As only the case of $m_1, m_2 \geq 2$
   is considered, and the values $s_1, \ldots, s_6$ are fixed,
   due to equation~\eqref{eq:forwApplied}(i)
   the pair $(y_1, y_2)$ is a witness for all source nodes
   of $\pi'$ as well. 
   Thus, these nodes may be guided to $y_2$ instead of $y'_2$.
\end{proof}

We use the
term \emph{live transit guide} for the transit guides that are live in
$G$, and \emph{dead transit guide} for the other transit guides
(those that have their target in a child subfactor of $G$).

Consider an edge $e$ that connects $G$ with a child subfactor $H$.
Suppose first that $e$ is traversed by a starting guide and a live  transit guide. Using the claim we merge the starting guide with the transit one.
Therefore, we end up satisfying the invariant property: 
$e$ is passed by at most one live guide, and possibly by one dead transit guide.

We have thus completed the proof of the Main Lemma and thus also
of Theorem~\ref{thm:simplified-bounded-width}.

\removed{
Suppose first that there are two dead transit guides that traverse
this edge. By applying the claim to the parent subfactor of $G$, 
we can move all the source nodes of
one of these guides to the other, and delete one guide. Therefore, we
may assume without loss of generality that there is at most one
dead transit guide  that passes through $e$.

Now suppose that some live transit guide passes through $e$. If any
starting guide passes through $e$, then we can again use the claim,
applied to $G$, to
move all the source nodes of the starting guide to the live transit
guide, and then delete the starting guide. Therefore, we end up
satisfying the invariant property: $e$ is passed either exclusively by
live transit guides (at most $n$), or starting guides (at most $n$)
plus at most one dead transit guide.
}


\section{Applications}
In this section, we present two applications of our results. The first application is a class of XML documents for which emptiness of XPath is decidable. The second application is a proof that two-variable first-order logic is not captured by XPath, in the presence of two attribute values per node. 

\sredniparagraph{Satisfiability of XPath}
As we said in the introduction, our study on class automata is a first step in a  search for structural restrictions on data words and data trees which make XPath satisfiability decidable. 

One idea for a  structural restriction would be  a variant of bounded clique width, or tree width. 
Maybe bounded  clique or tree width are interesting restrictions, but they are not relevant in the study of class automata. This is because bounded clique width or tree width, when defined in the natural way for data trees,  guarantees  decidable satisfiability  for a  logic far more powerful than class automata:  MSO with navigation and equal data value predicates.

Here we provide a basic example of a  restriction on inputs that works for class automata but not for MSO. A data tree is called  \emph{bipartite} (bipartite refers to the data) if  its nodes can be split into two connected (by the child relation) sets $X,Y$  such that every class has at most one node in $X$ and at most one node in $Y$. 

Satisfiability of MSO, or even FO, with navigation and data equality predicates
is undecidable even for bipartite data words. For instance, a solution to the 
Post Correspondence Problem can be encoded in a bipartite data word using a FO formula. 

This coding, however, cannot be captured by class automata, which is implied by the following theorem.

\begin{theorem}\label{thm:emptiness-bipartite}
	On bipartite data trees, emptiness is decidable for class automata, and therefore also
	for  XPath. 
\end{theorem}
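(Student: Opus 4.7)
The plan is to reduce emptiness on bipartite data trees to non-emptiness of a regular tree language intersected with a Presburger constraint on node counts, which is decidable via a combination of Parikh's theorem for tree languages and decidability of Presburger arithmetic. The first step is a structural observation: since $X$ and $Y$ are both connected subgraphs of the tree and together partition its nodes, removing the boundary tree edges splits the tree into exactly two components, so there is exactly one boundary edge. Hence, up to swapping, $Y$ is the subtree rooted at some node $r_Y$ and $X$ is its complement. In particular every class has size at most two, and a size-two class consists of exactly one $X$-node paired with exactly one $Y$-node.

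Fix a class automaton $\aut$ with transducer $f$ and class condition $L$ recognized by a deterministic bottom-up tree automaton with states $Q$, transitions $\delta$, and final states $F$. Because every class has at most two nodes, acceptance of $s \otimes C$ depends only on the labels of $s$ and the positions of the at most two marked nodes in $C$. After fixing the bipartition, attach finite types to the nodes of $s$: for $y\in Y$ let $\tau(y) \in Q$ be the state computed at $r_Y$ when $y$ is the only mark, and for $x\in X$ let $\tau(x) \in Q^Q$ be the function sending a hypothetical state $q$ at $r_Y$ to the state computed at the root when $x$ is the only mark. Record also the zero-mark values $\tau_0 \in Q$ and $g_0 \in Q^Q$ obtained by marking nothing on the respective side. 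All these annotations are produced by a regular tree transducer whose state tracks values in $Q$ inside $Y$, values in $Q^Q$ inside $X$, and switches at $r_Y$. The class conditions then reduce to purely local checks: $\{x\}$ passes iff $\tau(x)(\tau_0)\in F$, $\{y\}$ passes iff $g_0(\tau(y)) \in F$, and a cross pair $\{x,y\}$ passes iff $\tau(x)(\tau(y)) \in F$.

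Next, build a regular tree language $L'$ over an alphabet recording, for every node, its input letter, work letter, side, type, a status (singleton or paired), and, for paired nodes, a guessed \emph{target type} $\theta\in Q$. The tree automaton for $L'$ verifies all local constraints: the transducer $f$ is respected, the side labels realise a valid bipartition with a unique $r_Y$, types are propagated consistently, every singleton passes its check, and every paired $X$-node $x$ with target $\theta$ satisfies $\tau(x)(\theta) \in F$. The only remaining, global, property is that the targets guessed on the $X$-side match the types of the paired nodes on the $Y$-side; this is the Presburger constraint
\[
\#\{x\in X : x \text{ paired with target } \theta\} \;=\; \#\{y\in Y : y \text{ paired and } \tau(y) = \theta\} \qquad \text{for every } \theta\in Q.
\]
Once the counts balance, any bijection within each type class yields a valid matching because all produced pairs are compatible by construction; conversely, any bipartite data tree accepted by $\aut$ admits a canonical annotation satisfying the balance.

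By Parikh's theorem for regular tree languages the Parikh image of $L'$ is semilinear and hence Presburger-definable, so its intersection with the matching constraint is Presburger-definable and non-emptiness is decidable; any witness yields a bipartite data tree accepted by $\aut$, which by Lemma~\ref{lem:closure} and the main result suffices for XPath as well. The main obstacle is the construction of types in the second paragraph: confirming that the $Q^Q$-valued types for $X$-nodes can be produced by a regular tree transducer and that all class-condition checks genuinely descend to local constraints on the annotated tree. After this is set up, the final combination of Parikh on trees and Presburger arithmetic is routine.
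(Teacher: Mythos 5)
Your proposal is correct and follows essentially the same route as the paper: exploit the unique border edge, annotate each node with automaton-state information relative to that edge (the paper nondeterministically guesses the border state for upper nodes where you compute $Q^Q$-valued context types and guess a target state), and reduce the per-class condition to a per-state cardinality balance checked via Parikh's theorem and semilinear/Presburger reasoning. Your explicit treatment of singleton classes is a minor refinement of the paper's ``assume every value appears exactly twice'' convention, not a different argument.
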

\begin{proof}
The key insight is that data trees which use each data value at most twice can be described using semilinear sets.
To avoid notational complications, assume that every data value appears \emph{exactly} twice in a bipartite tree.
	
	Consider a class automaton  $\Aa$ over input alphabet $\Sigma$. 
Suppose that the work alphabet is $\Gamma$, and the  transducer is $f$. 
Let  the class condition be a language over alphabet $\Gamma \times \set{0,1}$, recognized	
	by a deterministic bottom-up tree automaton $\Cc$, with states $Q$.
For technical convenience, assume that a run of $\Cc$ labels tree edges, instead of tree nodes, with states.
Thus all leaf dummy edges of an input tree $u$ over $\Gamma \times \set{0,1}$ are labeled with the initial 
state of $\Cc$, and the labeling of all other edges is uniquely determined by $u$.
Let $\Cc(u)$ denote the state that labels the root dummy edge of $u$.
A tree $u$ is accepted if $\Cc(u)$ is an accepting state.

By assumption, the nodes of a bipartite data tree $(t, \sim)$ are partitioned into two connected subsets,
and thus there is a single edge that splits the two subsets; call this edge the \emph{border edge}.
Such a tree $t$ with a distinguish edge may be modeled as $t \otimes \set{z,z'}$, where $z$ and $z'$
are the two nodes connected by the border edge.  
Further, $t$ may be split into two smaller trees, according to the partition:
the border edge is a dummy root edge for one of the trees, and a leaf dummy edge for the other one.
Call these two induced trees \emph{lower} and \emph{upper} tree, respectively.

For the lower tree of $t$, call it $t_l$, and a subset $X$ of nodes of $t_l$, 
it makes sense to write $\Cc(t_l \otimes X)$. For the upper one, call it $t_u$,
we will need a slightly different notation. Assume that the automaton $\Cc$ reads $t_u \otimes X$, for some $X$,
starting in the initial state in all dummy leaf edges of $t_u$ except for the border edge,
where the automaton starts in some chosen state $q$. If $\Cc$ accepts the tree $t_u \otimes X$ under this assumption,
we write $q \in \Cc^{-1}(t_u \otimes X)$.
In particular, observe that the automaton $\Cc$ accepts $t \otimes X$ if and only if
$\Cc(t_l \otimes X_l) \in \Cc^{-1}(t_u \otimes X_u)$, where $X = X_l \cup X_u$ is the induced partition of $X$.

Given a tree $t$ over $\Gamma$ with a single distinguished edge, say $t \otimes \set{z, z'}$,
let $\pi(t \otimes \set{z, z'})$ be the set of all trees $s$ over $Q \times \set{l,u}$ that satisfy the following conditions:
\begin{iteMize}{$\bullet$}
\item the set of nodes of $s$ is the same as the set of nodes of $t$,
\item if a node $x$ of $s$ is in $t_l$ then it is labeled by $(\Cc(t_l \otimes \set{x}), l)$,
\item if a node $x$ of $s$ is in $t_u$ then it is labeled by $(q, u)$, for some $q \in \Cc^{-1}(t_u \otimes \set{x})$.
\end{iteMize}
Using $\pi$, we define the relation $\sigma$ between trees over $\Gamma$ and trees over $Q \times \set{l,u}$
as follows: $(t, s) \in \sigma$ if $s \in \pi(t \otimes \set{z,z'})$ for some edge $(z,z')$ in $t$. 
\begin{claim}
The relation $\sigma$ is computable by a nondeterministic letter-to-letter transducer.
\end{claim}
Let $T_\Sigma$ denote the set of all trees over $\Sigma$.
As nondeterministic transducers preserve regular languages, we have:
\begin{claim} \label{c:reg}
Both $K = f(T_\Sigma)$ and $\sigma(K) = \set{s : (t,s) \in \sigma, \ t \in K}$ are effectively computable regular languages over $\Gamma$ and
$Q \times \set{l,u}$, respectively.
\end{claim}
\begin{claim} \label{c:iff}
A class automaton $\Aa$ accepts some bipartite data tree if and only if $\sigma(K)$ contains a tree
that satisfies, for every $q \in Q$, the following condition:
the number of nodes labeled by $(q,l)$ is the same as the number of nodes labeled by $(q,u)$.
\end{claim}

With the last two observations decidability follows immediately.
To decide emptiness of a given class automaton $\Aa$, 
compute the Parikh image of the language $\sigma(K)$, an effectively semi-linear set by Claim~\ref{c:reg}, 
intersect this set with the semi-linear condition of Claim~\ref{c:iff},
and ckeck for emptiness of the resulting semi-linear set.
\end{proof}

\sredniparagraph{Multiple attributes}
\label{sec:multiple}
Heretofore, we have studied data trees, which model XML documents where each node has one data value. In this section, and this one only, we consider the situation where  each node $x$ has $n$ data values. Formally, an $n$-data tree consists of a tree $t$ over the finite alphabet and functions $d_1,\ldots,d_n$ which map the tree nodes to data values. How does XPath deal with multiple data values? Instead of $y_1 \sim y_2$ and $y_1 \not \sim y_2$, we can use any formula of the form
\begin{eqnarray*}
	d_i(y_1) =  d_j(y_2) \qquad \mbox{ where }i,j \in \set{1,\ldots,n}
\end{eqnarray*}
or its negation (for inequality). For $n > 1$ we need more information than just the partitions of nodes into classes
of~$\sim_i$, for each $i \in \set{1,\ldots,n}$. An example is the property ``every node has the same data value on attributes $1$ and $2$''.

How do we extend class automata to read $n$-data trees? For one data value, the class condition is a language over the alphabet $\Gamma \times \set{0,1}$. For $n$ data values, the class condition is a language over the alphabet $\Gamma \times \set{0,1}^n$. An $n$-data tree  $(t,d_1,\ldots,d_n)$ is accepted if there is an output $s$ of the transducer on $t$ such that for every data value $d$, the tree
\begin{eqnarray*}
	s \otimes d_1^{-1}(d) \otimes \cdots \otimes d_n^{-1}(d)
\end{eqnarray*}
is accepted by the class condition. By the same technique as in the proof of Theorem~\ref{thm:main}, we can  prove that the automata capture XPath. 

A consequence  is that for $n  \ge 2$,  XPath does not capture two-variable first-order logic (unlike the case of $n=1$). This was an open question.
\begin{theorem}\label{cor:two-variable}
	The following (two-variable) property
	\begin{eqnarray*}
		\psi = \forall x \forall y \quad d_1(x)=d_1(y) \iff d_2(x)=d_2(y)
	\end{eqnarray*}
	cannot be defined by a boolean query of XPath.
\end{theorem}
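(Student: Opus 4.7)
The plan is to argue by contradiction via the class-automaton characterization and then obtain a combinatorial obstruction by counting perfect matchings. Since words are unary trees, it suffices to refute definability of $\psi$ over 2-data \emph{words}; by the two-attribute extension of Theorem~\ref{thm:main} outlined in Section~\ref{sec:multiple}, any boolean XPath query defining $\psi$ would be recognized by some class automaton $\Aa$ with work alphabet $\Gamma$, transducer $f$, and regular class condition $L \subseteq (\Gamma \times \set{0,1}^2)^*$. Set $k = |\Gamma|$.

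I would then focus on a restricted family of inputs: 2-data words of length $2n$ carrying a single input letter $a$, whose two attributes use disjoint data-value sets, and each of whose attributes partitions the $2n$ positions into a perfect matching. On such an input $\psi$ holds iff the two matchings coincide. The disjointness of the value sets ensures that every data value triggers a class check with nontrivial markers on exactly one of the two attribute coordinates, so acceptance factors cleanly: $\Aa$ accepts iff there exists $s \in f(a^{2n})$ such that every block $C$ of the first matching $P$ satisfies $s \otimes \mathbf{1}_C \otimes \mathbf{0} \in L$ and every block $C'$ of the second matching $Q$ satisfies $s \otimes \mathbf{0} \otimes \mathbf{1}_{C'} \in L$. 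Writing $\mathcal{N}_s^i$ for the set of perfect matchings whose every block passes the $i$-th class check against $s$, the set of accepted pairs $(P,Q)$ is exactly the union of rectangles $\bigcup_{s \in f(a^{2n})} \mathcal{N}_s^1 \times \mathcal{N}_s^2$.

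The heart of the proof --- and the main obstacle --- is the rectangle-versus-diagonal analysis. For $\Aa$ to recognize $\psi$, this union must coincide with the diagonal $\set{(P, P) : P \text{ a perfect matching}}$. Any nonempty rectangle $\mathcal{N}_s^1 \times \mathcal{N}_s^2$ contained in the diagonal forces $\mathcal{N}_s^1 = \mathcal{N}_s^2 = \set{P_s}$ for a single matching $P_s$: given $P, P' \in \mathcal{N}_s^1$ and $Q \in \mathcal{N}_s^2$, both $(P, Q)$ and $(P', Q)$ lie on the diagonal, forcing $P = Q = P'$. Consequently the number of matchings $P$ with $(P,P)$ accepted is bounded by $|f(a^{2n})| \le k^{2n}$. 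The final ingredient is a counting estimate: the total number of perfect matchings on $2n$ elements is $(2n-1)!! = n^{\Theta(n)}$, which exceeds $k^{2n}$ once $n$ is larger than a constant depending only on $k$. Hence for such $n$ some diagonal pair $(P,P)$ must be rejected by $\Aa$ in spite of satisfying $\psi$, delivering the desired contradiction.
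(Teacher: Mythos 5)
Your proof is correct, and while it follows the paper's reduction step --- restrict to words and invoke the two-attribute extension of the class-automaton characterization --- its combinatorial core is genuinely different. The paper fixes a \emph{single} satisfying word of length $2n$ (both attributes induce the classes $\{i,n+i\}$, with values $i$ and $-i$), takes a morphism $\alpha$ into a monoid $S$ recognizing the class condition, pigeonholes for $n > |S|^2$ to find $i<j$ whose class strings have equal $\alpha$-images on both halves, and then swaps the first-attribute values at positions $i$ and $j$: the resulting word violates $\psi$ yet inherits an accepting run, giving an explicit fooling pair. You instead range over a whole family of inputs --- all pairs $(P,Q)$ of perfect matchings realized with disjoint value sets --- observe that acceptance factors into a union of at most $|\Gamma|^{2n}$ rectangles $\mathcal{N}_s^1 \times \mathcal{N}_s^2$ indexed by transducer outputs, note that any nonempty rectangle inside the diagonal is a single point, and beat the bound $|\Gamma|^{2n}$ by the $(2n-1)!!$ count of matchings. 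What each buys: your argument never touches the class condition at all --- it uses only that a letter-to-letter transducer has at most $|\Gamma|^{2n}$ outputs on $a^{2n}$ --- so it actually excludes class automata with arbitrary, even non-regular, class conditions, but it is non-constructive (it shows \emph{some} satisfying input must be rejected); the paper's swap argument is explicit, producing a violating word that the automaton accepts, and needs only $n>|S|^2$. Both exploit the same structural weakness, namely that the only coupling between the two attributes is the shared guessed work string $s$; one minor point to make explicit in your write-up is that acceptance of an input in your family depends only on the pair of partitions $(P,Q)$ (and, for values absent from an attribute, contributes checks of the form $s\otimes \mathbf{1}_C\otimes\mathbf{0}$ exactly as you state), which is indeed guaranteed by your disjoint-value-set convention.
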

\begin{proof}
Towards a contradiction, suppose that $\psi$ is recognized by any class automaton in the generalized version for 
2 data values.  The document that will confuse the automaton will be a word.
	Consider the document (over a one letter alphabet) with positions $1,\ldots,2n$, where the data values are defined
	\begin{eqnarray*}
		d_1(i)=d_1(n+i)=i &\mbox{ for $i \in \set{1,\ldots,n}$}\\
		d_2(i)=d_2(n+i)=-i& \mbox{ for $i \in \set{1,\ldots,n}$}
	\end{eqnarray*}
	Since the above document satisfies $\psi$, it should also be accepted by the automaton. Let the work alphabet of the automaton be $\Gamma$, and let $a_1 \cdots a_{2n} \in \Gamma^{*}$ be the word produced by the automaton in the accepting run.  For a data value $d$, we use the term \emph{class string} of $d$ for the word 
	\begin{eqnarray*}
		a_1 \cdots a_{2n} \otimes d_1^{-1}(d) \otimes d_2^{-1}(d).
	\end{eqnarray*}
	By definition of the way class automata accept documents, each class string should belong to the class condition. 
	Consider a number $i \in \set{1,\ldots,n}$. The class string of $i$ is
\begin{eqnarray*}
u_i =&	a_1 \cdots a_{2n} \otimes \set{i,n+i} \otimes \emptyset
\end{eqnarray*}
Suppose that
\begin{eqnarray*}
	\alpha : (\Gamma \times \set{0,1} \times \set{0,1})^* \to S
\end{eqnarray*} is a morphism recognizing the class condition.  
If $n$ is greater than $|S|^2$, then we can find two data values $i < j \in \set{1,\ldots,n}$ such that
\begin{eqnarray*}
	\alpha(u_i | \set{1,\ldots,n}) &=&\alpha(u_j | \set{1,\ldots,n})\\
	\alpha(u_i | \set{n+1,\ldots,2n}) &=& \alpha(u_j | \set{n+1,\ldots,2n}) .
\end{eqnarray*}
Consider a new document  obtained from the previous one by swapping the first, but not second, data value in the positions $i$ and $j$. This new document violates the property $\psi$. To get the contradiction,  we will construct an accepting run of the class automaton for this new document. The output of the transducer is the same $a_1 \cdots a_{2n}$. The class strings are the same for data values other than $i$ and $j$, so they are also accepted. For the class strings of $i$ and $j$, the images under $\alpha$ are the same by assumption on $i$ and $j$, and hence they are also accepted.
\end{proof}


\section*{Acknowledgment}
The authors would like to thank the anonymous reviewers 
for their valuable comments.

\bibliographystyle{plain}
\bibliography{IEEEabrv,citat}

\end{document}